\newtheorem{theorem}{Theorem}[section]
\newtheorem{lemma}[theorem]{Lemma}
\newtheorem{defn}{Definition}
\newtheorem{assumption}{Assumption}
\newtheorem{rem}{Remark}
\newcommand{\R}{\mathbb{R}}
\newcommand{\TR}{{\mathcal T}_R}
\newlength{\kaka}
\newcommand{\ahref}[2]{}
\newcommand{\Zsup}{^{\text{\tiny o}}}
\newcommand{\Zsub}{_{\text{\tiny o}}}
\newcommand{\zsub}{_{\!\text{\tiny o}}}
\newcommand{\ff}{^{\text{\tiny f}}}
\newcommand{\beq}{\begin{equation}}
\newcommand{\eeq}{\end{equation}}
\newcommand{\lb}{\label}
\newcommand{\bea}{\begin{eqnarray}}
\newcommand{\eea}{\end{eqnarray}}
\newcommand{\bxr}{\begin{array}}
\newcommand{\exr}{\end{array}}
\newcommand\exs{\hspace*{0.4mm}}
\newcommand\nxs{\hspace*{-0.2mm}}
\newenvironment{proof57}{\paragraph*{Proof of Theorem 5.7.}}{\hfill$\blacksquare$}
\newenvironment{proof59}{\paragraph*{Proof of Theorem 5.9.}}{\hfill$\blacksquare$}
\newcommand{\norms}[1]{\parallel\! #1 \!\parallel}
\newcommand{\bSig} {\boldsymbol{\Sigma}}
\newcommand{\bW} {\boldsymbol{W}}
\newcommand{\bC} {\boldsymbol{C}}
\newcommand{\bK} {\boldsymbol{K}}
\newcommand{\bV} {\boldsymbol{V}}
\newcommand{\bR} {\boldsymbol{\sf R}}
\newcommand{\bn} {\boldsymbol{n}}
\newcommand{\ba} {\boldsymbol{a}}
\newcommand{\bb} {\boldsymbol{b}}
\newcommand{\bx} {\boldsymbol{x}}
\newcommand{\by} {\boldsymbol{y}}
\newcommand{\be} {\boldsymbol{e}}
\newcommand{\bh} {\boldsymbol{h}}
\newcommand{\bz} {\boldsymbol{z}}
\newcommand{\bg} {{\boldsymbol{g}}}
\newcommand{\bq} {{\boldsymbol{q}}}
\newcommand{\bfT} {\boldsymbol{T}}
\newcommand{\bd} {\boldsymbol{d}}
\newcommand{\bI} {\boldsymbol{I}}
\newcommand{\pff}{\boldsymbol{u}^{\textit{i}}}
\newcommand{\bfpsi}{\boldsymbol{\psi}}
\newcommand{\bfPsi}{\boldsymbol{\Psi}}
\newcommand{\Tcal}{{\sf T}}
\newcommand{\sip} {\!\cdot\!}
\newcommand{\OO}{\Omega}
\newcommand{\OOd}{\Omega_{\bd}}
\newcommand{\herg}{\mathcal{H}}
\newcommand{\Bo}{\mathcal{B}_2}
\newcommand{\bzero}{\boldsymbol{0}}
\newcommand{\bu} {\boldsymbol{u}}
\newcommand{\bt} {{\boldsymbol{t}}}
\newcommand{\bv} {\boldsymbol{v}}
\newcommand{\bxi} {\boldsymbol{\xi}}
\newcommand{\bxio} {\bxi\Zsup}
\newcommand{\dbv} {\llbracket {\bv}\rrbracket}
\newcommand{\btu} {\text{\bf{u}}}
\newcommand{\bw} {\boldsymbol{w}}
\newcommand{\bPhi}{\boldsymbol{\Phi}}
\renewcommand{\Bo}{{B_R}}
\newcommand{\dualBR}[2]{\left< #1, #2 \right>_{\partial B_R}}
\newcommand{\dualGA}[2]{\left< #1, #2 \right>_{\Gamma}}
\newcommand{\bphi}{{\boldsymbol{\varphi}}}
\newcommand{\bpsi}{{\boldsymbol{\psi}}}
\begin{document}

\begin{center}
\title[ ]{\textcolor{black}{Generalized linear sampling method for elastic-wave sensing of heterogeneous fractures}}

\author{Fatemeh Pourahmadian$^1$, Bojan B. Guzina$^1$ and Houssem Haddar$^2$}
\address{$^1$ Department of Civil, Environmental and Geo-Engineering, University of Minnesota, Minneapolis, USA}
\address{$^2$ INRIA Saclay Ile de France and Ecole Polytechnique (CMAP) Route de Saclay, F-91128, Palaiseau, France}
\ead{guzin001@umn.edu}

\end{center} 

\begin{abstract}
A theoretical foundation is developed for active seismic reconstruction of fractures endowed with spatially-varying interfacial condition (e.g.~partially-closed fractures, hydraulic fractures). The proposed indicator functional carries a superior localization property with no significant sensitivity to the fracture's contact condition, measurement errors, and illumination frequency. This is accomplished through the paradigm of the $F_\sharp$-factorization technique and the recently developed Generalized Linear Sampling Method (GLSM) applied to elastodynamics. The direct scattering problem is formulated in the frequency domain where the fracture surface is illuminated by a set of incident plane waves, while monitoring the induced scattered field in the form of (elastic) far-field patterns. The analysis of the well-posedness of the forward problem leads to an admissibility condition on the fracture's (linearized) contact parameters. This in turn contributes toward establishing the applicability of the $F_\sharp$-factorization method, and consequently aids the formulation of a convex GLSM cost functional whose minimizer can be computed without iterations. Such minimizer is then used to construct a robust fracture indicator function, whose performance is illustrated through a set of numerical experiments. For completeness, the results of the GLSM reconstruction are compared to those obtained by the classical linear sampling method (LSM). 
\end{abstract}

\noindent {\bf Keywords}: Generalized linear sampling method, inverse scattering, seismic imaging, elastic waves, fractures, specific stiffness, hydraulic fractures. 

\section{Introduction} \label{sec1}

Most recent advancements in the waveform tomography of discontinuity surfaces reside in the context of acoustic and electromagnetic inverse scattering. Spurred by the early study in~\cite{Kress1995}, such developments include: i) the Factorization Method (FM)~\cite{Bouk2013,Cham2014}; ii) the Linear Sampling Method (LSM)~\cite{Has2013, Fiora2003} and MUSIC algorithms~\cite{Park2009, Park2015(2)}; iii) the subspace migration technique~\cite{Park2015}, and iv) the method of Topological Sensitivity (TS)~\cite{Guz2004, Bonnet2011, Park2013}. In general, the LSM and FM techniques are applicable to a wide class of interfacial conditions and inherently carry a superior localization property -- potentially leading to high-fidelity geometric reconstruction. These methods, however, may suffer from the sensitivity to measurement uncertainties.  In contrast the TS approach, that is inherently robust to noisy data, fails to adequately recover the shape of a scatterer at long illuminating wavelengths. The subspace migration methods offer another alternative for a high-fidelity reconstruction, even from partial-aperture data, while requiring some a priori knowledge about the geometry of a discontinuity surface. Among the aforementioned methods, the LSM has been applied to the problem of elastic-wave imaging of fractures with homogeneous (traction-free) boundary condition~\cite{Bour2013}, while the TS approach was recently extended to cater for qualitative elastodynamic sensing of fractures endowed with a more general class of contact laws~\cite{Bellis2013, Fatemeh2015}. In geophysics, major strides~\cite{Willis2006,Zheng2013,Minato2013,Minato2014,Fang2014} have been made toward a robust reconstruction of fractures via seismic waveform tomography. So far the proposed methods, often reliant upon a rudimentary parameterization of the fracture geometry (e.g. planar fractures) and nonlinear minimization, entail a number of impediments including: i) high computational cost; ii) sensitivity to the assumed parametrization; iii) computational instabilities~\cite{Minato2014}, and iv) major restrictions in terms of the seismic sensing configuration~\cite{Fang2014,Minato2013}, namely the location of sources and receivers relative to the (planar) fracture surface. One recent study aiming to mitigate such limitations can be found in~\cite{Zheng2013} that makes use of focused Gaussian beams emitted from the surface source/receiver arrays to non-iteratively assess the orientation, spacing, and compliance of systems of parallel planar fractures.

This work aims to develop a non-iterative, full-waveform approach to 3D elastic-wave imaging of fractures with non-trivial (generally heterogeneous and dissipative) interfacial condition. To this end, the sought indicator map -- targeting \emph{geometric} fracture reconstruction -- is preferably (i) agnostic with respect to the fracture's interfacial condition, (ii) robust against measurement errors, and (iii) flexible in terms of sensing parameters, e.g. the illumination frequency. This is pursued by drawing from the theories of inverse scattering~\cite{Fiora2008, Col1992} and, in particular, by building upon the Factorization Method~\cite{Kirsch2008, Bouk2013} and the recently developed Generalized Linear Sampling Method (GLSM)~\cite{Audibert2015, Audibert2014} which completes the theoretical foundation of its LSM~predecessor. First, the inverse problem is formulated in the frequency domain where the illuminating wavefield is described by the elastic Herglotz wave function~\cite{Dassios1995} with its inherent compressional (P) and shear (S) wave components. On characterizing the induced scattered wavefield in terms of its far-field P- and S-wave patterns~\cite{Martin1993}, the far-field operator~$F$ is then defined as a map from the Herglotz densities to the far-field measurements. In this setting, the GLSM indicator functional is introduced as in~\cite{Audibert2014} on the basis of (i) a custom factorization of the far-field operator, and (b) a sequence of approximate solutions to the LSM integral equation, seeking Herglotz densities whose far-field pattern matches that of a point-load solution radiating from the sampling point. The latter sequence is essentially a set of penalized least-squares misfit functionals -- aimed at producing nearby solutions to the LSM equation, where the penalty term is constructed using a factorization component of~$F$. Minimizing this class of cost functionals in their most general form requires an optimization procedure~\cite{Audibert2014}. Thanks to the premise of a linear contact law, however, this study takes advantage of the so-called $F_\sharp$-factorization~\cite{Kirsch2008, Bouk2013} of the far-field operator to formulate the penalty term. This results in a sequence of \emph{convex} GLSM cost functionals whose minimizers can be computed without iterations. 


\section{Problem statement}\label{PS}
With reference to Fig.~\ref{fig1}(a), consider the elastic-wave sensing of a partially closed fracture $\Gamma \subset \mathbb{R}^3$ embedded in a homogeneous, isotropic, elastic solid endowed with mass density~$\rho$ and Lam\'{e} parameters $\mu$ and~$\lambda$. The fracture is characterized by a heterogeneous contact condition synthesizing the spatially-varying nature of its rough and/or multi-phase interface. Next, let $\Omega$ denote the unit sphere centered at the origin. For a given triplet of vectors $\bd\in\Omega$ and~$\bq_p,\bq_s\!\in\mathbb{R}^3$ such that $\bq_p\!\parallel\bd$ and~$\bq_s\!\perp\!\bd$, the obstacle is illuminated by a combination of compressional and shear plane waves 
\beq\lb{plwa}
\bu\ff(\bxi) ~=~ \bq_p \exs e^{\textrm{i} k_p \bxi \cdot \bd} \:+\: \bq_s \exs e^{\textrm{i} k_s \bxi \cdot \bd}
\eeq
propagating in direction~$\bd$, where $k_p$ and $k_s=k_p\sqrt{(\lambda\!+\!2\mu)/\mu}$ denote the respective wave numbers. The interaction of $\bu\ff$ with $\Gamma$ gives rise to the scattered field $\bv\in H^1_{\mathrm{loc}}(\R^3\backslash\Gamma)^3$, solving 
\beq\lb{GE}
\begin{aligned}
&\nabla \sip (\bC \colon \! \nabla \bv) \,+\, \rho \exs \omega^2\bv ~=~ \bzero \quad &\text{in}& \quad {\R^3}\backslash\Gamma, \\*[1mm]
&\bn \cdot \bC \exs \colon \!  \nabla  \bv~=~ \mathcal{L}(\dbv)  \,-\, \bt\ff  \quad &\text{on}& \quad \Gamma,
\end{aligned}      
\eeq
where $\omega^2=k_s^2 \mu/\rho$ is the frequency of excitation; $\dbv=[\bv^+\!-\bv^-]$ is the jump in~$\bv$ across~$\Gamma$, hereon referred to as the fracture opening displacement \textcolor{black}{(FOD)}; 
\beq\label{bC}
\bC \:=\: \lambda\,\bI_2\!\otimes\bI_2 \:+\: 2\mu\,\bI_4 
\eeq
is the fourth-order elasticity tensor; $\bI_m \,(m\!=\!2,4)$ denotes the $m$th-order symmetric identity tensor; \mbox{$\bt\ff = \bn \cdot \bC \colon \! \nabla \bu\ff$} is the free-field traction vector; $\bn = \bn^-$ is the unit normal on~$\Gamma$, and $\mathcal{L}: H^{1/2}(\Gamma)^3\to H^{-1/2}(\Gamma)^3$ represents a heterogeneous bijective contact law over the fracture surface, physically relating the displacement jump to surface traction. In many practical situations, the fracture's contact law is \emph{linearized} about a dynamic equilibrium state as 
\beq 
\label{contact} \mathcal{L}(\dbv) \:=\: \bK(\bxi) \dbv, \qquad \bxi \in \Gamma, 
\eeq
where $\bK=\bK(\bxi)$ is a \emph{symmetric} (due to reciprocity considerations) and possibly \emph{complex-valued} matrix of specific stiffness coefficients. 

\begin{figure}[tp]
\center\includegraphics[width=0.94\linewidth]{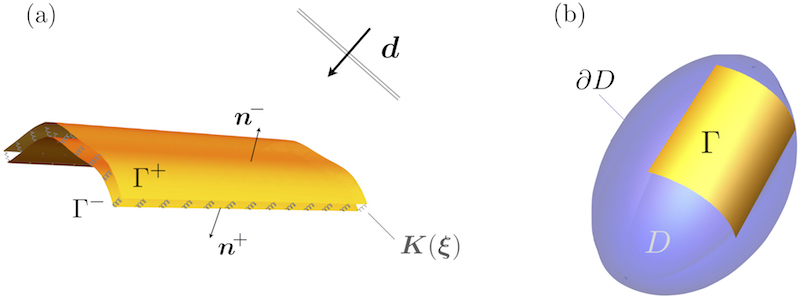} \vspace*{0mm} 
\caption{Direct scattering problem. The fracture boundary $\Gamma$ is arbitrarily extended to a piecewise smooth, simply connected, closed surface $\partial D$ of a bounded domain~$D$.} \lb{fig1}
\end{figure}

\begin{rem}
\textcolor{black}{In what follows, the analysis is based on the linear contact condition~(\ref{contact}) over~$\Gamma$. Under the premise of bijectivity, most of the ensuing developments (except for the $F_\sharp$ factorization method) can be adapted to handle nonlinear contact laws; such extension, however, is beyond the scope of this study.}
\end{rem}


The formulation of the direct scattering problem can now be completed by requiring that~$\bv$ satisfies the Kupradze radiation condition at infinity \cite{Kuprad1979}. On uniquely decomposing the scattered field into an irrotational part and a solenoidal part as $\bv = \bv^p + \bv^{s}$ \textcolor{black}{where 
\beq\label{vpvs}
\bv^p = \frac{1}{k_s^2\!-\!k_p^2}(\Delta+k_s^2)\bv, \qquad \bv^{s} = \frac{1}{k_p^2\!-\!k_s^2}(\Delta+k_p^2)\bv, 
\eeq}
the Kupradze condition can be stated as 

\beq\lb{KS}
\frac{\partial\bv^p}{\partial r} - \text{i} k_p \bv^p = o\big(r^{-1}\big) \quad \mbox{ and } \quad 
\frac{\partial\bv^s}{\partial r} - \text{i} k_s \bv^ s = o\big(r^{-1}\big) \qquad \text{as} ~~r:=|\bxi|\to\infty,
\eeq    
uniformly with respect to $\hat\bxi:=\bxi/r$. 

 \paragraph*{Dimensional platform.}
In what follows, all quantities are rendered \emph{dimensionless} by taking $\rho$, $\mu$, and ${\sf R}$ -- the characteristic size of a region sampled for fractures -- as the respective scales for mass density, elastic modulus, and length -- which amounts to setting $\rho = \mu = {\sf R} = 1$~\cite{Scaling2003}.

\paragraph*{Function spaces.}
To assist the ensuing analysis, the fracture surface $\Gamma$ is arbitrarily extended, as shown in Fig.~\ref{fig1}(b), to a piecewise smooth, simply connected, closed surface $\partial D$ of a bounded domain $D$ such that the normal vector $\bn$ to the fracture surface $\Gamma$ coincides with the outward normal vector to $\partial D$ -- likewise denoted by $\bn$. We also assume that $\Gamma$ is an open set (relative to $\partial D$) with positive surface measure.  
Following \cite{McLean2000}, we define
\beq\lb{funS}
\begin{aligned}
&H^{\pm \frac{1}{2}}(\Gamma) ~:=~\big\lbrace f\big|_\Gamma \colon \,\,\, f \in H^{\pm \frac{1}{2}}(\partial D) \big\rbrace, \\*[1 mm]
& \tilde{H}^{\pm \frac{1}{2}}(\Gamma) ~:=~\big\lbrace  f \in H^{\pm\frac{1}{2}}(\partial D) \colon  \,\,\, \text{supp}(f) \subset \overline{\Gamma} \big\rbrace,
\end{aligned}
\eeq
and recall that $H^{-1/2}(\Gamma)$ and $\tilde{H}^{-1/2}(\Gamma)$ are respectively the dual spaces of $\tilde{H}^{1/2}(\Gamma)$ and $H^{1/2}(\Gamma)$. Accordingly, the following embeddings hold
\beq\lb{embb}
\tilde{H}^{\frac{1}{2}}(\Gamma) \,\subset\, H^{\frac{1}{2}}(\Gamma) \,\subset\, L^2(\Gamma) \,\subset\, \tilde{H}^{-\frac{1}{2}}(\Gamma) \,\subset\, H^{-\frac{1}{2}}(\Gamma).
\eeq

\begin{rem} \textcolor{black}{In the context of fracture mechanics, it is well known that $\dbv(\bxi)\to \boldsymbol{0}$ continuously as $\Gamma\!\ni\!\bxi\to\partial\Gamma$ (typically as $d^\alpha$, $0\!<\!\alpha\!\leqslant\!\tfrac{1}{2}$~\cite{Ueda2006} where $d$ is a normal distance to $\partial\Gamma$ when $\partial\Gamma$ is smooth), which lends credence to the assumption $\dbv\in\tilde{H}^{1/2}(\Gamma)^3$ used hereon.}           
\end{rem}

\section{On the well-posedness of the forward scattering problem} \label{WP}


Serving as a prerequisite for the analysis of the inverse scattering problem, this section investigates the well-posedness of the direct scattering problem \eqref{GE}--\eqref{KS}.  Let $R>0$ be sufficiently large so that the ball $B_R$ of radius $R$ contains $\Gamma$, and consider the Dirichlet-to-Neumann operator $\TR: H^{1/2}(\partial B_R)^3 \to H^{-1/2}(\partial B_R)^3$ associated with the scattering problem in $\R^3 \backslash B_R$, namely 
$$
\TR(\bphi)(\bxi) := \hat{\bxi}\cdot \bC \colon \! \nabla \btu_{\bphi}(\bxi), \qquad \bxi\in\partial B_R,
$$
where $\btu_\bphi \in H^1_\mathrm{loc}(\R^3 \backslash B_R)^3$ is the unique radiating solution, satisfying~\eqref{KS}, of 
\beq\lb{TR}
\begin{aligned}
&\nabla \sip (\bC \colon \! \nabla \btu_\bphi) \,+\, \rho \exs \omega^2 \btu_\bphi ~=~ \bzero     \quad &\text{in}& \quad {\R^3}\backslash B_R, \\*[1mm]
&  \btu_\bphi~=~\bphi \quad  &\text{on}& \quad \partial B_R. 
\end{aligned}      
\eeq
The scattering problem \eqref{GE}--\eqref{KS} can now be equivalently written in terms of  $\bv\in H^1(B_R\backslash\Gamma)^3$ as 
\beq\lb{GEbis}
\begin{aligned}
&\nabla \sip (\bC \colon \! \nabla \bv) \,+\, \rho \exs \omega^2\bv ~=~ \bzero  \quad &\text{in}& \quad {\R^3}\backslash\Gamma, \\*[1mm]
&\bn \cdot \bC \exs \colon \!  \nabla  \bv ~=~  \bK\sip\dbv  \,-\, \bt\ff   \quad &\text{on}& \quad \Gamma, \\*[1mm]
&\bn \cdot \bC \colon \! \nabla \bv = \TR(\bv)  \quad &\text{on}& \quad \partial B_R, 
\end{aligned}      
\eeq
where $\bn(\bxi)=\hat\bxi$ on~$\partial B_R$. This problem can be written variationally in terms of $\bv\in H^1(\Bo\backslash\Gamma)^3$ as  
\beq\lb{Wik-GE}
\begin{aligned}
&- \rho \exs \omega^2 \! \int_{\Bo \backslash \Gamma} \overline\bw \cdot \bv  \,\exs \textrm{d}V_{\bxi}  ~+\, \int_{\Bo \backslash \Gamma}   \nabla \exs \overline{\bw} \colon \bC \colon \nabla \bv \, \textrm{d}V_{\bxi} ~+\,  \dualGA{\bK \sip \dbv}{\llbracket\bw\rrbracket}~-\,  \\*[1 mm]
& \quad\,\,  \dualBR{\TR(\bv)}{\bw}   ~=\,   \int_{\Gamma} \exs  \overline{\llbracket \bw \rrbracket} \cdot \bt\ff \, \textrm{d}S_{\bxi},   \qquad \quad \forall \exs \bw \in H^1(\Bo \backslash \Gamma)^3,
\end{aligned}
\eeq
where $\dualGA{\cdot}{\cdot}$ and $\dualBR{\cdot}{\cdot}$ respectively denote the $\big\langle H^{-1/2}(\Gamma)^3, \tilde{H}^{1/2}(\Gamma)^3 \big\rangle$ and $\big\langle H^{-1/2}(\partial B_R)^3, {H}^{1/2}(\partial B_R)^3 \big\rangle$ \emph{duality products} that extend $L^2$ inner products. The analysis of the forward scattering problem is based on the following properties of the Dirichlet-to-Neumann operator $\TR$ (see also \cite{BramblePasciak}). \textcolor{black}{For clarity, we will use an abbreviated notation of relevant vector norms where e.g. $\|\boldsymbol{\cdot}\|_{H^{1/2}(\Gamma)^3}$ is denoted by~ $\|\boldsymbol{\cdot}\|_{H^{1/2}(\Gamma)}$ and so on.}

\begin{lemma} \label{LemmaTR}
There exists a bounded, non-negative and self-adjoint operator $\TR^0\!: H^{1/2}(\partial B_R)^3 \to H^{-1/2}(\partial B_R)^3$ such that $\TR+\TR^0\!: H^{1/2}(\partial B_R)^3 \to H^{-1/2}(\partial B_R)^3 $ is compact.  Moreover, 
\begin{equation}
\Im \dualBR{\TR(\bphi)}{\bphi} > 0 \qquad \forall \bphi \in H^{1/2}(\partial B_R)^3:~ \bphi \neq 0.
\end{equation}
\end{lemma}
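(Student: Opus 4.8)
The plan is to prove the two assertions separately, both resting on Betti's (first Green's) identity for the Navier operator together with the fact that $\partial B_R$ is a sphere, which makes $\TR$ explicitly diagonalizable by vector spherical harmonics.

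\medskip\noindent\textbf{Construction of $\TR^0$ and its non-negativity.}
First I would introduce a coercive \emph{reference} Dirichlet-to-Neumann map. For $\bphi\in H^{1/2}(\partial B_R)^3$, let $\btu^0_\bphi\in H^1(\R^3\backslash B_R)^3$ be the unique decaying solution of the strongly elliptic, coercive problem $\nabla\sip(\bC\colon\!\nabla\btu^0_\bphi)-\btu^0_\bphi=\bzero$ in $\R^3\backslash B_R$ with $\btu^0_\bphi=\bphi$ on $\partial B_R$ (existence and uniqueness by Lax--Milgram, the associated sesquilinear form being coercive), and set $\TR^0(\bphi):=-\hat{\bxi}\cdot\bC\colon\!\nabla\btu^0_\bphi$ on $\partial B_R$. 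One application of Betti's identity on $\R^3\backslash B_R$, the boundary contribution at infinity vanishing by the exponential decay of $\btu^0_\bphi$, then gives
\[
\dualBR{\TR^0(\bphi)}{\bphi}=\int_{\R^3\backslash B_R}\big(\nabla\overline{\btu^0_\bphi}\colon\!\bC\colon\!\nabla\btu^0_\bphi+|\btu^0_\bphi|^2\big)\,\textrm{d}V\;\geqslant\;0,
\]
a real, Hermitian form (since $\bC$ is real and possesses the major symmetry). Boundedness of $\TR^0$ as an order-one map $H^{1/2}(\partial B_R)^3\to H^{-1/2}(\partial B_R)^3$ is standard, and polarization shows that $\TR^0$ is self-adjoint and non-negative.

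\medskip\noindent\textbf{Compactness of $\TR+\TR^0$ (the main obstacle).}
By construction $\TR+\TR^0$ is the map $\bphi\mapsto\hat{\bxi}\cdot\bC\colon\!\nabla(\btu_\bphi-\btu^0_\bphi)$. To prove it is compact I would diagonalize both operators using the Navier vector spherical harmonics on $\partial B_R$: $\btu_\bphi$ is expanded in the radiating elastic multipoles (spherical Hankel functions $h^{(1)}_n(k_p r)$ and $h^{(1)}_n(k_s r)$), while $\btu^0_\bphi$ is expanded in the corresponding decaying modified multipoles. On each harmonic both DtN maps reduce to an explicit scalar symbol (toroidal/SH part) and a $2\times2$ matrix symbol (coupled poloidal P--S part) built from the logarithmic radial derivatives of these functions at $r=R$. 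The crux is that, as $n\to\infty$, the large-order asymptotics of the Hankel ratios and of the modified-Bessel ratios share the same leading algebraic behaviour $\sim -n/R$, the wave numbers in the arguments affecting only lower-order terms; hence the symbols of $\TR$ and of $-\TR^0$ coincide to principal order. Consequently $\TR+\TR^0$ has a uniformly bounded symbol, i.e.\ it maps $H^{1/2}(\partial B_R)^3$ boundedly into itself, and composing with the compact Sobolev (Rellich) embedding $H^{1/2}(\partial B_R)^3\hookrightarrow H^{-1/2}(\partial B_R)^3$ yields compactness. I expect the honest verification of the matched large-$n$ asymptotics for the coupled $2\times2$ P--S block to be the most delicate point of the whole argument.

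\medskip\noindent\textbf{Strict positivity of the imaginary part.}
Finally I would apply Betti's identity to the radiating field $\btu_\bphi$ on the annulus $B_\rho\backslash B_R$ with $\rho>R$. Because $\bC$ is real, both $\nabla\overline{\btu_\bphi}\colon\!\bC\colon\!\nabla\btu_\bphi$ and $|\btu_\bphi|^2$ are real, so the volume contributions are real and cancel in the imaginary part, leaving a flux identity
\[
\Im\dualBR{\TR(\bphi)}{\bphi}=\Im\int_{\partial B_\rho}\overline{\btu_\bphi}\cdot(\hat{\bxi}\cdot\bC\colon\!\nabla\btu_\bphi)\,\textrm{d}S,\qquad\forall\,\rho>R,
\]
which is therefore independent of $\rho$. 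Letting $\rho\to\infty$ and inserting the far-field expansions $\btu^p_\bphi\sim(e^{\textrm{i}k_p r}/r)\,\boldsymbol{u}^\infty_p$ and $\btu^s_\bphi\sim(e^{\textrm{i}k_s r}/r)\,\boldsymbol{u}^\infty_s$, with $\boldsymbol{u}^\infty_p\parallel\hat{\bxi}$ and $\boldsymbol{u}^\infty_s\perp\hat{\bxi}$, a short computation using \eqref{bC} gives the leading tractions $\hat{\bxi}\cdot\bC\colon\!\nabla\btu^p_\bphi\sim\textrm{i}k_p(\lambda\!+\!2\mu)\btu^p_\bphi$ and $\hat{\bxi}\cdot\bC\colon\!\nabla\btu^s_\bphi\sim\textrm{i}k_s\mu\,\btu^s_\bphi$, the P--S cross terms vanishing by the orthogonality $\boldsymbol{u}^\infty_p\cdot\overline{\boldsymbol{u}^\infty_s}=0$, whence
\[
\Im\dualBR{\TR(\bphi)}{\bphi}=k_p(\lambda\!+\!2\mu)\int_\Omega|\boldsymbol{u}^\infty_p|^2\,\textrm{d}S+k_s\mu\int_\Omega|\boldsymbol{u}^\infty_s|^2\,\textrm{d}S\;\geqslant\;0.
\]
This radiated-energy flux is strictly positive unless $\boldsymbol{u}^\infty_p\equiv\boldsymbol{u}^\infty_s\equiv\bzero$; in that case Rellich's lemma applied to the two Helmholtz components $\btu^p_\bphi$ and $\btu^s_\bphi$ forces $\btu_\bphi\equiv\bzero$ in $\R^3\backslash B_R$, so that $\bphi=\btu_\bphi|_{\partial B_R}=\bzero$. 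Hence $\Im\dualBR{\TR(\bphi)}{\bphi}>0$ for every $\bphi\neq\bzero$, which completes the proof.
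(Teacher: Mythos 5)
Your proof is correct in substance, and its final part (the flux identity on $\partial B_\rho$, passage to the far field, cancellation of the P--S cross terms by orthogonality of the far-field patterns, and Rellich's lemma) is essentially the paper's own argument, given in its Appendix~A. Where you genuinely depart from the paper is in the construction of $\TR^0$ and, above all, in the compactness proof. The paper introduces no auxiliary coercive exterior problem: it defines $\TR^0$ by Riesz representation of the sesquilinear form $(\bphi,\bpsi)\mapsto\int_{B_{R_\circ}\!\backslash B_R}\nabla\overline{\btu_\bpsi}\colon\bC\colon\nabla\btu_\bphi\,\textrm{d}V$ on a \emph{bounded} annulus, built from the radiating solutions themselves; the integration-by-parts identity then exhibits $\TR+\TR^0$ explicitly as a lower-order volume term $\rho\omega^2\int_{B_{R_\circ}\!\backslash B_R}\overline{\btu_\bpsi}\cdot\btu_\bphi\,\textrm{d}V$ plus a boundary term on the outer sphere $\partial B_{R_\circ}$, so that compactness follows softly from the compact embedding of $H^1(B_{R_\circ}\!\backslash B_R)$ into $L^2(B_{R_\circ}\!\backslash B_R)$ together with the smoothing character of $\bphi\mapsto\bt(\bphi)$ on $\partial B_{R_\circ}$ (which sits at positive distance from $\partial B_R$). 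Your route instead diagonalizes both Dirichlet-to-Neumann maps in vector spherical harmonics and invokes cancellation of the leading symbols. That strategy is viable and the claimed cancellation is true -- the principal symbol of a DtN map depends only on the principal part of the interior operator, not on the zeroth-order term $\rho\omega^2$ versus $-1$ -- and it buys explicitness: one sees the symbols and gets, in principle, more than compactness. But it is tied to the spherical geometry of $\partial B_R$, and its decisive step, the matched large-$n$ asymptotics of the coupled poloidal $2\times 2$ Hankel/modified-Bessel blocks, is asserted rather than verified (as you yourself flag); that is exactly the hard analysis the paper's annulus construction is designed to bypass. If you carry out or cite that asymptotic matching, your proof is complete; otherwise the paper's argument reaches the same conclusion with far less machinery and with no use of the special structure of the sphere.
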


\begin{proof} Let $R_{\circ} > R$ and $\bphi, \bpsi \in H^{1/2}(\partial B_R)^3 $. Multiplying the first equation in~\eqref{TR} by $\overline{\btu_\bpsi}$ and
integrating by parts on $B_{R_{\circ}} \backslash B_R$ yields
$$
\dualBR{\TR(\bphi)}{\bpsi} ~=~ \rho\exs\omega^2 \! \int_{B_{R_{\circ}}\!\backslash B_R} \overline{\btu_\bpsi}\cdot \btu_\bphi  \,\exs \textrm{d}V_{\bxi}  ~- \int_{B_{R_{\circ}}\!\backslash B_R} \nabla \exs \overline{\btu_\bpsi} \colon \bC \colon \nabla \btu_\bphi
\, \textrm{d}V_{\bxi} ~+ \int_{\partial{B_{R_{\circ}}}}  \exs\overline{\btu_\bpsi} \cdot \bt(\bphi)  \,  \textrm{d}S_{\bxi}, 
$$
where $\bt(\bphi)(\bxi):= \hat{\bxi}\cdot \bC \colon \! \nabla \btu_{\bphi}(\bxi)$ for $\bxi \in \partial B_{R_{\circ}}$. Using the well-posedness of \eqref{TR} and the Riesz representation theorem, we define $\TR^0$ by
$$
\dualBR{\TR^0(\bphi)}{\bpsi} := \int_{B_{R_{\circ}}\!\backslash B_R} \nabla\exs\overline{\btu_\bpsi} \colon \bC \colon \nabla \btu_\bphi
\, \textrm{d}V_{\bxi}.
$$
\textcolor{black}{On demonstrating that $\|(\TR+\TR^0)(\bphi)\|_{H^{-1/2}(\Gamma)}\leqslant C (\|\btu_{\bphi}\|_{L^2(B_{R_{\circ}}\!\backslash B_R)}+\|\bt(\bphi)\|_{L^2(\partial B_{R_{\circ}}\!)})$ for some constant~$C\!>\!0$ independent of~$\bphi$, the compactness of $\TR+\TR^0$ then follows from the compactness of mapping $\bphi\to\btu_\bphi$ (resp. $\bphi\to\bt(\bphi)$) from $H^{1/2}(\partial B_R)$ into~$L^2(B_{R_{\circ}}\!\backslash B_R)$ (resp.~$L^2(\partial B_{R_{\circ}})$) thanks to the compact embedding of $H^1(B_{R_{\circ}}\!\backslash B_R)$ into~$L^2(B_{R_{\circ}}\!\backslash B_R)$ and the standard regularity results for scattering problems~\cite{McLean2000}, which can be recovered from the boundary integral representation of $\btu_\bphi$ in $\R^3\backslash B_R$ in terms of boundary data on~$\partial B_R$. As shown in~\ref{lem1-p}, the sign of the imaginary part of $\TR$ is a consequence of the asymptotic behavior of $\btu_\bphi$ at infinity~\cite{Kuprad1979} which implies}
\beq\label{kup1}
\Im \dualBR{\TR(\bphi)}{\bphi}  = \Im \lim_{R_{\circ} \to \infty} \int_{\partial{B_{R_{\circ}}}}  \exs\overline{\btu_\bphi} \cdot \bt(\bphi)  \,
\textrm{d}S_{\bxi} \;=\; \lim_{R_{\circ} \to \infty} \int_{\partial{B_{R_{\circ}}}}   \Big\lbrace  k_p (\lambda+2\mu) |\btu_\bphi^{\mathrm{p}} \exs |^2 \,+\, k_s\mu\exs |\btu_\bphi^{\mathrm{s}}|^2 \Big\rbrace \,\,  \textrm{d}S_{\bxi}.
\eeq
The \textcolor{black}{sign-definiteness} of the imaginary part is a consequence of the Rellich lemma \cite{Col1992} applied to $\btu_\bphi^{\mathrm{p}}$ and $\btu_\bphi^{\mathrm{s}}$, which requires that $\btu_\bphi = \btu_\bphi^{\mathrm{p}} + \btu_\bphi^{\mathrm{s}}=0$ whenever $\Im \dualBR{\TR(\bphi)}{\bphi}=0$.  
\end{proof}
\begin{theorem} \label{maindirect}
Assume that $\bt\ff \in H^{-1/2}(\Gamma)^3$ and that $\bK \in L^\infty(\Gamma)^{3\times 3}$ is symmetric such that $\Im\bK\leqslant\bzero$ on $\Gamma$, i.e. that $\overline{\boldsymbol{\theta}} \sip \Im \bK(\bxi)\sip \boldsymbol{\theta} \leqslant 0$, $\forall\boldsymbol{\theta}\in\mathbb{C}^3$ and a.e.~on~$\Gamma$. Then problem~\eqref{Wik-GE} has a unique solution that continuously depends on $\bt\ff \in H^{-1/2}(\Gamma)^3$.
\end{theorem}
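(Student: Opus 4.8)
The plan is to recast the variational problem \eqref{Wik-GE} as an operator equation $\mathcal{A}\bv=\ell$ on $H^1(\Bo\backslash\Gamma)^3$ and to establish a G\aa rding inequality for the underlying sesquilinear form, so that $\mathcal{A}$ is Fredholm of index zero; existence and continuous dependence on $\bt\ff$ then follow at once from uniqueness via the Fredholm alternative. First I would split the form into a coercive part and a compact part. Collecting the elastic energy term $\int_{\Bo\backslash\Gamma}\nabla\overline\bw\colon\bC\colon\nabla\bv$, the $L^2$ term $\int_{\Bo\backslash\Gamma}\overline\bw\cdot\bv$, and the non-negative contribution $\dualBR{\TR^0(\bv)}{\bw}$ furnished by Lemma~\ref{LemmaTR} yields a form $a_0$. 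Its coercivity on $H^1(\Bo\backslash\Gamma)^3$ rests on three facts: pointwise positivity of $\bC$ on symmetric second-order tensors (so that $\int\nabla\overline\bw\colon\bC\colon\nabla\bw=\int[\lambda|\nabla\!\cdot\!\bw|^2+2\mu|\mathrm{sym}\,\nabla\bw|^2]\geqslant 0$), a Korn inequality valid on the cracked domain $\Bo\backslash\Gamma$ (controlling $\|\bw\|_{H^1}$ by the symmetric gradient plus the $L^2$ norm), and the non-negativity and self-adjointness of $\TR^0$.

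The remaining terms form the compact part $a_c$. Two of its three pieces are standard: the $L^2$ coupling $\int_{\Bo\backslash\Gamma}\overline\bw\cdot\bv$ is compact because the embedding $H^1(\Bo\backslash\Gamma)\hookrightarrow L^2(\Bo\backslash\Gamma)$ is compact, while $\dualBR{(\TR+\TR^0)(\bv)}{\bw}$ is compact by Lemma~\ref{LemmaTR}. The contact term $\dualGA{\bK\sip\dbv}{\llbracket\bw\rrbracket}$ is likewise compact: since $\bK\in L^\infty(\Gamma)^{3\times3}$ and the jump map $\bv\mapsto\dbv$ sends $H^1(\Bo\backslash\Gamma)^3$ continuously into $\tilde H^{1/2}(\Gamma)^3$, which embeds compactly into $L^2(\Gamma)^3$, the form factors through a compact operator and may be represented by the $L^2(\Gamma)$ pairing. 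This confirms the G\aa rding inequality and the index-zero Fredholm property.

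For uniqueness I would set $\bt\ff=\bzero$, take $\bw=\bv$ in \eqref{Wik-GE}, and read off the imaginary part. The terms $\rho\omega^2\!\int|\bv|^2$ and $\int\nabla\overline\bv\colon\bC\colon\nabla\bv$ are real (the latter by the major symmetry and reality of $\bC$), so they drop out, leaving the identity $\Im\dualGA{\bK\sip\dbv}{\dbv}=\Im\dualBR{\TR(\bv)}{\bv}$. Using the symmetry of $\bK$, the left-hand side equals $\int_\Gamma\overline{\dbv}\cdot\Im\bK\,\dbv\,\mathrm{d}S\leqslant 0$ by the admissibility hypothesis $\Im\bK\leqslant\bzero$, whereas the right-hand side is non-negative by Lemma~\ref{LemmaTR}; hence both vanish. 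The strict sign property $\Im\dualBR{\TR(\bphi)}{\bphi}>0$ for $\bphi\neq0$ then forces $\bv|_{\partial B_R}=\bzero$, so the radiating exterior solution carrying this trace vanishes and $\bv\equiv\bzero$ in $\R^3\backslash\overline{B_R}$.

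The hard part will be the final unique-continuation step. Having $\bv=\bzero$ in the exterior of $B_R$, I would decompose $\bv=\bv^p+\bv^s$ as in \eqref{vpvs}, note that $\bv^p$ and $\bv^s$ solve vector Helmholtz equations with wavenumbers $k_p$ and $k_s$ in $\R^3\backslash\Gamma$, and invoke unique continuation. Because $\Gamma$ is only an open piece of the closed surface $\partial D$, the set $\R^3\backslash\Gamma$ is connected, so vanishing of $\bv$ on an open exterior subset propagates to all of $\R^3\backslash\Gamma$, giving $\bv\equiv\bzero$ and hence uniqueness. The subtlety lies in justifying both the Korn inequality and the unique-continuation principle on the cracked configuration, and in the precise interplay of the sign conditions on $\Im\bK$ and $\Im\TR$ that makes the admissibility hypothesis exactly what is needed to close the argument. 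With uniqueness in hand, the Fredholm alternative delivers a unique solution together with the bound $\|\bv\|_{H^1(\Bo\backslash\Gamma)}\leqslant C\|\bt\ff\|_{H^{-1/2}(\Gamma)}$.
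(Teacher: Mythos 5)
Your proposal is correct and follows essentially the same route as the paper's proof: the identical splitting of the sesquilinear form into the coercive part (elastic energy, $L^2$ term, $\TR^0$ contribution, with Korn's inequality and Lemma~\ref{LemmaTR}) and the compact remainder (shifted $L^2$ coupling, the $\bK$-contact term via compactness of the jump trace into $L^2(\Gamma)$, and $\TR+\TR^0$), followed by the Fredholm alternative and the same uniqueness argument reading off the imaginary part with $\bw=\bv$, pitting $\Im\bK\leqslant\bzero$ against the strict positivity of $\Im\dualBR{\TR(\cdot)}{\cdot}$ and concluding by unique continuation. Your extra step of decomposing $\bv=\bv^p+\bv^s$ for the continuation argument is a harmless elaboration of what the paper invokes directly as the unique continuation principle.
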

\begin{proof}
Since $\bt\ff \in H^{-1/2}(\Gamma)$, the antilinear form $\int_{\Gamma} \exs \overline{\llbracket \bw \rrbracket} \cdot \bt\ff \, \textrm{d}S_{\bxi}$ may be understood as a duality pairing $\dualGA{\cdot}{\cdot}$. The continuity of this form comes from the continuity of the trace mapping $\bw \to \llbracket \bw \rrbracket$ from $ H^1(\Bo \backslash \Gamma)^3$ into $\tilde{H}^{1/2}(\Gamma)^3$.

On the basis of the adopted dimensional platform i.e.~$\rho = \mu = 1$ (see Section~\ref{PS}), the sesquilinear form on the left hand side of \eqref{Wik-GE} can be decomposed into a coercive part
\beq\lb{cv}
\begin{aligned} 
&\textrm{A}(\bv,\bw) ~=\,  \! \int_{\Bo \backslash \Gamma} \overline\bw \cdot \bv  \,\exs \textrm{d}V_{\bxi}  ~+\, \int_{\Bo \backslash \Gamma} \!\!  \nabla \exs \overline{\bw} \colon \! \bC \colon \! \nabla \bv \, \textrm{d}V_{\bxi}  ~+\,  \dualBR{\TR^0(\bv)}{\bw}  , \quad \forall \bw \in H^1(\Bo \backslash \Gamma)^3,
\end{aligned}
\eeq
and a compact part
\beq\lb{compact}
\text{B}(\bv,\bw) ~=\, - (1+k_s^2) \int_{\Bo \backslash \Gamma}  \overline\bw
\cdot \bv  \,\, \textrm{d}V_{\bxi} ~+\,  \dualGA{\bK \sip \dbv}{\llbracket\bw\rrbracket}~ -\,
\dualBR{(\TR+\TR^0)(\bv)}{\bw}   , \quad \forall \bw \in H^1(\Bo \backslash \Gamma)^3.
\eeq
The coercivity of $\textrm{A}(\bv,\bw)$ follows from the Korn inequality~\cite{McLean2000} and the non negative sign of $\TR^0$ (Lemma \ref{LemmaTR}). Now, in order to prove that the antilinear form $\text{B}$ defines a compact perturbation of $\textrm{A}(\bv,\bw)$, one may observe that 
\[
| \text{B}(\bv,\bw) | ~\leqslant~\! \text{c}_2 \exs \big\lbrace \nxs
\norms{\!\bv\!}_{L^2(\Bo \backslash \Gamma)} \, \norms{\!\bw\!}_{L^2(\Bo
  \backslash \Gamma)} \,+\, \norms{\! \llbracket  \bv  \rrbracket
  \!}_{L^2(\Gamma)} \, \norms{\! \llbracket  \bw  \rrbracket \!}_{L^2(\Gamma)}
\! \big\rbrace + \norms{(\TR+\TR^0)(\bv)}_{H^{-1/2}(\partial B_R)}\norms{\bw}_{H^{1/2}(\partial B_R)} 
\]
for a constant $\text{c}_2$ independent of $\bv$ and $\bw$. The claim then follows from Lemma \ref{LemmaTR},  the compact embedding of $H^1(\Bo \backslash \Gamma)$ into $L^2(\Bo \backslash \Gamma)$ and the compactness of the trace operator $\bv \rightarrow  \llbracket \bv \rrbracket$ as an application from $H^1(\Bo \backslash \Gamma)$ into ${L^2(\Gamma)}$ where the latter comes from the compact embedding of $\tilde H^{1/2}(\Gamma)$ into $L^2(\Gamma)$.

Problem \eqref{Wik-GE} is then of Fredholm type, and is therefore well-posed as soon as the uniqueness of a solution is guaranteed. Assume that $\bt\ff = 0$. Then 
$$
\Im \dualBR{\TR(\bv)}{\bv} =\, \dualGA{\Im \bK \sip \dbv}{\llbracket\bv\rrbracket} \,\leqslant\: 0 
$$
by premise of the Theorem. Thanks to Lemma  \ref{LemmaTR}, this requires that $\bv\!=\!\boldsymbol{0}$ on $\partial B_R$ and thus $\bv\!=\!\boldsymbol{0}$ in $\Bo \backslash \Gamma$ by the unique continuation principle.
 \end{proof}

\section{Elements of the inverse scattering solution}\lb{Prelim}   
\renewcommand{\OOd}{{\Omega}}
\renewcommand{\pff}{\btu}

This section is devoted to the introduction of the \emph{far-field operator} -- relevant to the scattering problem~(\ref{GE}), and the derivation of its first and second factorizations. In the sequel, we assume that the hypotheses of Theorem \ref{maindirect} hold.

\paragraph*{Elastic Herglotz wave function.} For given density $\bg \in L^2(\OOd)^3$, we consider the unique decomposition 
\beq\lb{herden}
\bg \;:=\; \bg_p \oplus\, \bg_s 
\eeq
such that $\bg_p(\bd)\!\parallel\!\bd$ and $\,\bg_s(\bd)\!\perp\!\bd$, $\,\bd\in\OOd$. In dyadic notation, one has 
\beq\lb{freef}
\bg_p(\bd) := (\bd \nxs \otimes \nxs \bd \exs) \cdot \bg(\bd) \quad ~~\text{and}~~ \quad   \bg_s(\bd) := (\bI - \bd \nxs \otimes \nxs \bd \exs) \cdot \bg(\bd).
\eeq 
Next, we define the elastic Herglotz wave function~\cite{Dassios1995} as 
\beq\lb{HW}
\pff_\bg(\bxi) ~: =~  \int_{\OOd} \bg_p(\bd) \exs  e^{\textrm{i} k_p \bd \cdot \bxi} \,\, \text{d}S_{\bd} \,+\, \int_{\OOd} \bg_s(\bd) \exs  e^{\textrm{i} k_s \bd \cdot \bxi} \,\, \text{d}S_{\bd},  \qquad \bxi \in \R^3 
\eeq
in terms of {the compressional and shear wave densities~$\bg_p$ and~$\bg_s$.
\paragraph*{The far-field pattern.}
As shown in~\cite{Martin1993}, any scattered wave $\bv\in H^1_{\mathrm{loc}}(\R^3 \backslash \Gamma)^3$ solving \eqref{GE}-\eqref{KS} has the asymptotic
expansion
\beq\lb{vinf}
\bv(\bxi) ~=~ \exs \frac{e^{\text{i}k_p r}\!}{4 \pi(\lambda\!+\!2\mu)r} \exs \bv^\infty_p(\hat\bxi) \:+\: 
\frac{e^{\text{i}k_s r}\!}{4\pi\mu r} \exs \bv^\infty_s(\hat\bxi) \:+\: O(r^{-2}) \quad~~ \text{as} \quad r:=|\bxi|\to\infty, 
\eeq
where $\hat\bxi$ is the unit direction of observation, while $\bv^\infty_p$ and $\bv^\infty_s$ denote respectively the far-field patterns of $\bv^p$ and $\bv^s$ -- see~\eqref{vpvs}, which satisfy  $\bv^\infty_p\!\parallel\hat\bxi$ and $\bv^\infty_s\!\perp\hat\bxi$.  In this setting, we define the far-field pattern of~$\bv$ by
\beq\lb{far-field}
\bv^\infty := \bv^\infty_p \oplus \bv^\infty_s.
\eeq
By way of the integral representation theorem in elastodynamics~\cite{Bon1999} and the far-field representation of the elastodynamic fundamental stress tensor (see Appendix), one can show that if $\bv\in H^1_{\mathrm{loc}}(\R^3 \backslash \Gamma)^3$ satisfies~\eqref{GE}-\eqref{KS}, then 
\beq\lb{vinf2}
\begin{aligned}
&\bv_p^\infty(\hat\bxi) ~=~- \text{i} k_p \exs \hat\bxi \int_\Gamma \Big\lbrace \lambda \,  \llbracket \bv \rrbracket \sip \bn   + 2\mu \big(\bn \sip \hat\bxi \big) \exs \llbracket \bv \rrbracket \sip \hat\bxi  \exs \Big\rbrace \, e^{-\text{i}k_p \hat\bxi \cdot \bx}  \,\, \text{d}S_{\bx}, \\*[1 mm]
& \bv_s^\infty(\hat\bxi) ~=~ -\text{i}k_s \exs \hat\bxi \exs \times \int_{\Gamma} \Big\lbrace \mu \big( \llbracket \bv \rrbracket \!\times\! \hat\bxi \exs \big)(\bn \sip \hat\bxi \exs ) \,+\, \mu \big( \bn \!\times \hat\bxi \big) (\llbracket \bv \rrbracket \sip \hat\bxi) \Big\rbrace \, e^{-\text{i}k_s \hat\bxi \cdot \bx} \,\, \text{d}S_{\bx}.
\end{aligned} 
\eeq
\paragraph*{The far-field operator.} 

\begin{defn}\lb{deffarf}
We define the far-field operator $F: L^2(\OOd)^3 \to L^2(\OOd)^3$ by
\beq\lb{ffo0}
F(\bg) ~=~ \textcolor{black}{\bv_{\bg_\Omega}^\infty,} 
\eeq
where~$\bv_{\bg_\Omega}^\infty$ is the far-field pattern~\eqref{far-field} of~$\bv\in H^1_{\mathrm{loc}}(\R^3 \backslash \Gamma)^3$ solving~\eqref{GE}-\eqref{KS} with data $\bu\ff = \pff_\bg$, see~\eqref{HW}.  
\end{defn}

When the contact law specified by~$\mathcal{L}(\dbv)$ is linear as in~(\ref{contact}), the far-field operator can be expressed as a linear integral operator. To examine this case, \textcolor{black}{consider an incident plane wave~(\ref{plwa}) propagating in direction $\bd\in\OOd$ with amplitude $\bq=\bq_p\oplus\bq_s$}, and denote the induced far-field pattern~(\ref{far-field}) by \textcolor{black}{$\bv^\infty_\bq(\bd, \cdot)=\bv^\infty_{\bq_p}\oplus \bv^\infty_{\bq_s}$}. Next, let us define the far-field kernel  \textcolor{black}{$\bW^\infty(\bd,\hat\bxi)\in\mathbb{C}^{6\times 6}$} so that 
\beq\lb{w-inf}
\bW^\infty(\bd,\hat\bxi) \sip \bq ~:=~ \bv^\infty_\bq(\bd, \hat\bxi). 
\eeq
Then one easily verifies that  
\beq\lb{ffo2} 
 F(\bg)\textcolor{black}{(\hat\bxi)} ~=\,  \int_{\OOd} \bW^\infty(\bd,\hat\bxi) \sip \bg(\bd) \,\, \text{d}S_{\bd}.
\eeq 

\begin{lemma}\lb{recip}
The far-field kernel $\bW^\infty(\bd,\hat\bxi)$ satisfies the reciprocity identity
\beq\lb{W-recip}
\textcolor{black}{
\bW^\infty(\bd,\hat\bxi) ~=~ \overline{\bW^{\infty *}}(-\hat\bxi,-\bd), \qquad \forall\bd,\hat\bxi\!\in\OO.}
\eeq 
\end{lemma}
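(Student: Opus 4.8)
The plan is to reduce the $6\!\times\!6$ matrix identity \eqref{W-recip} to a scalar source–receiver reciprocity stemming from the symmetry of the forward problem. The first step is to rewrite the far-field pattern as a duality pairing over $\Gamma$. Fixing an observation direction $\hat\bxi$ and a polarization $\bp=\bp_p\oplus\bp_s$ (with $\bp_p\!\parallel\!\hat\bxi$, $\bp_s\!\perp\!\hat\bxi$), I would introduce the \emph{test} plane wave $\bu_{\bp}\ff(\bx) = \bp_p\exs e^{-\text{i}k_p\hat\bxi\cdot\bx}+\bp_s\exs e^{-\text{i}k_s\hat\bxi\cdot\bx}$, i.e.~a field of the form \eqref{plwa} propagating in direction $-\hat\bxi$ with amplitude $\bp$, and compute its free-field traction $\bt_{\bp}\ff=\bn\sip\bC\colon\!\nabla\bu_{\bp}\ff$ from \eqref{plwa}--\eqref{bC}. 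A term-by-term comparison then shows that $\bp_p\sip\bv^\infty_p(\hat\bxi)$ (resp.~$\bp_s\sip\bv^\infty_s(\hat\bxi)$) reproduces exactly the P- (resp.~S-) contribution of $\llbracket\bv\rrbracket\sip\bt_{\bp}\ff$ in the representation \eqref{vinf2}; the shear part additionally uses the identity $(\ba\!\times\!\hat\bxi)\sip(\bb\!\times\!\hat\bxi)=\ba\sip\bb-(\ba\sip\hat\bxi)(\bb\sip\hat\bxi)$. Summing the two and noting $\bp_p\!\perp\!\bv^\infty_s$, $\bp_s\!\perp\!\bv^\infty_p$ yields the pairing
\beq\lb{ffpair}
\bp \sip \bv^\infty(\hat\bxi) ~=~ \int_\Gamma \dbv \sip \bt_{\bp}\ff \,\, \text{d}S_{\bx},
\eeq
valid for the far field of any $\bv$ solving \eqref{GE}--\eqref{KS}.

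Next I would invoke two instances of the forward problem. Problem~$1$ has incident direction $\bd$ and amplitude $\bq$, giving the scattered field $\bv_1$, its jump $\llbracket\bv_1\rrbracket$, and free-field traction $\bt_1\ff$; problem~$2$ has incident direction $-\hat\bxi$ and amplitude $\bp$ — which is precisely the test wave $\bu_{\bp}\ff$ above, so $\bt_2\ff=\bt_{\bp}\ff$ — giving $\bv_2$ and $\llbracket\bv_2\rrbracket$. Applying \eqref{ffpair} to $\bv_1$ with receiver $(\hat\bxi,\bp)$ and using \eqref{w-inf} gives $\bp\sip\bW^\infty(\bd,\hat\bxi)\sip\bq=\int_\Gamma \llbracket\bv_1\rrbracket\sip\bt_2\ff\,\text{d}S_{\bx}$. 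Symmetrically, applying \eqref{ffpair} to $\bv_2$ with receiver $(-\bd,\bq)$ — whose associated test wave propagates in direction $-(-\bd)=\bd$ with amplitude $\bq$, hence coincides with problem~$1$'s incident wave and carries traction $\bt_1\ff$ — gives $\bq\sip\bW^\infty(-\hat\bxi,-\bd)\sip\bp=\int_\Gamma \llbracket\bv_2\rrbracket\sip\bt_1\ff\,\text{d}S_{\bx}$.

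The crux is the cross-reciprocity
\beq\lb{cross}
\int_\Gamma \llbracket\bv_1\rrbracket \sip \bt_2\ff \,\, \text{d}S_{\bx} ~=~ \int_\Gamma \llbracket\bv_2\rrbracket \sip \bt_1\ff \,\, \text{d}S_{\bx},
\eeq
which I would obtain from the symmetry of the operator underlying \eqref{Wik-GE}. Dropping the complex conjugation from the test slot turns the sesquilinear form of \eqref{Wik-GE} into a bilinear form $a(\bv_j,\bw)=\int_\Gamma\llbracket\bw\rrbracket\sip\bt_j\ff\,\text{d}S_{\bx}$ that is \emph{symmetric}, since $\bC$ has major symmetry, $\bK=\bK^{\!\top}$ by the hypotheses of Theorem~\ref{maindirect}, and the Dirichlet-to-Neumann map $\TR$ is symmetric — the latter following from the second Betti identity applied to two outgoing solutions in $\R^3\backslash B_R$, whose radiation boundary term cancels (in contrast to the sesquilinear energy flux controlled in Lemma~\ref{LemmaTR}). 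Testing the equation for $\bv_1$ against $\bv_2$ and conversely then yields $a(\bv_1,\bv_2)=a(\bv_2,\bv_1)$, which is \eqref{cross}. Combining \eqref{cross} with the two representations above gives $\bp\sip\bW^\infty(\bd,\hat\bxi)\sip\bq=\bq\sip\bW^\infty(-\hat\bxi,-\bd)\sip\bp=\bp\sip[\bW^\infty(-\hat\bxi,-\bd)]^{\!\top}\sip\bq$ for all $\bp,\bq\in\mathbb{C}^6$, whence $\bW^\infty(\bd,\hat\bxi)=[\bW^\infty(-\hat\bxi,-\bd)]^{\!\top}=\overline{\bW^{\infty*}}(-\hat\bxi,-\bd)$, i.e.~\eqref{W-recip}.

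I expect the main obstacle to be the tensorial bookkeeping in \eqref{ffpair}: matching the integrands of \eqref{vinf2} against the computed traction $\bt_{\bp}\ff$, and in particular reconciling the cross-product structure of the S-far-field with the symmetrized-gradient structure of $\bt_{\bp}\ff$ via the vector identities. At the conceptual level, the delicate point is justifying the symmetry (as opposed to the self-adjointness of Lemma~\ref{LemmaTR}) of $\TR$ in the \emph{unconjugated} bilinear pairing, which is what allows \eqref{cross} — and hence the reciprocity — to persist for complex-valued $\bK$.
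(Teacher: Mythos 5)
Your proposal is correct, and it reaches \eqref{W-recip} by a route that is organized genuinely differently from the paper's proof in \ref{Recip}, even though both rest on the same two pillars: an \emph{unconjugated} Betti-type reciprocity in which the symmetry (not Hermitian-ness) of the possibly complex matrix $\bK$ annihilates the interfacial term, and the Kupradze radiation condition, which makes the bilinear (as opposed to sesquilinear) bracket of two outgoing fields vanish on a sphere receding to infinity. The paper works with the \emph{total} fields $\bfpsi_j=\bu\ff_j+\bv_j$: it shows via Betti's formula on $B_R\backslash\Gamma$, the continuity of total traction across the fracture, and the symmetry of $\bK$ that the Twersky bracket obeys $\lbrace\bfpsi_1,\bfpsi_2\rbrace_{\Gamma}=0$ and hence $\lbrace\bfpsi_1,\bfpsi_2\rbrace_{\partial B_R}=0$ (equations \eqref{Twersky}--\eqref{Gama}), and then delegates the conversion of this vanishing bracket into the kernel identity \eqref{W-recip2} to Theorem~1 and its corollaries in~\cite{Dassios1987}. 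You instead stay with the scattered fields in the truncated variational setting \eqref{Wik-GE}: your far-field pairing identity (expressing $\bp\sip\bv^\infty(\hat\bxi)$ as the integral of $\dbv$ against the traction of a plane wave traveling along $-\hat\bxi$, which is just \eqref{vinf2} re-read as a duality and checks out, cross-product bookkeeping included) replaces the citation, while the unconjugated symmetry of the Dirichlet-to-Neumann map $\TR$ --- proved by exterior Betti plus the $O(r^{-3})$ decay of the unconjugated bracket of two radiating fields --- replaces the paper's bracket at infinity. The net effect is a fully self-contained proof, at the cost of redoing the far-field and plane-wave bookkeeping that the paper outsources; conversely, the paper's argument is shorter but only modulo the external reference. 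Your closing remark also isolates precisely the right subtlety: it is the bilinear symmetry of $\TR$ (in contrast to the sign-definite sesquilinear flux of Lemma~\ref{LemmaTR}) that survives complex, dissipative $\bK$, which is exactly why reciprocity persists for such interfaces in both arguments.
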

\begin{proof}
\textcolor{black}{See~\ref{Recip}}. 
\end{proof}

\section{Key properties for the application of sampling methods}\label{SFS}

\paragraph*{Factorization of the far-field operator $F$.}
Consider the Herglotz operator $\mathcal{H} \colon L^2(\OOd)^3 \rightarrow H^{-1/2}(\Gamma)^3$ given by 
\beq\lb{oH}
\mathcal{H}(\bg) ~:=~ \bn\cdot\bC\exs\colon\!\nabla\pff_\bg \quad~~ \text{on}\quad \Gamma,
\eeq   
where $\pff_\bg$ is the Herglotz wave function~(\ref{HW}). Next, define $\mathcal{G} \colon H^{-1/2}(\Gamma)^3 \rightarrow L^2(\OOd)^3$ as the map taking the traction vector $\bt\ff$ over $\Gamma$ to the far-field pattern, $\bv^\infty$, of $\bv\in H^1_{\mathrm{loc}}(\R^3 \backslash \Gamma)^3$ satisfying \eqref{GE}-\eqref{KS}. Then from Definition~\ref{deffarf}, the far-field operator~\eqref{ffo0} becomes 
\beq\lb{fac1}
F ~=~ \mathcal{G} \mathcal{H}.
\eeq   
\begin{lemma}\lb{H*}
With reference to decomposition~\eqref{far-field}, the adjoint Herglotz operator $\mathcal{H}^* \colon \tilde{H}^{1/2}(\Gamma)^3 \rightarrow L^2(\OOd)^3$ takes the form
\beq\lb{Hstar}
\begin{aligned}
\mathcal{H}^*\nxs(\ba)(\hat\bxi) ~=~ -  \Big(  \, & \text{\emph{i}} k_p \,  \hat\bxi \exs \int_\Gamma \, \big\lbrace \lambda \exs (\ba \sip \bn) \,+\, 2\mu \exs (\bn \sip \hat\bxi) ( \ba \sip \hat\bxi)  \big\rbrace  \, e^{-\text{\emph{i}}k_p \hat\bxi \cdot \by} \,\, \text{d}S_{\by}  \\*[1 mm]
 &  \textcolor{black}{\oplus}~ \text{\emph{i}} k_s \,  \hat\bxi\times\! \int_\Gamma \big\lbrace   \mu \exs(\ba \times \hat\bxi)(\bn\sip\hat\bxi) \,+\, \mu \exs  (\bn \times \hat\bxi) (\ba \sip \hat\bxi)  \big\rbrace \, e^{-\textrm{\emph{i}} k_s \hat\bxi \cdot \by} \,\, \text{d}S_{\by} \Big).
\end{aligned}
\eeq
\end{lemma}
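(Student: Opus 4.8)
The plan is to read off $\mathcal{H}^*$ directly from its defining identity
\[
\dualGA{\mathcal{H}(\bg)}{\ba} ~=~ \big(\bg,\, \mathcal{H}^*(\ba)\big)_{L^2(\OOd)^3}, \qquad \forall\,\bg\in L^2(\OOd)^3,~~\ba\in\tilde{H}^{1/2}(\Gamma)^3,
\]
where the left-hand side is the $\langle H^{-1/2}(\Gamma)^3,\tilde{H}^{1/2}(\Gamma)^3\rangle$ duality. Since $\pff_\bg$ is real-analytic, the traction $\mathcal{H}(\bg)=\bn\cdot\bC\colon\!\nabla\pff_\bg$ is smooth on~$\Gamma$, so this duality reduces to the $L^2(\Gamma)^3$ pairing $\int_\Gamma \mathcal{H}(\bg)\sip\overline{\ba}\,\text{d}S_\bxi$. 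First I would make $\mathcal{H}(\bg)$ explicit by differentiating~\eqref{HW} under the integral sign and inserting $\bC$ from~\eqref{bC}: for a single plane-wave constituent $\bq\,e^{\textrm{i}k\bd\cdot\bxi}$ the induced traction on $\Gamma$ is $\textrm{i}k\{\lambda(\bq\sip\bd)\bn + \mu(\bd\sip\bn)\bq + \mu(\bq\sip\bn)\bd\}\,e^{\textrm{i}k\bd\cdot\bxi}$, which I apply with $(\bq,k)=(\bg_p(\bd),k_p)$ and $(\bq,k)=(\bg_s(\bd),k_s)$, dropping the $\lambda(\bg_s\sip\bd)\bn$ term by virtue of $\bg_s(\bd)\sip\bd=0$.

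Next I would substitute this into $\int_\Gamma \mathcal{H}(\bg)\sip\overline{\ba}\,\text{d}S_\bxi$ and interchange the order of integration over $\Gamma$ and $\OOd$ (legitimate since the plane-wave kernels are smooth in $\bd$ and the $\Gamma$-integral is finite), then factor $\bg_p(\bd)$ and $\bg_s(\bd)$ out of the compressional and shear contributions. This recasts the pairing as $\int_{\OOd}\!\big[\bg_p\sip\bh_p + \bg_s\sip\bh_s\big]\,\text{d}S_\bd$ with the explicit $\Gamma$-integral kernels $\bh_p(\bd)=\int_\Gamma \textrm{i}k_p[\lambda\bd(\bn\sip\overline{\ba}) + \mu(\bd\sip\bn)\overline{\ba} + \mu\bn(\bd\sip\overline{\ba})]e^{\textrm{i}k_p\bd\cdot\bxi}\text{d}S_\bxi$ and $\bh_s(\bd)=\int_\Gamma \textrm{i}k_s[\mu(\bd\sip\bn)\overline{\ba} + \mu\bn(\bd\sip\overline{\ba})]e^{\textrm{i}k_s\bd\cdot\bxi}\text{d}S_\bxi$. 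Comparing with the defining identity and invoking the orthogonal splitting of $L^2(\OOd)^3$ into fields parallel ($L^2_p$) and perpendicular ($L^2_s$) to $\bd$, see~\eqref{herden}, then identifies $\overline{\mathcal{H}^*(\ba)}$ componentwise.

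The decisive step is that, because $\bg_p(\bd)\parallel\bd$ and $\bg_s(\bd)\perp\bd$, only the $\bd$-parallel part of $\bh_p$ and the $\bd$-orthogonal part of $\bh_s$ survive the contraction; that is, $\overline{\mathcal{H}^*(\ba)}$ equals $(\bd\otimes\bd)\sip\bh_p$ on $L^2_p$ and $(\bI-\bd\otimes\bd)\sip\bh_s$ on $L^2_s$. Projecting $\bh_p$ onto $\bd$ collapses its three traction terms into $\bd\{\lambda(\bn\sip\overline{\ba}) + 2\mu(\bn\sip\bd)(\bd\sip\overline{\ba})\}$, and after the relabelling $\bd\to\hat\bxi$, $\bxi\to\by$ together with complex conjugation (note $\overline{\textrm{i}}=-\textrm{i}$, so $e^{\textrm{i}k_p\bd\cdot\bxi}\mapsto e^{-\textrm{i}k_p\hat\bxi\cdot\by}$ and the prefactor changes sign) this reproduces the compressional line of~\eqref{Hstar}. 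For the shear part I would express the tangential projection through the double cross-product identity $(\bI-\hat\bxi\otimes\hat\bxi)\sip\bw=-\hat\bxi\times(\hat\bxi\times\bw)$; distributing the inner product over $\bh_s$ and using $\hat\bxi\times\overline{\ba}=-\overline{\ba}\times\hat\bxi$ and $\hat\bxi\times\bn=-\bn\times\hat\bxi$ converts $(\bn\sip\hat\bxi)\overline{\ba}+\bn(\hat\bxi\sip\overline{\ba})$ into the combination $(\overline{\ba}\times\hat\bxi)(\bn\sip\hat\bxi)+(\bn\times\hat\bxi)(\overline{\ba}\sip\hat\bxi)$, and a final conjugation yields the shear line of~\eqref{Hstar}. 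The main obstacle is thus purely the vector-algebraic bookkeeping of these two projections and the attendant conjugations; the analytic ingredients -- differentiation under the integral and the Fubini interchange -- are routine given the smoothness of the Herglotz kernels.
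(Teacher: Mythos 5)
Your proposal is correct and follows essentially the same route as the paper's own proof: both start from the adjoint-defining duality $\langle\mathcal{H}(\bg),\ba\rangle=(\bg,\mathcal{H}^*(\ba))_{L^2(\OOd)}$, compute the traction of the compressional and shear plane-wave constituents of the Herglotz field on $\Gamma$, swap the $\Gamma$- and $\OOd$-integrations by Fubini, and identify $\mathcal{H}^*(\ba)$ through the $\bg_p\oplus\bg_s$ orthogonal splitting. The only cosmetic difference is that you carry the full three-term traction kernel and apply the projections $\bd\otimes\bd$ and $\bI-\bd\otimes\bd$ at the end, whereas the paper builds these projections in from the outset by writing the polarizations as $(\bd\otimes\bd)\cdot\bg$ and $(\bI-\bd\otimes\bd)\cdot\bg$, so the cross-product form appears directly in its intermediate identities.
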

\begin{proof}
\textcolor{black}{see~\ref{H*pruf}}.
\end{proof}
On the basis of~\eqref{vinf2} and~(\ref{Hstar}), map $\mathcal{G}$ can be further decomposed as $\mathcal{G}=\mathcal{H}^* T$ where the middle operator $T\colon H^{-1/2}(\Gamma)^3 \rightarrow \tilde{H}^{1/2}(\Gamma)^3$ is given by
\beq\lb{T}
T(\bt\ff)(\bxi) ~:=~ \llbracket \bv(\bxi) \rrbracket,   \qquad \bxi \in \Gamma
\eeq
such that~$\bv\in H^1_{\mathrm{loc}}(\R^3 \backslash \Gamma)^3$ satisfies~\eqref{GE}-\eqref{KS} or equivalently \eqref{Wik-GE}. Thanks to this new decomposition of $\mathcal{G}$, the second factorization of $F \colon L^2(\OOd)^3 \rightarrow L^2(\OOd)^3$ is obtained 
\beq\lb{fact} 
F ~=~ \mathcal{H}^* \exs T \exs \mathcal{H}, 
\eeq
which provides the second important ingredient for the ensuing analysis.  

\paragraph*{Properties of the Herglotz operator $\mathcal{H}$.}

\begin{lemma}\lb{H*p}
Operator $\mathcal{H^*}:\tilde{H}^{1/2}(\Gamma)^3\rightarrow L^2(\OOd)^3$ in Lemma~\ref{H*} is compact and injective.  
\end{lemma}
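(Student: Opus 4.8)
The plan is to handle compactness and injectivity separately, exploiting the explicit integral representation of $\mathcal{H}^*$ recorded in Lemma~\ref{H*} together with its interpretation as the far-field map of an elastodynamic double-layer potential.

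For compactness, I would first observe that the matrix-valued kernel appearing in~\eqref{Hstar}, built over $(\hat\bxi,\by)\in\OOd\times\Gamma$ from the oscillatory factors $e^{-\text{i}k_p\hat\bxi\cdot\by}$, $e^{-\text{i}k_s\hat\bxi\cdot\by}$ and polynomial expressions in $\hat\bxi$ and $\bn(\by)$, is $C^\infty$ (indeed real-analytic) in $\hat\bxi$ and uniformly bounded on the compact product $\OOd\times\Gamma$ --- here using that $\bn\in L^\infty(\Gamma)$ since $\partial D$ is piecewise smooth and $\Gamma$ is bounded. Consequently $\mathcal{H}^*$ extends to a Hilbert--Schmidt integral operator from $L^2(\Gamma)^3$ into $L^2(\OOd)^3$, hence is compact as a map $L^2(\Gamma)^3\to L^2(\OOd)^3$. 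Precomposing with the (continuous, in fact compact) embedding $\tilde{H}^{1/2}(\Gamma)^3\hookrightarrow L^2(\Gamma)^3$ from~\eqref{embb} then delivers the compactness of $\mathcal{H}^*:\tilde{H}^{1/2}(\Gamma)^3\to L^2(\OOd)^3$.

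For injectivity, the crucial observation is that, upon comparing~\eqref{Hstar} with the far-field formula~\eqref{vinf2}, $\mathcal{H}^*(\ba)=\bw_p^\infty\oplus\bw_s^\infty$ is precisely the far-field pattern of the elastodynamic double-layer potential $\bw$ with density $\ba$, obtained from the integral representation theorem~\cite{Bon1999} by replacing the crack opening displacement $\llbracket\bv\rrbracket$ with $\ba$. Since $\ba\in\tilde{H}^{1/2}(\Gamma)^3$ extends by zero to an element of $\tilde{H}^{1/2}(\partial D)^3$, the field $\bw$ is a radiating solution of the Navier equation in the connected open set $\R^3\backslash\overline{\Gamma}$. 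Assuming $\mathcal{H}^*(\ba)=\bzero$, one has $\bw_p^\infty=\bw_s^\infty=\bzero$; applying Rellich's lemma~\cite{Col1992} separately to the independently radiating compressional part $\bw^p$ and shear part $\bw^s$ --- exactly as in the proof of Lemma~\ref{LemmaTR} --- together with the unique continuation principle forces $\bw\equiv\bzero$ throughout $\R^3\backslash\overline{\Gamma}$.

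Finally I would invoke the jump relations for the elastodynamic double-layer potential, namely $\llbracket\bw\rrbracket=\ba$ across $\Gamma$ with continuous traction: since $\bw$ vanishes on both faces of $\Gamma$, its displacement jump vanishes, whence $\ba=\llbracket\bw\rrbracket=\bzero$, establishing injectivity. The main obstacle I anticipate lies in this injectivity step, where one must rigorously justify (i) that $\mathcal{H}^*(\ba)$ genuinely coincides with the far-field of a double-layer potential whose density jump recovers $\ba$, interpreting the jump relations in the correct $\tilde{H}^{1/2}(\Gamma)^3$ trace sense, and (ii) that Rellich's lemma and unique continuation apply on the complement of the open surface $\overline{\Gamma}$. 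The compactness part, by contrast, is routine once the smoothness and boundedness of the kernel are noted.
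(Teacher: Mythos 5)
Your proposal is correct and follows essentially the same route as the paper: compactness via the smoothness of the integral kernel (the paper states this directly, you merely flesh it out through the Hilbert--Schmidt property and the embedding $\tilde{H}^{1/2}(\Gamma)^3\hookrightarrow L^2(\Gamma)^3$), and injectivity by recognizing $\mathcal{H}^*(\ba)$ as the far-field pattern of the elastodynamic double-layer potential with density $\ba$, then invoking the Rellich lemma, unique continuation, and the jump relation $\llbracket\bV\rrbracket=\ba$ to force $\ba=\bzero$. The technical caveats you flag at the end are precisely the points the paper's proof also treats as standard and does not elaborate.
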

\begin{proof} 
Integral operator $\mathcal{H}^*$ has a smooth kernel and is therefore compact from $\tilde{H}^{1/2}(\Gamma)^3$ into $L^2(\OOd)^3$. Next, suppose that there exists $\ba \in \tilde{H}^{1/2}(\Gamma)^3$ such that $\mathcal{H^*}(\ba) = \bzero$. In light of~(\ref{vinf}) and~(\ref{vinf2}), it is apparent that $\mathcal{H^*}$ is nothing else but the far-field operator stemming from the double-layer potential
\beq\lb{Dlp}
\bV(\ba)(\bxi) ~=~ \int_{\Gamma} \ba(\by) \cdot  \bfT(\bxi,\by)  \, \text{d}S_{\by}, \qquad 
\bfT(\bxi,\by) ~=~ \bn(\by)\cdot\bSig(\bxi,\by), \qquad \bxi \in \Bo \backslash \Gamma,
\eeq
where 
$\bSig(\bxi,\by)$ is the (third-order) elastodynamic fundamental stress tensor \textcolor{black}{given in~\ref{stress-fund}}. By virtue of definition~(\ref{vinf}), vanishing far-field pattern of $\bV(\ba)$ implies, by the Rellich Lemma and the unique continuation principle, that $\bV(\ba) = \bzero$ in $\R^3 \backslash \Gamma$. Owing to the fundamental jump property of double-layer potentials by which~$\llbracket \bV \rrbracket=\ba$, one obtains $\ba = \bzero$ which guarantees the injectivity of $\mathcal{H}^*$.  
\end{proof}

One additional property that is needed for the analysis of sampling methods is the densness of the range of~$\herg^*$, which is equivalent to the injectivity of $\herg$. Unfortunately the latter cannot be guaranteed in general, and one has to impose this property as an assumption on $\Gamma$ and $\omega$. 

\begin{assumption}\label{Inject-H}
We assume that $\Gamma$ and $\omega$ are such that the Herglotz operator $\herg\!:L^2(\OOd)^3\rightarrow H^{-1/2}(\Gamma)^3$ is injective, i.e. that $\,\mathcal{H}^*\!:\tilde{H}^{1/2}(\Gamma)^3\rightarrow L^2(\OOd)^3$ has a dense range. 
\end{assumption}

The following lemma indicates why we expect that for a given fracture geometry $\Gamma$, Assumption \ref{Inject-H} holds for all $\omega\!>\!0$ \textcolor{black}{possibly} excluding a discrete set of values without finite accumulation points. 

\begin{lemma}\lb{Dense-H}
Assume that \textcolor{black}{$\exs\Gamma$ contains $M\!\geqslant\!1$ (possibly disjoint) analytic surfaces~$\Gamma_m\!\subset\Gamma$, $m=1,\ldots M$, and consider the unique analytic continuation $\partial D_m$ of $\:\Gamma_m$ identifying ``interior'' domain~$D_m\!\subset\mathbb{R}^3$}. Then Assumption~\ref{Inject-H} holds as soon as \textcolor{black}{for any such~$m$, $\omega\!>\!0$} is not a ``Neumann'' eigenfrequency of the Navier equation in $D_m$, i.e. as long as every function $\pff \in H^1(D_m)^3$ satisfying
\beq\lb{uiH}
\begin{aligned}
&\nabla \sip (\bC \colon \! \nabla \pff) \,+\, \rho \exs \omega^2 \pff ~=~ \bzero     \quad  &\textrm{in}~ D_m, \\*[1 mm]
&\bn \sip \bC \colon \! \nxs \nabla  \pff ~=~ \bzero   \quad &\textrm{on}~ \partial D_m
\end{aligned}
\eeq
\textcolor{black}{vanishes identically in $D_m$. Further if~$D_m$ is bounded, the real eigenfrequencies of~\eqref{uiH} form a discrete set.} 
\end{lemma}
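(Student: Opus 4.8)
The plan is to prove the injectivity of $\herg$, which by Assumption~\ref{Inject-H} is equivalent to the density of the range of $\herg^*$. So suppose $\bg\in L^2(\OOd)^3$ satisfies $\herg(\bg)=\bn\sip\bC\colon\!\nabla\pff_\bg=\bzero$ on $\Gamma$, where $\pff_\bg$ is the Herglotz wave function~\eqref{HW}. The starting point is that $\pff_\bg$, being an integral of the plane waves $e^{\textrm{i}k_p\bd\sip\bxi}$ and $e^{\textrm{i}k_s\bd\sip\bxi}$ over the compact set $\OOd$, is an \emph{entire} (hence real-analytic) solution of the Navier equation $\nabla\sip(\bC\colon\!\nabla\pff_\bg)+\rho\exs\omega^2\pff_\bg=\bzero$ throughout $\R^3$. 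In particular its traction $\bn\sip\bC\colon\!\nabla\pff_\bg$ is a real-analytic function on the closed analytic surface $\partial D_m$, whose analytic normal field restricts on $\Gamma_m$ to the normal of $\Gamma$.

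The crux is an analytic-continuation argument carried out on a single piece $\Gamma_m$. The traction vanishes on the open subset $\Gamma_m\subset\partial D_m$, which has positive surface measure; since it is real-analytic on the connected real-analytic manifold $\partial D_m$, the identity theorem for real-analytic functions forces it to vanish on all of $\partial D_m$. Hence $\pff_\bg|_{D_m}\in H^1(D_m)^3$ is a solution of the homogeneous traction problem~\eqref{uiH}. The hypothesis that $\omega$ is not a Neumann eigenfrequency of the Navier operator in $D_m$ then yields $\pff_\bg=\bzero$ in $D_m$.

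Because $\pff_\bg$ is real-analytic on $\R^3$ and vanishes on the open set $D_m$, the unique continuation principle (as used in the proof of Theorem~\ref{maindirect}) propagates this to $\pff_\bg\equiv\bzero$ on $\R^3$. It then remains to deduce $\bg=\bzero$. Applying $(\Delta+k_s^2)$ and $(\Delta+k_p^2)$ as in~\eqref{vpvs} separates $\pff_\bg$ into its compressional part $\int_{\OOd}\bg_p(\bd)\exs e^{\textrm{i}k_p\bd\sip\bxi}\text{d}S_\bd$ and shear part $\int_{\OOd}\bg_s(\bd)\exs e^{\textrm{i}k_s\bd\sip\bxi}\text{d}S_\bd$, so both vanish identically; invoking the classical injectivity of the scalar Herglotz map (e.g.~via the Jacobi--Anger expansion and completeness of spherical harmonics) gives $\bg_p=\bg_s=\bzero$, i.e.~$\bg=\bzero$, so $\herg$ is injective. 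Finally, for bounded $D_m$ the discreteness claim follows from self-adjoint spectral theory: the traction-free elasticity operator $-\nabla\sip(\bC\colon\!\nabla\,\cdot\,)$ has compact resolvent, by the Korn inequality together with the compact embedding $H^1(D_m)\hookrightarrow L^2(D_m)$, so its real, non-negative eigenvalues $\omega^2$ form a discrete sequence accumulating only at $+\infty$, whence the admissible eigenfrequencies are discrete without finite accumulation point.

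The main obstacle is precisely the analytic-continuation step: one must verify that the traction is genuinely real-analytic on the \emph{entire} closed surface $\partial D_m$ --- which relies on the joint analyticity of $\pff_\bg$ and of $\partial D_m$, and on the compatibility of the two normal fields along $\Gamma_m$ --- so that its vanishing on the proper open subset $\Gamma_m$ may be upgraded to vanishing on all of $\partial D_m$; the remaining ingredients (the Neumann eigenfrequency hypothesis, unique continuation, and Herglotz injectivity) are then standard.
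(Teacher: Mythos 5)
Your proposal is correct and follows essentially the same route as the paper's proof: analytic continuation of the vanishing traction from $\Gamma_m$ to all of $\partial D_m$, the ``Neumann'' eigenfrequency hypothesis to force $\pff_\bg=\bzero$ in $D_m$, unique continuation to extend this to $\R^3$, and injectivity of the (scalar) Herglotz map to conclude $\bg=\bzero$. The only difference is that you spell out the steps the paper delegates to citations --- the P/S separation via $(\Delta+k_s^2)$, $(\Delta+k_p^2)$ and the Jacobi--Anger argument for $\bg=\bzero$ (the paper cites the scalar case in~\cite{Col1992}), and a compact-resolvent spectral argument for discreteness (the paper cites~\cite{Kuprad1979}) --- which is a welcome elaboration, not a deviation.
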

\begin{proof}
\textcolor{black}{Let~$\Gamma_m$ denote the $m$th analytic piece of~$\Gamma$. Recalling~\eqref{HW} and invoking the analyticity of $\bn\cdot\bC\colon \!\nabla \pff_\bg$ with respect to the surface coordinates on $\partial D_m$, we deduce that if $\bn \sip \bC\colon \!\nabla \pff_\bg=\bzero$ on $\Gamma_m\subset\partial D_m$} then
$$
\bn \cdot \bC \exs \colon  \!\nabla\pff_\bg =\bzero \quad \mbox{ on}~ \partial D_m.
$$
This means that $\pff_\bg =\bzero$ in $D_m$ since $\omega$ is not a ``Neumann'' eigenvalue of the Navier equation in $D_m$. The unique continuation principle then implies that $\pff_\bg\!=\!\bzero$ in $\R^3$. Accordingly, we deduce that the Herglotz density vanishes, i.e. that~$\bg\!=\!\bzero$ as in the scalar case \cite{Col1992}. \textcolor{black}{The proof of discreteness of the set of real eigenfrequencies characterizing~\eqref{uiH} when~$D_m$ is bounded can be found in~\cite{Kuprad1979}, Chapter~7, Theorem~1.4}. 
\end{proof}

\paragraph*{Properties of the middle operator $T$.}

\begin{lemma}\lb{I{T}>0}
Operator $T \colon H^{-1/2}(\Gamma)^3 \rightarrow \tilde{H}^{1/2}(\Gamma)^3$ in~(\ref{T}) is bounded and satisfies
 \beq\lb{pos-IT}
 \Im \dualGA{\bphi}{T\bphi} <0\, \qquad \forall \exs \bphi \in H^{-1/2}(\Gamma)^3:~ \bphi \neq \bzero.
 \eeq
\end{lemma}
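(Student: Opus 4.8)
The plan is to realize the quantity $\dualGA{\bphi}{T\bphi}$ as an elastodynamic energy balance drawn from the variational formulation~\eqref{Wik-GE}, and then to read off its imaginary part. Boundedness of $T$ comes essentially for free: Theorem~\ref{maindirect} furnishes the bounded solution operator $\bt\ff\mapsto\bv$ from $H^{-1/2}(\Gamma)^3$ into $H^1(\Bo\backslash\Gamma)^3$, and composing it with the continuous jump-trace $\bv\mapsto\dbv$ into $\tilde{H}^{1/2}(\Gamma)^3$ (already invoked in the proof of Theorem~\ref{maindirect}) exhibits $T$ as a composition of bounded maps.

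For the sign of the imaginary part, I would set $\bphi=\bt\ff$, let $\bv$ be the associated radiating solution so that $T\bphi=\dbv$, and test~\eqref{Wik-GE} with $\bw=\bv$. Because $\bC$ is real and enjoys major symmetry, the strain-energy term $\int\nabla\overline{\bv}\colon\bC\colon\nabla\bv$ is real, and so is the inertial term $\rho\exs\omega^2\!\int|\bv|^2$; hence taking imaginary parts annihilates both volume contributions and leaves
\[
\Im\dualGA{\bphi}{T\bphi} ~=~ \dualGA{\Im\bK\sip\dbv}{\dbv} \,-\, \Im\dualBR{\TR(\bv)}{\bv},
\]
where I have used $\Im\dualGA{\bK\sip\dbv}{\dbv}=\dualGA{\Im\bK\sip\dbv}{\dbv}$, which follows from the symmetry of $\bK$ (its real symmetric part contributes a real quantity). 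The first term is $\leqslant 0$ by the admissibility hypothesis $\Im\bK\leqslant\bzero$ of Theorem~\ref{maindirect}, while the second term is $\geqslant 0$ by the sign property of Lemma~\ref{LemmaTR}; together these give $\Im\dualGA{\bphi}{T\bphi}\leqslant 0$.

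The strict inequality is where the substance lies. Assuming $\Im\dualGA{\bphi}{T\bphi}=0$ for some $\bphi\neq\bzero$, the two non-positive contributions above must each vanish; in particular $\Im\dualBR{\TR(\bv)}{\bv}=0$, whence Lemma~\ref{LemmaTR} forces $\bv=\bzero$ on $\partial B_R$. Since $\bv$ also satisfies the radiation condition~\eqref{KS} and the Navier equation in $\R^3\backslash B_R$, well-posedness of the exterior Dirichlet problem defining $\TR$ yields $\bv=\bzero$ in $\R^3\backslash\overline{B_R}$. As $\bv$ solves the constant-coefficient Navier system in the connected open set $\R^3\backslash\Gamma$ and vanishes on the nonempty open subset $\R^3\backslash\overline{B_R}$, the unique continuation principle gives $\bv\equiv\bzero$ in $\R^3\backslash\Gamma$; substituting $\bv=\bzero$ and $\dbv=\bzero$ into the interfacial condition $\bn\cdot\bC\colon\!\nabla\bv=\bK\sip\dbv-\bt\ff$ then returns $\bt\ff=\bphi=\bzero$, a contradiction.

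I expect the last step to be the main obstacle. One must justify that $\R^3\backslash\Gamma$ is connected, so that vanishing in the exterior propagates across the two faces of the open crack $\Gamma$ (this uses that $\Gamma$ is a proper open subset of $\partial D$, so one may pass around $\partial\Gamma$), and one must be careful that the homogeneous Dirichlet data on $\partial B_R$ genuinely transfer, via exterior well-posedness, into vanishing of $\bv$ on a full open set before analytic continuation is invoked. By comparison, the energy identity and the $\leqslant 0$ bound are routine once the duality conventions are fixed.
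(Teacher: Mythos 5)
Your proof is correct and follows essentially the same route as the paper's: testing \eqref{Wik-GE} with $\bw=\bv$ yields exactly the paper's identity \eqref{T-bound}, after which non-positivity follows from Lemma~\ref{LemmaTR} together with $\Im\bK\leqslant\bzero$, and strictness from the uniqueness/unique-continuation chain. The only difference is one of detail: the paper compresses the strict inequality into ``follows immediately'' (leaning on the argument already given in the uniqueness part of Theorem~\ref{maindirect}), whereas you spell it out via exterior uniqueness, continuation across $\partial B_R$ into the connected set $\R^3\backslash\Gamma$, and recovery of $\bphi=\bzero$ from the interfacial condition --- a sound elaboration, not a different method.
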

\begin{proof}
The boundedness of $T$ stems from the well-posedness of problem~\eqref{Wik-GE} and classical trace theorems. Next, let $\bphi \in H^{-1/2}(\Gamma)^3$ and consider $\bv$ satisfying \eqref{Wik-GE} with $\bt\ff= \bphi$. Taking $\bw = \bv$ in \eqref{Wik-GE} we get 
\beq\lb{T-bound}
 \Im \dualGA{\bphi}{T\bphi} ~=~ \dualGA{\Im \bK \sip \dbv}{\llbracket\bv\rrbracket}-\Im \dualBR{\TR(\bv)}{\bv}.
\eeq
By virtue of~\eqref{T-bound}, the claim of the theorem follows immediately from Lemma~\ref{LemmaTR} and earlier hypothesis that $\Im\bK<0$. 
\end{proof}

\begin{lemma}\lb{Cp-Cr}
Operator $T \colon H^{-1/2}(\Gamma)^3 \rightarrow\tilde{H}^{1/2}(\Gamma)^3$ can be decomposed into a compact part $\Tcal_{\!c}$ and a coercive and self-adjoint part $\Tcal\zsub$ such that $T = \Tcal_{\! c} + \Tcal\zsub$. The coercive part $\Tcal\zsub \colon H^{-1/2}(\Gamma)^3 \rightarrow\tilde{H}^{1/2}(\Gamma)^3$ is defined by
\beq\lb{To}
\Tcal\zsub(\bphi) ~:=~ \llbracket \bu\Zsup\rrbracket \quad \text{on}~~ \Gamma,
\eeq 
where $\bu\Zsup \in H^1(\Bo \backslash \Gamma)$ is a solution to
\beq\lb{Wuo}
 \textrm{A}(\bu\Zsup,\bw) = \dualGA{\bphi}{\llbracket\bw \rrbracket} \quad \forall \exs \bw \in H^1(\Bo \backslash \Gamma)^3,
\eeq
$\textrm{A}$ being the coercive sesquilinear form defined by \eqref{cv}.
\end{lemma}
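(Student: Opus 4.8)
The plan is to read off from the proof of Theorem~\ref{maindirect} that the sesquilinear form on the left-hand side of~\eqref{Wik-GE} splits as $\textrm{A}(\bv,\bw)+\text{B}(\bv,\bw)$, with $\textrm{A}$ the coercive form~\eqref{cv} and $\text{B}$ the compact form~\eqref{compact}, and to define $\Tcal\zsub$ through the purely coercive problem~\eqref{Wuo}. First I would verify that $\Tcal\zsub$ is well defined: since $\textrm{A}$ is bounded and coercive on $H^1(\Bo\backslash\Gamma)^3$ (Korn inequality together with the non-negativity of $\TR^0$ from Lemma~\ref{LemmaTR}), and since $\bw\mapsto\dualGA{\bphi}{\llbracket\bw\rrbracket}$ is a bounded antilinear functional by the trace theorem, the Lax--Milgram lemma furnishes a unique $\bu\Zsup\in H^1(\Bo\backslash\Gamma)^3$ depending continuously on $\bphi$. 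Because $\bu\Zsup$ is continuous across $\partial D\backslash\overline{\Gamma}$, its jump is supported in $\overline{\Gamma}$, so $\Tcal\zsub(\bphi)=\llbracket\bu\Zsup\rrbracket\in\tilde{H}^{1/2}(\Gamma)^3$ and $\Tcal\zsub$ is bounded.

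For self-adjointness, denote by $\bu\Zsup_\bphi,\bu\Zsup_\bpsi$ the solutions of~\eqref{Wuo} with data $\bphi,\bpsi$. Testing the equation for $\bu\Zsup_\bphi$ with $\bw=\bu\Zsup_\bpsi$ yields $\dualGA{\bphi}{\Tcal\zsub\bpsi}=\textrm{A}(\bu\Zsup_\bphi,\bu\Zsup_\bpsi)$. Since $\bC$ enjoys the major symmetry and $\TR^0$ is self-adjoint (Lemma~\ref{LemmaTR}), the form $\textrm{A}$ is Hermitian, so this equals $\overline{\textrm{A}(\bu\Zsup_\bpsi,\bu\Zsup_\bphi)}=\overline{\dualGA{\bpsi}{\Tcal\zsub\bphi}}$, which is exactly the self-adjointness of $\Tcal\zsub$ relative to the $\langle H^{-1/2}(\Gamma)^3,\tilde{H}^{1/2}(\Gamma)^3\rangle$ duality. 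In particular $\dualGA{\bphi}{\Tcal\zsub\bphi}=\textrm{A}(\bu\Zsup,\bu\Zsup)$ is real and non-negative.

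The coercivity of $\Tcal\zsub$ is the step I expect to be the main obstacle. From the coercivity of $\textrm{A}$ one has $\dualGA{\bphi}{\Tcal\zsub\bphi}=\textrm{A}(\bu\Zsup,\bu\Zsup)\geqslant\alpha\,\|\bu\Zsup\|^2_{H^1(\Bo\backslash\Gamma)}$, but the claimed coercivity is with respect to $\|\bphi\|_{H^{-1/2}(\Gamma)}$, so this lower bound must be transferred from $\bu\Zsup$ back to $\bphi$. I would use the dual characterization $\|\bphi\|_{H^{-1/2}(\Gamma)}=\sup_{\bpsi\neq\bzero}|\dualGA{\bphi}{\bpsi}|/\|\bpsi\|_{\tilde{H}^{1/2}(\Gamma)}$ together with a bounded right inverse of the jump operator: for $\bpsi\in\tilde{H}^{1/2}(\Gamma)^3$ choose $\bw\in H^1(\Bo\backslash\Gamma)^3$ with $\llbracket\bw\rrbracket=\bpsi$ and $\|\bw\|_{H^1(\Bo\backslash\Gamma)}\leqslant c_1\|\bpsi\|_{\tilde{H}^{1/2}(\Gamma)}$. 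Then $|\dualGA{\bphi}{\bpsi}|=|\textrm{A}(\bu\Zsup,\bw)|\leqslant c_1 c_2\,\|\bu\Zsup\|_{H^1(\Bo\backslash\Gamma)}\|\bpsi\|_{\tilde{H}^{1/2}(\Gamma)}$, with $c_2$ the continuity constant of $\textrm{A}$, whence $\|\bphi\|_{H^{-1/2}(\Gamma)}\leqslant c_1 c_2\,\|\bu\Zsup\|_{H^1(\Bo\backslash\Gamma)}$. Combining with the lower bound above gives $\dualGA{\bphi}{\Tcal\zsub\bphi}\geqslant\alpha\,(c_1 c_2)^{-2}\|\bphi\|^2_{H^{-1/2}(\Gamma)}$, i.e. $\Tcal\zsub$ is coercive; everything else here is Lax--Milgram bookkeeping, while this lifting/duality estimate is the genuine content.

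Finally I would establish the compactness of $\Tcal_{\!c}:=T-\Tcal\zsub$. Writing $\bz:=\bv-\bu\Zsup$, where $\bv$ solves~\eqref{Wik-GE} for data $\bphi$, and subtracting~\eqref{Wuo} from~\eqref{Wik-GE}, one finds $\textrm{A}(\bz,\bw)=-\text{B}(\bv,\bw)$ for all $\bw\in H^1(\Bo\backslash\Gamma)^3$, together with $\Tcal_{\!c}(\bphi)=\llbracket\bz\rrbracket$. The map $\bv\mapsto\bz$ implicitly defined by this relation is compact on $H^1(\Bo\backslash\Gamma)^3$, being the composition of the bounded inverse $\textrm{A}^{-1}$ with the operator $\bv\mapsto\text{B}(\bv,\cdot)$, whose compactness was already obtained in the proof of Theorem~\ref{maindirect} from the compact embeddings $H^1(\Bo\backslash\Gamma)\hookrightarrow L^2(\Bo\backslash\Gamma)$ and $\tilde{H}^{1/2}(\Gamma)\hookrightarrow L^2(\Gamma)$ and the compactness of $\TR+\TR^0$ (Lemma~\ref{LemmaTR}). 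Since $\bphi\mapsto\bv$ is bounded by Theorem~\ref{maindirect}, the composition $\bphi\mapsto\bz$ is compact from $H^{-1/2}(\Gamma)^3$ into $H^1(\Bo\backslash\Gamma)^3$, and composing with the bounded jump operator shows that $\Tcal_{\!c}$ is compact from $H^{-1/2}(\Gamma)^3$ into $\tilde{H}^{1/2}(\Gamma)^3$. This yields the desired decomposition $T=\Tcal_{\!c}+\Tcal\zsub$.
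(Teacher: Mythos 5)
Your proof is correct, and its overall architecture coincides with the paper's: Lax--Milgram well-posedness of \eqref{Wuo}, self-adjointness of $\Tcal\zsub$ inherited from the Hermitian form $\textrm{A}$, and compactness of $\Tcal_{\!c}$ obtained by subtracting \eqref{Wuo} from \eqref{Wik-GE} to get $\textrm{A}(\bv^c,\bw)=-\textrm{B}(\bv,\bw)$ and composing the compact map $\bv\mapsto\bv^c$ with the bounded solution map $\bphi\mapsto\bv$ and the jump trace. The one step where you genuinely depart from the paper is the coercivity estimate, i.e.\ the transfer of the lower bound $\textrm{A}(\bu\Zsup,\bu\Zsup)\geqslant \alpha\norms{\bu\Zsup}^2_{H^1(\Bo\backslash\Gamma)}$ into a bound in terms of $\norms{\bphi}_{H^{-1/2}(\Gamma)}$. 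The paper reads off from \eqref{cv} the strong elastostatic-type problem \eqref{uoG} satisfied by $\bu\Zsup$, observes that $\bphi$ is its Neumann datum on $\Gamma$, and invokes the trace theorem for vector fields with square-integrable divergence~\cite{Monk2003} together with the equation $\nabla\sip(\bC\colon\!\nabla\bu\Zsup)=\bu\Zsup$ to deduce $\norms{\bphi}_{H^{-1/2}(\Gamma)}\leqslant c_1\norms{\bu\Zsup}_{H^1(D)}$. You instead stay entirely at the variational level: the dual characterization of the $H^{-1/2}(\Gamma)$-norm, combined with a bounded right inverse of the jump map $\bw\mapsto\llbracket\bw\rrbracket$, gives $|\dualGA{\bphi}{\bpsi}|=|\textrm{A}(\bu\Zsup,\bw)|\leqslant c_1c_2\norms{\bu\Zsup}_{H^1(\Bo\backslash\Gamma)}\norms{\bpsi}_{\tilde{H}^{1/2}(\Gamma)}$, hence the same bound. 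Both routes are sound; the paper's is shorter but leans on the $H(\mathrm{div})$ trace machinery and the strong form of the PDE, while yours is more self-contained at the price of the lifting fact, which you should justify explicitly since the paper only records the boundedness of the \emph{forward} jump map: given $\bpsi\in\tilde{H}^{1/2}(\Gamma)^3$, extend it by zero to $\partial D$ (bounded precisely because $\bpsi$ lies in the tilde space), lift it to $\bw_-\in H^1(D)^3$ with $\norms{\bw_-}_{H^1(D)}\leqslant c\norms{\bpsi}_{\tilde{H}^{1/2}(\Gamma)}$, and set $\bw=\bzero$ in $\Bo\backslash\overline{D}$; the matching zero traces on $\partial D\backslash\overline{\Gamma}$ ensure $\bw\in H^1(\Bo\backslash\Gamma)^3$ with $\llbracket\bw\rrbracket=\bpsi$.
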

\begin{proof}
We first observe from~\eqref{cv} that 
\beq\lb{uoG}
\begin{aligned}
&\nabla \sip (\bC \colon \! \nabla \bu\Zsup) \,-\, \bu\Zsup ~=~ \bzero     \qquad  &\text{in} \quad \Bo \backslash \Gamma,& \\
& \bn \cdot \bC \colon \! \nxs \nabla  \bu\Zsup  ~=~ \! -\bphi    \qquad &\text{on} \quad \Gamma,& \\
&\bn \cdot \bC \colon \! \nxs \nabla\bu\Zsup ~=~ \textcolor{black}{\mathcal{S}_R(\bu\Zsup)}  \qquad &\text{on} \quad \partial{B}_R&, 
\end{aligned}
\eeq
\textcolor{black}{where $\mathcal{S}_R\!: H^{1/2}(\partial B_R)^3 \to H^{-1/2}(\partial B_R)^3$ is a Dirichlet-to-Neumann operator, $\mathcal{S}_R(\bpsi):= \bn\cdot\bC\colon\!\nabla \btu_{\bpsi}$, stemming from the \emph{elastostatic problem} in $B_{R_\circ}\!\backslash B_R$ with Dirichlet data $\btu_{\bpsi}=\bpsi$ on~$\partial B_{R}$ and homogeneous ``Neumann'' data $\bn\cdot\bC\colon\!\nabla \btu_{\bpsi}=\bzero$ on~$\partial B_{R_\circ}$.}

Using standard trace theorems for vector fields with square-integrable divergence~\cite{Monk2003}, one finds that 
\beq\lb{ToTr}
\norms{\nxs\bphi \nxs}_{H^{-\frac{1}{2}}(\Gamma)}\! \:=\: \norms{\nxs\bn\cdot\bC\colon\!\!\nabla\bu\Zsup\!}_{H^{-\frac{1}{2}}(\Gamma)}
\,\,\leqslant\: \norms{\nxs\bn\cdot\bC\colon\!\!\nabla\bu\Zsup\!}_{H^{-\frac{1}{2}}(\partial D)}
\,\,\leqslant\: c \left(\nxs\norms{\!\nabla\sip(\bC\colon\!\!\nabla \bu\Zsup)\!}_{L^2(D)} + 
\norms{\!(\bC\colon\!\!\nabla\bu\Zsup)\!}_{L^2(D)} \right)
\eeq 
for a positive constant $c$ independent from $\bu\Zsup$. Thanks to the first equation in \eqref{uoG} we then deduce 
$$
\norms{\nxs\bphi \nxs}_{H^{-\frac{1}{2}}(\Gamma)} ~\leqslant~ c_1 \norms{\bu\Zsup \!}_{H^1(D)}
$$
for some $c_1\!>\!0$ independent from $\bu\Zsup$. On taking $\bw=\bu\Zsup$ in \eqref{Wuo}, deploying the coercivity of $A$, \textcolor{black}{and recalling from~\eqref{cv} that $\Im A(\bv,\bv)=0$}, we find
\beq\lb{ToCo}
\dualGA{\bphi}{\Tcal\zsub\bphi} =  \textrm{A}(\bu\Zsup,\bu\Zsup) \geqslant c_2 \norms{\nxs\bphi \nxs}_{H^{-\frac{1}{2}}(\Gamma)}^2
\eeq
for a positive constant $c_2$ independent from $\bphi$, which establishes the coercivity of $\Tcal\zsub$. The self-adjointness of~$\Tcal\zsub$ follows immediately from that of~$\textrm{A}$.

To complete the argument, consider the compactness of $\Tcal_c \colon H^{-1/2}(\Gamma)^3 \rightarrow \tilde{H}^{1/2}(\Gamma)^3$, given by
\[
\Tcal_c(\bphi) \:=\: \llbracket\bv^c\rrbracket, \quad \bv^c \:=\: \bv - \bu\Zsup \quad \text{on}~ \Gamma
\]
where $\bv$ solves~\eqref{Wik-GE}. On subtracting~(\ref{Wuo}) from~(\ref{Wik-GE}) with~$\bt\ff=\bphi$, one finds that
\[
\textrm{A}(\bv^c, \bw) \:= - \textrm{B}(\bv,\bw) \qquad \forall \exs \bw \in H^1(\Bo \backslash \Gamma)^3,
\] 
where $\textrm{A}$ is coercive while $\text{B}$, given by~(\ref{compact}), is compact on $H^1(\Bo \backslash \Gamma)^3$. As a result, the induced mapping $\bv \rightarrow \bv^c$ from $H^1(\Bo \backslash \Gamma)^3$ into $H^1(\Bo \backslash \Gamma)^3$ is \emph{compact}, whereby the compactness of $\Tcal_c$ follows directly from the continuity of $\bv \in H^1(\Bo \backslash \Gamma)^3$ with respect to $\bphi \in H^{-\frac{1}{2}}(\Gamma)^3$ and the trace theorem.
\end{proof}

\textcolor{black}{
\begin{lemma}\lb{T-invs0}
Operator $T \colon H^{-1/2}(\Gamma)^3 \rightarrow \tilde{H}^{1/2}(\Gamma)^3$ has a bounded (and thus continuous) inverse.
\end{lemma}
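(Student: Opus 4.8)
The plan is to read off the conclusion from the structural decomposition of Lemma~\ref{Cp-Cr} together with the strict sign condition of Lemma~\ref{I{T}>0}, via Fredholm theory. First I would promote the coercivity estimate~\eqref{ToCo} into an invertibility statement for the coercive part $\Tcal\zsub$. Since $\tilde{H}^{1/2}(\Gamma)^3$ is, by reflexivity, the dual of $H^{-1/2}(\Gamma)^3$ (see the duality identifications recorded after~\eqref{funS}), the Hermitian form $(\bphi,\bpsi)\mapsto\dualGA{\bphi}{\Tcal\zsub\bpsi}$ on $H^{-1/2}(\Gamma)^3\times H^{-1/2}(\Gamma)^3$ is bounded (because $\Tcal\zsub$ is bounded and the duality pairing is continuous) and, by~\eqref{ToCo}, coercive. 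Because $\Tcal\zsub$ is moreover self-adjoint, the Lax--Milgram theorem then yields a bounded inverse $\Tcal\zsub^{-1}\colon \tilde{H}^{1/2}(\Gamma)^3\to H^{-1/2}(\Gamma)^3$; that is, $\Tcal\zsub$ is an isomorphism onto the target space.

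Next I would factor $T=\Tcal\zsub+\Tcal_{\!c}=\Tcal\zsub\,(I+\Tcal\zsub^{-1}\Tcal_{\!c})$. Here $\Tcal_{\!c}\colon H^{-1/2}(\Gamma)^3\to\tilde{H}^{1/2}(\Gamma)^3$ is compact by Lemma~\ref{Cp-Cr}, and composing with the bounded map $\Tcal\zsub^{-1}$ shows that $\Tcal\zsub^{-1}\Tcal_{\!c}\colon H^{-1/2}(\Gamma)^3\to H^{-1/2}(\Gamma)^3$ is compact. Consequently $I+\Tcal\zsub^{-1}\Tcal_{\!c}$ is a Fredholm operator of index zero, and $T$, being the composition of an isomorphism with such an operator, is itself Fredholm of index zero. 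In particular, $T$ is boundedly invertible the moment it is shown to be injective.

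Injectivity is precisely where the strict sign condition of Lemma~\ref{I{T}>0} enters. Suppose $T\bphi=\bzero$ for some $\bphi\in H^{-1/2}(\Gamma)^3$. Then $\dualGA{\bphi}{T\bphi}=0$, so $\Im\dualGA{\bphi}{T\bphi}=0$, which by the strict inequality~\eqref{pos-IT} forces $\bphi=\bzero$. Hence $T$ is injective; combined with the index-zero Fredholm property this gives surjectivity, so $T$ is a bijection between the Hilbert spaces $H^{-1/2}(\Gamma)^3$ and $\tilde{H}^{1/2}(\Gamma)^3$, and the open mapping theorem guarantees that $T^{-1}$ is bounded.

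The step requiring the most care is the first one: one must verify that the coercivity of the form associated with $\Tcal\zsub$, read through the identification $(H^{-1/2}(\Gamma)^3)'=\tilde{H}^{1/2}(\Gamma)^3$, genuinely produces a bounded inverse \emph{onto} the target space $\tilde{H}^{1/2}(\Gamma)^3$ rather than merely a coercive estimate in $H^{-1/2}(\Gamma)^3$; once the roles of the dual pair are pinned down correctly, the Fredholm and injectivity arguments are routine.
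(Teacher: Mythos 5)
Your proof is correct, but it reaches injectivity by a genuinely different route than the paper. Both arguments establish the Fredholm-of-index-zero property from the splitting $T = \Tcal_{\!c} + \Tcal\zsub$ of Lemma~\ref{Cp-Cr} (the paper simply says this ``follows immediately''; your Lax--Milgram treatment of $\Tcal\zsub$ and the factorization $T=\Tcal\zsub(I+\Tcal\zsub^{-1}\Tcal_{\!c})$ is exactly the argument being left implicit, and your care with the identification $(H^{-1/2}(\Gamma)^3)'=\tilde{H}^{1/2}(\Gamma)^3$ is warranted). Where you diverge is the kernel: you kill it with the strict sign condition \eqref{pos-IT} of Lemma~\ref{I{T}>0}, since $T\bphi=\bzero$ forces $\Im\dualGA{\bphi}{T\bphi}=0$ and hence $\bphi=\bzero$. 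The paper instead represents the scattered field as a double-layer potential with density $T(\bphi)$, so that $T(\bphi)=\bzero$ makes the field vanish identically in $\mathbb{R}^3\backslash\Gamma$, and the interfacial condition in \eqref{GE} (with vanishing jump and vanishing traction) then gives $\bphi=\bzero$. Your route is shorter and purely operator-theoretic, making explicit that dissipativity plus radiation (already packaged into Lemma~\ref{I{T}>0}) is what excludes a kernel; its only exposure is that it inherits whatever delicacy is hidden in the strictness of \eqref{pos-IT} (which, in the paper, itself rests on a Rellich/unique-continuation step). The paper's potential-theoretic route does not invoke the sign condition at all, so it remains valid in settings where that strict inequality is harder to certify, and it identifies concretely the physical content of injectivity: a nonzero traction datum cannot produce a scattered field with vanishing fracture-opening displacement.
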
 
\begin{proof}
The idea is to show that $T$, given by~\eqref{T}, is injective and Fredholm of index zero. The second claim follows immediately from Lemma~\ref{Cp-Cr}.  To demonstrate the injectivity of~\eqref{T}, one may recall a double-layer potential representation of elastodynamic fields solving~\eqref{GE}-\eqref{KS} which demonstrates that for any $\bphi\in H^{-1/2}(\Gamma)$, one has  
\[
\bv(\bphi)(\bxi) ~=~ \int_{\Gamma} T(\bphi) \cdot  \bfT(\bxi,\by)  \, \text{d}S_{\by}, \qquad 
\bfT(\bxi,\by) ~=~ \bn(\by)\cdot\bSig(\bxi,\by), \qquad \bxi \in \mathbb{R}^3 \backslash \Gamma,
\]
where $\dbv=T(\bphi)$ on~$\Gamma$ thanks to the fundamental property of double-layer potentials. Thus, on assuming that there exists $\bphi\in H^{-1/2}(\Gamma)$ so that $T(\bphi)=\bzero$, one finds that $\bv\!=\!\bzero$ in~$\mathbb{R}^3\backslash\Gamma$ and consequently, by the second of~\eqref{GE} and trace theorems, that $\|\bphi\|_{H^{-1/2}(\Gamma)}=\|\bn\sip\bC\!:\!\nabla\bv\|_{H^{-1/2}(\Gamma)}=0$.  
\end{proof}}

\textcolor{black}{
\begin{lemma}\lb{T-coerc0} 
Operator $T \colon H^{-1/2}(\Gamma)^3 \rightarrow \tilde{H}^{1/2}(\Gamma)^3$ is coercive, i.e. there exists constant $c\!>\!0$ independent of~$\bphi$ such that
\beq\lb{co-T0}
|\langle \bphi, \, T (\bphi) \rangle| \,\,\geqslant\,\, \textrm{c} \nxs \norms{\bphi}_{H^{-\frac{1}{2}}(\Gamma)}^2, \qquad  \forall\bphi\in H^{-1/2}(\Gamma)^3.
\eeq 
\end{lemma}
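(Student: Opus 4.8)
The plan is to argue by contradiction, leaning on the splitting $T=\Tcal_{\!c}+\Tcal\zsub$ furnished by Lemma~\ref{Cp-Cr} (compact part $\Tcal_{\!c}$, coercive and self-adjoint part $\Tcal\zsub$ with $\dualGA{\bphi}{\Tcal\zsub\bphi}\geqslant c_2\norms{\bphi}^2_{H^{-\frac12}(\Gamma)}$) together with the strict sign of the imaginary form supplied by Lemma~\ref{I{T}>0}. Suppose the estimate~\eqref{co-T0} fails. Then there exists a sequence $\{\bphi_n\}\subset H^{-1/2}(\Gamma)^3$ with $\norms{\bphi_n}_{H^{-\frac12}(\Gamma)}=1$ and $\dualGA{\bphi_n}{T\bphi_n}\to 0$. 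Since $H^{-1/2}(\Gamma)^3$ is a reflexive (Hilbert) space, the bounded sequence admits a weakly convergent subsequence, still denoted $\bphi_n\rightharpoonup\bphi$.

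The first key step is a weak–strong passage to the limit on the compact part. Because $\Tcal_{\!c}\colon H^{-1/2}(\Gamma)^3\to\tilde H^{1/2}(\Gamma)^3$ is compact, $\Tcal_{\!c}\bphi_n\to\Tcal_{\!c}\bphi$ strongly in $\tilde H^{1/2}(\Gamma)^3$; combined with $\bphi_n\rightharpoonup\bphi$ in the dual space $H^{-1/2}(\Gamma)^3$ this yields
\[
\dualGA{\bphi_n}{\Tcal_{\!c}\bphi_n}\;\longrightarrow\;\dualGA{\bphi}{\Tcal_{\!c}\bphi},
\]
obtained by writing the difference as $\dualGA{\bphi_n}{\Tcal_{\!c}(\bphi_n-\bphi)}$ (bounded by $\norms{\bphi_n}\,\norms{\Tcal_{\!c}(\bphi_n-\bphi)}\to 0$) plus $\dualGA{\bphi_n-\bphi}{\Tcal_{\!c}\bphi}$ (vanishing by weak convergence, since $\Tcal_{\!c}\bphi$ is a fixed element of the predual $\tilde H^{1/2}(\Gamma)^3$).

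Next I would separate the real and imaginary parts. Because $\Tcal\zsub$ is self-adjoint, $\dualGA{\bphi}{\Tcal\zsub\bphi}$ is real, so the imaginary form depends only on the compact part: $\Im\dualGA{\cdot}{T\cdot}=\Im\dualGA{\cdot}{\Tcal_{\!c}\cdot}$. Passing to the limit,
\[
0=\lim_{n\to\infty}\Im\dualGA{\bphi_n}{T\bphi_n}=\Im\dualGA{\bphi}{\Tcal_{\!c}\bphi}=\Im\dualGA{\bphi}{T\bphi},
\]
and the strict negativity in Lemma~\ref{I{T}>0} forces $\bphi=\bzero$. Finally, using $T=\Tcal_{\!c}+\Tcal\zsub$ together with the two limits just established,
\[
\dualGA{\bphi_n}{\Tcal\zsub\bphi_n}=\dualGA{\bphi_n}{T\bphi_n}-\dualGA{\bphi_n}{\Tcal_{\!c}\bphi_n}\;\longrightarrow\;0-\dualGA{\bphi}{\Tcal_{\!c}\bphi}=0,
\]
since $\bphi=\bzero$; yet coercivity of $\Tcal\zsub$ gives $\dualGA{\bphi_n}{\Tcal\zsub\bphi_n}\geqslant c_2\norms{\bphi_n}^2_{H^{-\frac12}(\Gamma)}=c_2>0$ for every $n$, a contradiction. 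Hence~\eqref{co-T0} holds.

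The step requiring the most care — and the main potential obstacle — is the weak–strong limit in the duality pairing: one must verify that weak convergence in $H^{-1/2}(\Gamma)^3$ tests precisely against elements of its predual $\tilde H^{1/2}(\Gamma)^3$, so that $\dualGA{\bphi_n-\bphi}{\Tcal_{\!c}\bphi}\to 0$ is legitimate, and that the sesquilinearity/self-adjointness bookkeeping cleanly isolates $\Im\dualGA{\bphi}{T\bphi}=\Im\dualGA{\bphi}{\Tcal_{\!c}\bphi}$ so that Lemma~\ref{I{T}>0} can be invoked. Everything else is a routine assembly of Lemmas~\ref{I{T}>0} and~\ref{Cp-Cr} with weak compactness in a reflexive space.
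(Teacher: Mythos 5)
Your proof is correct and takes essentially the same approach as the paper: it rests on the same two ingredients, namely the strict sign of $\Im\dualGA{\bphi}{T\bphi}$ from Lemma~\ref{I{T}>0} and the decomposition $T=\Tcal_{\!c}+\Tcal\zsub$ from Lemma~\ref{Cp-Cr}. The only difference is that the paper concludes by citing Lemma~1.17 of~\cite{Kirsch2008} as a black box, whereas your contradiction/weak-compactness argument is precisely the proof of that abstract lemma written out in full, including the correct identification of weak convergence in $H^{-1/2}(\Gamma)^3$ as testing against the predual $\tilde{H}^{1/2}(\Gamma)^3$.
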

\begin{proof}
Lemma~\ref{I{T}>0} demonstrates that the duality product $\big\langle \bphi ,\,  T(\bphi) \big\rangle\in\mathbb{C}\setminus(-\infty,\infty)$ for all nonzero $\bphi\!\in H^{-1/2}(\Gamma)^3$. Due to Lemma~\ref{Cp-Cr}, on the other hand, decomposition $T = \Tcal_{\! c} + \Tcal\zsub$ exists  where $\Tcal_{\! c}$ is compact and~$\Tcal\zsub$ is such that $\langle \exs \bphi,  \Tcal\zsub(\bphi) \exs \rangle\in \mathbb{R}$ satisfies the coercivity condition~(\ref{ToCo}) $\forall \exs \bphi\!\in\!H^{-1/2}(\Gamma)^3$. With such results in place, claim~\eqref{co-T} follows immediately by Lemma~1.17 in~\cite{Kirsch2008}. 
\end{proof}}

\section{Application of sampling methods} \label{SSA}


\subsection{Linear sampling method (LSM)}

The essential idea behind the LSM~\cite{Fiora2003} and also the factorization method (FM)~\cite{Bouk2013} for geometrical obstacle reconstruction stems from the particular nature of an approximate solution, \textcolor{black}{$\bg =\bg_p\oplus\bg_s$}, to the far-field equation 
\beq\lb{FF}
F \bg ~=~ \bPhi_L^\infty, \qquad F ~=~ \mathcal{G} \mathcal{H} ~=~ \mathcal{H}^* \exs T \exs \mathcal{H}, 
\eeq
where $\bPhi_L^\infty$ is the far-field pattern \textcolor{black}{of a trial radiating field}, see Definition~\ref{phi-infinity}. In this setting, the behavior of $\bg$ in the sampling region is exposed by characterizing the range of $\mathcal{G}$ or $\mathcal{H}^*$, \textcolor{black}{which then forms the basis for approximating the characteristic function of a scatterer}. This section presents an adaptation of the key LSM results for the problem of elastic-wave imaging of heterogeneous fractures, which provides a foundation for the GLSM developments in Section~\ref{GLSMM}.   

\begin{defn} \lb{phi-infinity}
\textcolor{black}{With reference to~\eqref{Hstar}, for every admissible FOD profile $\ba\!\in\!\tilde{H}^{1/2}(L)$ specified over a smooth, non-intersecting trial fracture $L\!\subset\!\Bo$, the induced far-field pattern $\bPhi_L^\infty \colon \tilde{H}^{1/2}(L) \rightarrow L^2(\OOd)^3$ is given by} 
\beq\lb{Phi-inf}
\begin{aligned}
\bPhi_{L}^\infty(\ba)(\hat\bxi) ~=~ -  \Big(  \, & \text{\emph{i}} k_p \,  \hat\bxi \exs \int_L \, \Big\lbrace \lambda \exs (\ba \sip \bn) \,+\, 2\mu \exs (\bn \sip \hat\bxi) ( \ba \sip \hat\bxi)  \Big\rbrace  \, e^{-\text{\emph{i}}k_p \hat\bxi \cdot \by} \,\, \text{d}S_{\by}  \\*[1 mm]
 & \textcolor{black}{\oplus}~ \text{\emph{i}} k_s \,  \hat\bxi\times\! \int_L \Big\lbrace   \mu \exs(\ba \times \hat\bxi)(\bn\sip\hat\bxi) \,+\, \mu \exs  (\bn \times \hat\bxi) (\ba \sip \hat\bxi)  \Big\rbrace \, e^{-\textrm{\emph{i}} k_s \hat\bxi \cdot \by} \,\, \text{d}S_{\by} \Big).
\end{aligned}
\eeq
and $\bn$ is the unit normal on~$L$.
\end{defn}
\begin{rem}\lb{LSMrem}
\textcolor{black}{On the basis of Definition~\ref{phi-infinity}, one may interpret the LSM reconstruction philosophy as follows. Let ${\sf L}\!\subset\!\mathbb{R}^3$ (containing the origin) denote a reference fracture surface whose characteristic size is small relative to the length scales describing the forward scattering problem, and let $L=\bz\!+\bR{\sf L}$ where $\bz\!\in\mathbb{R}^3$ and $\bR\!\in\!U(3)$ is a unitary rotation matrix. Given an admissible FOD profile $\ba\!\in\!\tilde{H}^{1/2}({\sf L})$, solving the far-field equation~\eqref{FF} over a grid of trial pairs $(\bz,\bR)$ sampling $\mathbb{R}^3\!\times U(3)$ is simply an effort to probe the far-field kernel~\eqref{w-inf} -- through synthetic rearrangement of the illuminating plane waves -- for fingerprints in terms of~$\bPhi_L^\infty$. As shown by Theorems~\ref{TR2}, \ref{GLSM1} and~\ref{GLSM2}, such fingerprint is found in the data if and only if~$L\subset\Gamma$. Otherwise, the norm of any approximate solution to~(\ref{FF}) can be made arbitrarily large, which then provides a criterion for the reconstruction of~$\Gamma$.}          
\end{rem}

\begin{theorem}\lb{TR1} 
Provided that $\omega$ is \emph{not} a ``Neumann'' eigenvalue of the Navier equation~(\ref{uiH}) and that~$\bK^{-1} \!\! \in \! L^{\infty}(\Gamma)$, for \emph{every} smooth and non-intersecting trial crack $L\subset\Bo$ and some density function $\ba(\bxi)\!\in\!\tilde{H}^{1/2}(L)$, one has
\[
\bPhi_L^\infty \in Range(\mathcal{H}^*) ~~ \iff ~~ L \subset \Gamma.
\]   
\end{theorem}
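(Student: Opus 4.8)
The plan is to reduce the statement to a comparison of two elastodynamic double-layer potentials and to settle it by potential theory. Recall from the proof of Lemma~\ref{H*p} that $\mathcal{H}^*(\ba)$ is precisely the far-field pattern of the double-layer potential $\bV(\ba)(\bxi)=\int_\Gamma \ba(\by)\cdot\bfT(\bxi,\by)\,\text{d}S_{\by}$ carried by $\Gamma$, while Definition~\ref{phi-infinity} makes $\bPhi_L^\infty(\ba)$ the far-field pattern of the \emph{same} double-layer potential but carried by the trial crack $L$; call the latter $\bV_L(\ba)$. Thus $\bPhi_L^\infty(\ba)\in Range(\mathcal{H}^*)$ says exactly that the double-layer far-field over $L$ can be reproduced by a double-layer far-field over $\Gamma$. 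The two ingredients I would invoke are (i) the one-to-one correspondence between a radiating Navier field and its far-field pattern, through the Rellich lemma and the unique continuation principle~\cite{Col1992}, and (ii) the fundamental jump relation $\llbracket\bV_L(\ba)\rrbracket=\ba$ on $L$ already exploited in Lemma~\ref{H*p}.

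For the implication $L\subset\Gamma\Rightarrow\bPhi_L^\infty\in Range(\mathcal{H}^*)$, I would fix an admissible nonzero profile $\ba\in\tilde{H}^{1/2}(L)$ and extend it by zero to $\Gamma$. Since $\ba$ has support compactly inside $\overline{L}\subset\overline{\Gamma}$, the zero-extension $\tilde{\ba}$ lies in $\tilde{H}^{1/2}(\Gamma)^3$, and---$L$ being a subsurface of $\Gamma$ with consistent normal orientation---the two double-layer potentials coincide identically, $\bV(\tilde{\ba})=\bV_L(\ba)$ in $\R^3\backslash\Gamma$. Their far-field patterns therefore agree, i.e. $\mathcal{H}^*(\tilde{\ba})=\bPhi_L^\infty(\ba)$, which is the claim.

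The reverse implication I would establish by contraposition, and this is the delicate part. Assuming $L\not\subset\Gamma$, I would choose $\ba$ not identically zero on a relatively open piece $L_0\subset L$ with $\overline{L_0}\cap\overline{\Gamma}=\emptyset$, and suppose for contradiction that $\bPhi_L^\infty(\ba)=\mathcal{H}^*(\tilde{\ba})$ for some $\tilde{\ba}\in\tilde{H}^{1/2}(\Gamma)^3$. Then $\bV_L(\ba)$ and $\bV(\tilde{\ba})$ share a far-field pattern, so by (i) they coincide throughout the unbounded connected component $\Ocal$ of $\R^3\backslash(\overline{L}\cup\overline{\Gamma})$. Near $L_0$ the field $\bV(\tilde{\ba})$ is real-analytic, its source $\Gamma$ lying at positive distance, whereas both faces of $L_0$ are accessible to $\Ocal$ around the crack edge; hence $\bV_L(\ba)$ inherits matching (indeed analytic) boundary values from the two sides, so its jump across $L_0$ vanishes. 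By (ii) this forces $\ba=0$ on $L_0$, contradicting the choice of $\ba$, and therefore $\bPhi_L^\infty\notin Range(\mathcal{H}^*)$, which closes the contrapositive.

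I expect the genuine obstacles to be geometric and functional-analytic rather than algebraic: carefully verifying that both faces of the excess piece $L_0$ lie in the \emph{single} exterior component $\Ocal$ on which the two potentials are identified (so the vanishing-jump conclusion is legitimate, and not an artifact of $L_0$ separating $\Ocal$), together with the trace-space bookkeeping---the $\tilde{H}^{1/2}$-membership of the zero-extension and the jump identity read in the correct duality---so that ``$\ba=0$ on $L_0$'' is rigorous. The standing hypotheses serve as the ambient framework and play essentially no role in this potential-theoretic core: the contact-law conditions $\bK\in L^\infty$, $\Im\bK\leqslant\bzero$ and $\bK^{-1}\in L^\infty$ secure the well-posedness of Theorem~\ref{maindirect} and the invertibility and coercivity of the middle operator (Lemmas~\ref{T-invs0}--\ref{T-coerc0}), hence, via $\mathcal{G}=\mathcal{H}^*T$, the range identity $Range(\mathcal{G})=Range(\mathcal{H}^*)$ that ties this characterization to the far-field equation downstream, while the requirement that $\omega$ be not a ``Neumann'' eigenfrequency of~\eqref{uiH} furnishes the dense range of $\mathcal{H}^*$ (Lemma~\ref{Dense-H}, Assumption~\ref{Inject-H}) needed to turn the equivalence into a usable sampling criterion.
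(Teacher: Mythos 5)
Your proposal follows essentially the same route as the paper's own proof: the forward implication by zero-extension of the density from $L$ to $\Gamma$ (so that the two double-layer potentials, and hence their far-field patterns, coincide), and the reverse implication by contraposition, identifying the potentials $\bfPsi_{\!L}$ and $\bPhi_{\exs\Gamma}$ in the exterior via the Rellich lemma and unique continuation, then deriving a contradiction at a point of $L$ lying away from $\Gamma$. Where the paper concludes tersely that $\bfPsi_{\!L}$ ``has a singularity'' at that point while the $\Gamma$-potential is analytic nearby, you express the identical fact through the jump relation $\llbracket \bV(\ba)\rrbracket=\ba$ (and you rightly flag the accessibility of both faces of $L$ from the unbounded exterior component as the point needing care, a subtlety the paper passes over silently); this is a more explicit rendering of the same argument, not a different one.
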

\begin{proof}
Consider the following:
\begin{itemize}
\item If~$L \subset \Gamma$, then $\tilde{H}^{1/2}(L)^3 \subset \tilde{H}^{1/2}(\Gamma)^3$. By extending the domain of $\ba\in\tilde{H}^{1/2}(L)^3$ from $L$ to $\Gamma$ through zero padding, one immediately obtains $\bPhi_L^\infty  \in Range(\mathcal{H}^*)$ thanks to~\eqref{Hstar} and~\eqref{Phi-inf}. 

\item Assume that~$L\not\subset\Gamma$ and that $\bPhi_L^\infty\!\in\!Range(\mathcal{H}^*)$. Then there exists $\bb\!\in\!\tilde{H}^{1/2}(\Gamma)^3$ such that 
\beq \notag 
\begin{aligned}
\bPhi_L^\infty(\bb)(\hat\bxi) ~=~ -\Big(& \text{\emph{i}} k_p \,  \hat\bxi \exs \int_\Gamma \, \Big\lbrace \lambda \exs (\bb \sip \bn) \,+\, 2\mu \exs (\bn \sip \hat\bxi) ( \bb \sip \hat\bxi)  \Big\rbrace  e^{-\text{\emph{i}}k_p \hat\bxi \cdot \by} \,\, \text{d}S_{\by} \\ 
 & \textcolor{black}{\oplus}~ \text{\emph{i}} k_s \,  \hat\bxi \exs \times \int_{\Gamma} \Big\lbrace   \mu \exs(\bb \times \hat\bxi)(\bn \sip \hat\bxi) \,+\, \mu \exs  (\bn \times \hat\bxi) (\bb \sip \hat\bxi)  \Big\rbrace \, e^{-\textrm{\emph{i}} k_s \hat\bxi \cdot \by} \,\, \text{d}S_{\by} \Big), 
\end{aligned}
\eeq 
 associated with the layer potential
\beq\lb{Dlpb}
\bPhi_{\exs \Gamma}(\bxi) ~=~ \int_{\Gamma} \bb(\by) \cdot  \bfT(\bxi,\by)  \, \text{d}S_{\by}, \qquad \bfT(\bxi,\by) ~=~ \bn(\by)\cdot\bSig(\bxi,\by), \qquad \bxi \in \Bo \backslash \Gamma.
\eeq
On the other hand, owing to Definition~\ref{phi-infinity} of $\bPhi_L^\infty(\hat\bxi)$, potential $\bPhi_{\exs \Gamma}(\bxi)$ should coincide with 
\beq\lb{Pb}
\bfPsi_{\! L}(\bxi) ~=~ \int_L \ba(\by) \cdot  \bfT(\bxi,\by)  \,\, \text{d}S_{\by},  \qquad \bxi \in \Bo \backslash L,
\eeq
over $ \bxi \in \Bo \backslash (L \cup \Gamma)$. Now, let $\Gamma\not\ni\bxio\!\in L$ and let $\mathcal{B}_\epsilon$ be a small ball centered at $\bxio$ such that $\mathcal{B}_\epsilon \cap \Gamma = \emptyset $. In this case $\bPhi_{\exs \Gamma}$ is analytic in $\mathcal{B}_\epsilon$, while $\bfPsi_{\! L}$ has a singularity at $\bxio\!\in\mathcal{B}_\epsilon$ -- which by contradiction completes the proof. 
\end{itemize}
\end{proof}

On the basis of the above result, one arrives at the following statement which inspires most of the LSM-based indicator functionals.    
\begin{theorem}\lb{TR2}
Under the assumptions of \textcolor{black}{Lemma~\ref{Dense-H} and~Theorem~\ref{TR1}},
\begin{itemize}
\item~If $L\!\subset\!\Gamma$, there exists a Herglotz density vector $\bg_\epsilon^L\!\in L^2(\OOd)^3$ such that $\|F\bg_\epsilon^L-\bPhi_L^\infty\|_{L^2(\OOd)} \leqslant\epsilon$ and $\limsup\limits_{\epsilon \rightarrow 0} \|\mathcal{H}\bg_\epsilon^L\|_{H^{-1/2}(\Gamma)}<\infty$.

\item~If $L \not\subset \Gamma$, then $\forall \bg_\epsilon^L\!\in L^2(\OOd)^3$ such that $\norms{\nxs F\bg_\epsilon^L-\bPhi_L^\infty  \nxs}_{L^2(\OOd)} \, \leqslant\epsilon$, one has $\,\lim\limits_{\epsilon \rightarrow 0} \norms{\mathcal{H}\bg_\epsilon^L}_{H^{-1/2}(\Gamma)} \,\,=\infty$.
\end{itemize}
\end{theorem}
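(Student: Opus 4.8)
The plan is to leverage the symmetric factorization $F=\mathcal{H}^*T\mathcal{H}$ from~\eqref{fact}, the bounded invertibility of the middle operator~$T$ (Lemma~\ref{T-invs0}), and—decisively—the range characterization of Theorem~\ref{TR1}, treating the two alternatives in turn.

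\emph{The case $L\subset\Gamma$.} First I would use Theorem~\ref{TR1} to produce $\bb\in\tilde H^{1/2}(\Gamma)^3$ with $\mathcal{H}^*\bb=\bPhi_L^\infty$, and then set $\bphi_0:=T^{-1}\bb\in H^{-1/2}(\Gamma)^3$ via Lemma~\ref{T-invs0}, so that $\mathcal{H}^*T\bphi_0=\bPhi_L^\infty$. The point is that injectivity of $\mathcal{H}^*$ (Lemma~\ref{H*p}) is equivalent, through the $\langle H^{-1/2}(\Gamma)^3,\tilde H^{1/2}(\Gamma)^3\rangle$ duality, to density of $Range(\mathcal{H})$ in $H^{-1/2}(\Gamma)^3$; hence there is a sequence $\bg_n\in L^2(\OOd)^3$ with $\mathcal{H}\bg_n\to\bphi_0$. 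Continuity of $\mathcal{H}^*T$ then yields $F\bg_n=\mathcal{H}^*T\mathcal{H}\bg_n\to\mathcal{H}^*T\bphi_0=\bPhi_L^\infty$ in $L^2(\OOd)^3$. For each $\epsilon>0$ I would select $\bg_\epsilon^L$ from this sequence so that $\|F\bg_\epsilon^L-\bPhi_L^\infty\|_{L^2(\OOd)}\leqslant\epsilon$; since along this choice $\mathcal{H}\bg_\epsilon^L\to\bphi_0$, the quantities $\|\mathcal{H}\bg_\epsilon^L\|_{H^{-1/2}(\Gamma)}$ converge to $\|\bphi_0\|_{H^{-1/2}(\Gamma)}<\infty$, which gives the required $\limsup$ bound.

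\emph{The case $L\not\subset\Gamma$.} Here I would argue by contradiction. If the claim failed, then along some sequence $\epsilon\to0$ the fields $\bphi_\epsilon:=\mathcal{H}\bg_\epsilon^L$ would remain bounded in the Hilbert space $H^{-1/2}(\Gamma)^3$; extracting a weakly convergent subsequence $\bphi_\epsilon\rightharpoonup\bphi^\star$ and using that $\mathcal{H}^*T$ is compact (the composition of the compact $\mathcal{H}^*$, Lemma~\ref{H*p}, with the bounded $T$), I would pass to the limit to obtain $F\bg_\epsilon^L=\mathcal{H}^*T\bphi_\epsilon\to\mathcal{H}^*T\bphi^\star$ strongly in $L^2(\OOd)^3$. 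Since also $F\bg_\epsilon^L\to\bPhi_L^\infty$, this forces $\bPhi_L^\infty=\mathcal{H}^*(T\bphi^\star)\in Range(\mathcal{H}^*)$, whereupon Theorem~\ref{TR1} yields $L\subset\Gamma$—a contradiction. Consequently no bounded subsequence exists and $\lim_{\epsilon\to0}\|\mathcal{H}\bg_\epsilon^L\|_{H^{-1/2}(\Gamma)}=\infty$.

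The genuinely routine ingredients are the continuity of $\mathcal{H}^*T$ and the weak compactness of bounded sets in $H^{-1/2}(\Gamma)^3$. I expect the main obstacle—really the heart of the theorem—to be the sharp range equivalence of Theorem~\ref{TR1}, which is what converts approximate solvability of the far-field equation~\eqref{FF} into the geometric dichotomy $L\subset\Gamma$ versus $L\not\subset\Gamma$. A related subtlety to watch in the second case is that the limit $\mathcal{H}^*(T\bphi^\star)$ lies genuinely in $Range(\mathcal{H}^*)$, not merely in its closure: this holds because $T\bphi^\star\in\tilde H^{1/2}(\Gamma)^3$, and it is precisely this feature that makes Theorem~\ref{TR1} applicable.
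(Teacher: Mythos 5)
Your proof is correct. For the first bullet you reproduce the paper's argument almost exactly: a preimage of $\bPhi_L^\infty$ under $\mathcal{H}^*$ (Theorem~\ref{TR1}), the bounded inverse $T^{-1}$ (Lemma~\ref{T-invs0}), density of $Range(\mathcal{H})$ in $H^{-1/2}(\Gamma)^3$ from injectivity of $\mathcal{H}^*$ (Lemma~\ref{H*p}), and continuity of $\mathcal{H}^*T$; the only loose point is that an arbitrary selection $\bg_\epsilon^L$ subject solely to the misfit constraint need not satisfy $\mathcal{H}\bg_\epsilon^L\to\bphi_0$, but since it is drawn from a convergent (hence bounded) sequence the $\limsup$ bound survives. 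The second bullet is where you genuinely depart from the paper. The paper works on the $\tilde{H}^{1/2}(\Gamma)^3$ side: invoking Lemma~\ref{Dense-H} (denseness of $Range(\mathcal{H}^*)$), it builds Tikhonov-regularized approximants $\ba_\epsilon^L$ with $\|\mathcal{H}^*\ba_\epsilon^L-\bPhi_L^\infty\|\leqslant\epsilon$ whose norms blow up because $\mathcal{H}^*$ is compact and $\bPhi_L^\infty\notin Range(\mathcal{H}^*)$, and then transfers the blow-up to $\|\mathcal{H}\bg_\epsilon^L\|$ via the continuity of $T^{-1}$. You instead run the standard factorization-style contradiction directly on the given family: boundedness along a subsequence, weak compactness of bounded sets in $H^{-1/2}(\Gamma)^3$, complete continuity of the compact operator $\mathcal{H}^*T$, hence $\bPhi_L^\infty=\mathcal{H}^*(T\bphi^\star)\in Range(\mathcal{H}^*)$, contradicting Theorem~\ref{TR1}. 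Your route is the more airtight of the two: the paper establishes blow-up only for its Tikhonov minimizers, and upgrading that to \emph{every} admissible family $\bg_\epsilon^L$ (which is what the statement asserts) requires exactly your weak-compactness step --- this is what hides behind the paper's phrase ``the same argument as in the first part''. What the paper's route buys in exchange is non-vacuousness: Lemma~\ref{Dense-H} is what guarantees that densities with $\|F\bg_\epsilon^L-\bPhi_L^\infty\|\leqslant\epsilon$ exist at all when $L\not\subset\Gamma$, whereas your argument (correctly, but silently) treats the second bullet as a purely conditional statement and never uses Lemma~\ref{Dense-H}.
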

\begin{proof}
\begin{description}
Let us first assume $L \subset \Gamma$, whereby $\bPhi_L^\infty \in Range(\mathcal{H}^*)$ thanks to Theorem~\ref{TR1}. Then, by definition, there exists $\ba^L\!\in\!\overline{Range(T)}$ such that $\mathcal{H}^*\ba^L = \bPhi_L^\infty$. By invoking \textcolor{black}{Lemma~\ref{T-invs0}} on the boundedness i.e.~continuity of $T^{-1}$ and Lemma~\ref{H*p} which \textcolor{black}{(by the injectivity of~$\mathcal{H}^*$)} guarantees the range denseness of $\mathcal{H}$, one finds that $\forall \epsilon>0$, $\exists \, \bg_\epsilon^L\!\in L^2(\OOd)^3$ such that $\norms{T^{-1} \ba^L\!-\!\mathcal{H}\bg_\epsilon^L}_{H^{-1/2}(\Gamma)} \,\leqslant \epsilon$. \textcolor{black}{Thanks to (i) the continuity of~$\mathcal{H}^* T$ and (ii) the fact that~$\ba^L\!\in\tilde{H}^{1/2}(\Gamma)^3$, this establishes the first part of the claim}. 
 
Next, consider the case where $L\!\not\subset\!\Gamma$ and consequently $\bPhi_L^\infty \not\in Range(\mathcal{H}^*)$ \textcolor{black}{by Theorem~\ref{TR1}}. Then, thanks to \textcolor{black}{Lemma~\ref{Dense-H} which implies the denseness of~$Range(\mathcal{H}^*)$}, \textcolor{black}{for every $\epsilon\!>\!0$ and some regularization parameter $0\!<\!\alpha\!<\!C\epsilon$ where~$C$ is a constant independent of~$\epsilon$}, a nearby solution $\ba^L_\epsilon \in \tilde{H}^{1/2}(\Gamma)^3$ can be built e.g.~via Tikhonov regularization~\cite{Kress1999} such that \textcolor{black}{$\norms{\!\bPhi_L^\infty-\mathcal{H}^* \ba_\epsilon^L\!}_{L^2(\OOd)} \,\leqslant \nxs \epsilon\,$} and $\,\lim\limits_{\epsilon \rightarrow 0} \norms{\nxs \ba^L_\epsilon \nxs}_{\tilde{H}^{1/2}(\Gamma)} \,=\infty$ -- due to the compactness of $\mathcal{H}^*$ \textcolor{black}{established in Lemma~\ref{H*p}}. At this point, the same argument as in the first part of the proof -- deploying the continuity of $T^{-1}$ and the range denseness of~$\mathcal{H}$ -- can be used to show establish the second claim.   
\end{description}
\end{proof}
\vspace{-7 mm}

\subsection{Factorization method (FM)}

To facilitate the ensuing developments, we recall elements of the factorization method~\cite{Kirsch2008} as they pertain to our inverse problem. 

\begin{defn}\lb{DFs}
The self-adjoint operator $F_\sharp \colon  L^2(\OOd)^3  \rightarrow L^2(\OOd)^3$ is defined by 
\beq\lb{Fs}
F_\sharp \,\colon \!\!\!=\, |\Re{F}| \:+\: \Im{F}, 
\vspace{-2 mm}
\eeq  
where $F\!:L^2(\OOd)^3 \to L^2(\OOd)^3$ is given by~\eqref{ffo2}, and 
\beq\lb{ReIm}
\Re{F} \,=\, \tfrac{1}{2} (F+F^*), \qquad \Im{F} \,=\, \tfrac{1}{2 \textrm{\emph{i}}} (F-F^*). 
\eeq
\end{defn}
      
\begin{rem}\lb{TsD}
In line with decomposition~(\ref{fact}) of the far-field operator $F$, there exists factorization
\beq\lb{facts}
F_\sharp ~=~ \mathcal{H}^* \exs T_\sharp \exs \mathcal{H}
\eeq
of~\eqref{Fs}, where the middle operator $T_\sharp \colon H^{-1/2}(\Gamma)^3\rightarrow\tilde{H}^{1/2}(\Gamma)^3$ is given by
\beq\lb{Tsdef}
T_\sharp :=\, \Re{T}(Q^+\!-Q^-) +\, \Im{T};
\eeq
$Q^+$ and $Q^-\!$ are bounded projectors such that $Q^{+}\!+Q^{-}=I$; $Q^{+}\!-Q^{-}$ is an isomorphism, and $Q^-\!$ has a finite rank. See Theorem 2.15 in~\cite{Kirsch2008} for derivation.
\end{rem}

\vspace*{-5mm}
\textcolor{black}{
\begin{theorem}\lb{TR3}
Under the assumptions of Theorem~\ref{TR1}, operator $F_\sharp$ in~\eqref{Fs} has the following properties: 
\begin{itemize}
\item~Operator $F_\sharp$ is positive.
\item~The ranges of~$\mathcal{H}^* \colon \tilde{H}^{1/2}(\Gamma)^3\rightarrow L^2(\OOd)^3$ and $F_\sharp^{1/2}\colon L^2(\OOd)^3\rightarrow L^2(\OOd)^3$ coincide.
\item~$\bPhi_L^\infty \in Range(F_\sharp^{1/2}) ~~ \iff ~~ L \subset \Gamma$.
\end{itemize}
\end{theorem}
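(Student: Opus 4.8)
The plan is to deduce the first two assertions from the abstract range-identity theorem underpinning the Factorization Method (Theorem~2.15 in~\cite{Kirsch2008} and its accompanying range characterization), applied to the symmetric factorization $F=\mathcal{H}^*T\mathcal{H}$ recorded in~\eqref{fact}. To this end I would first verify the hypotheses of that theorem relative to the Gelfand triple $\tilde{H}^{1/2}(\Gamma)^3\subset L^2(\Gamma)^3\subset H^{-1/2}(\Gamma)^3$ with outer operator $\mathcal{H}$: namely that (i) $\mathcal{H}\colon L^2(\OOd)^3\to H^{-1/2}(\Gamma)^3$ is compact, injective, and of dense range; (ii) the real part of the middle operator $T$ is coercive up to a compact perturbation; and (iii) the imaginary part of $T$ is sign-definite. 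Claim~(i) follows by duality from Lemma~\ref{H*p} (compactness and injectivity of $\mathcal{H}^*$, the latter giving denseness of $Range(\mathcal{H})$) together with Assumption~\ref{Inject-H} (injectivity of $\mathcal{H}$). Claim~(ii) is exactly Lemma~\ref{Cp-Cr}: writing $T=\Tcal_c+\Tcal\zsub$ with $\Tcal\zsub$ self-adjoint coercive and $\Tcal_c$ compact, self-adjointness yields $\Re T=\Tcal\zsub+\Re\Tcal_c$, a coercive-plus-compact splitting. Claim~(iii) is Lemma~\ref{I{T}>0}. A short sign computation, $\Im\langle F\bg,\bg\rangle=\Im\langle T\mathcal{H}\bg,\mathcal{H}\bg\rangle=-\Im\langle\mathcal{H}\bg,T\mathcal{H}\bg\rangle>0$ (using the conjugation in the duality pairing and $\bphi=\mathcal{H}\bg$ in Lemma~\ref{I{T}>0}), then confirms that $\Im F$ is nonnegative, so the $+\Im F$ convention in~\eqref{Fs} does produce a positive $F_\sharp$. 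With these checks in place, the abstract theorem delivers at once the positivity of $F_\sharp$ and, through the induced coercive factorization $F_\sharp=\mathcal{H}^*T_\sharp\mathcal{H}$ of Remark~\ref{TsD}, the range identity $Range(\mathcal{H}^*)=Range(F_\sharp^{1/2})$. This settles the first two bullet points.

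The third bullet point is then immediate by concatenation. By the range identity just established, $\bPhi_L^\infty\in Range(F_\sharp^{1/2})$ if and only if $\bPhi_L^\infty\in Range(\mathcal{H}^*)$; and Theorem~\ref{TR1} characterizes the latter membership as being equivalent to $L\subset\Gamma$. Chaining the two equivalences gives $\bPhi_L^\infty\in Range(F_\sharp^{1/2})\iff L\subset\Gamma$, with no further analysis beyond invoking Theorem~\ref{TR1} and the second bullet.

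I expect the principal obstacle to lie not in any single one of these verifications but in the faithful application of the abstract machinery, specifically in confirming that the constructed middle operator $T_\sharp=\Re T\,(Q^+\!-Q^-)+\Im T$ of Remark~\ref{TsD} is genuinely coercive. This is where the interplay between the coercivity of $\Tcal\zsub$ (Lemma~\ref{Cp-Cr}) and the strict sign of $\Im T$ (Lemma~\ref{I{T}>0}) is consumed, via the finite-rank and isomorphism properties of the projectors $Q^\pm$. A secondary but delicate point is the sign bookkeeping between Lemma~\ref{I{T}>0}, stated with $\Im\langle\bphi,T\bphi\rangle<0$, and the plus sign in definition~\eqref{Fs}: one must track the conjugation arising from the duality pairing so that the two conventions are compatible and $F_\sharp$ comes out positive rather than indefinite. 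Once these are secured, both the range identity and the geometric criterion follow mechanically.
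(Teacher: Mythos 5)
Your proposal follows essentially the same route as the paper's proof: the paper likewise settles the first two bullets by invoking the abstract $F_\sharp$ range-identity theorem (Theorem~2.15 of~\cite{Kirsch2008}, together with its extension, Theorem~3.2 of~\cite{Bouk2013}) with hypotheses supplied exactly by Lemma~\ref{H*p}, Lemma~\ref{I{T}>0} and Lemma~\ref{Cp-Cr}, and then obtains the third bullet immediately from Theorem~\ref{TR1}. Your hypothesis-by-hypothesis verification --- including the sign bookkeeping needed to reconcile $\Im\langle\bphi,T\bphi\rangle<0$ with the $+\Im F$ convention in~\eqref{Fs}, and the injectivity of $\mathcal{H}$ via Assumption~\ref{Inject-H} (available under the hypotheses of Theorem~\ref{TR1} through Lemma~\ref{Dense-H}) --- merely makes explicit what the paper's one-sentence proof leaves implicit.
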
 
\begin{proof}
The first two claims follow directly from Theorem~2.15 in~\cite{Kirsch2008}, its extended version  (Theorem~3.2) in~\cite{Bouk2013}, Lemma~\ref{H*p} Lemma~\ref{I{T}>0}, and Lemma~\ref{Cp-Cr}. With such result in place, the last claim is immediately established by Theorem~\ref{TR1}.
\end{proof}}

\vspace*{-5mm}
\textcolor{black}{
\begin{lemma}\lb{T-invs}
Operator~$T_\sharp \colon H^{-1/2}(\Gamma)^3\rightarrow\tilde{H}^{1/2}(\Gamma)^3$ in the factorization~\eqref{facts} has the following properties:
\begin{itemize}
\item~$T_\sharp$ has a bounded (and thus continuous) inverse.
\item~$T_\sharp$ is selfadjoint and is positively coercive, i.e. there exists a constant~$c\!>\!0$ independent of~$\bphi$ so that 
\beq\lb{co-T}
\big(\bphi, \, T_\sharp (\bphi) \big)_{H^{-\frac{1}{2}}(\Gamma)} \,\,\geqslant\,\, c \nxs \norms{\bphi}_{H^{-\frac{1}{2}}(\Gamma)}^2, \qquad
\forall \,\bphi\in H^{-1/2}(\Gamma)^3. 
\eeq 
\end{itemize}
\end{lemma}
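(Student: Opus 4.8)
The plan is to recognize both bullets as instances of the abstract $F_\sharp$-machinery of Kirsch and Grinberg (Theorem~2.15 in~\cite{Kirsch2008}, in the extended form of~\cite{Bouk2013}), whose structural hypotheses on the middle operator $T$ have already been secured, namely the strict sign-definiteness of $\Im\dualGA{\bphi}{T\bphi}$ for $\bphi\neq\bzero$ (Lemma~\ref{I{T}>0}) together with the splitting $T=\Tcal_{\!c}+\Tcal\zsub$ into a compact part and a coercive self-adjoint part (Lemma~\ref{Cp-Cr}). Self-adjointness of $T_\sharp$ is then immediate from~\eqref{Tsdef}: the summand $\Im{T}$ is self-adjoint by construction, while $\Re{T}(Q^+\!-Q^-)$ coincides with the absolute value $\abs{\Re{T}}$ of the self-adjoint operator $\Re{T}$, the projectors $Q^\pm$ of Remark~\ref{TsD} being its spectral projectors onto the non-negative and negative parts of the spectrum. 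I would also record at the outset that the bounded inverse of the first bullet follows from the positive coercivity~\eqref{co-T} of the second: a self-adjoint operator satisfying~\eqref{co-T} is injective with closed range, and self-adjointness upgrades this to surjectivity, so that $T_\sharp^{-1}$ exists and is bounded. The whole lemma thus reduces to establishing~\eqref{co-T}, in a manner parallel to Lemmas~\ref{T-invs0}--\ref{T-coerc0}, the point being that self-adjointness now lets us pass from the magnitude estimate available for $T$ to a genuinely positive-definite one for $T_\sharp$.

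Next I would pin down the spectral picture of $\Re{T}=\tfrac12(T+T^*)$. By Lemma~\ref{Cp-Cr} one has $\Re{T}=\Re\Tcal\zsub+\Re\Tcal_{\!c}$ with $\Re\Tcal\zsub$ coercive self-adjoint and $\Re\Tcal_{\!c}$ compact self-adjoint; hence $\Re{T}$ is self-adjoint, Fredholm of index zero, and its essential spectrum is contained in $[c_2,\infty)$ with $c_2$ the coercivity constant of~\eqref{ToCo}. Consequently the portion of its spectrum lying in $(-\infty,0)$ consists of finitely many eigenvalues of finite multiplicity, the associated spectral projector $Q^-$ is of finite rank, and $Q^+:=I-Q^-$. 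It follows that $\abs{\Re{T}}=\Re{T}(Q^+\!-Q^-)\geqslant 0$ and that the form $\dualGA{\bphi}{\abs{\Re{T}}\bphi}$ is coercive on $(\ker\Re{T})^\perp$ and vanishes on the finite-dimensional subspace $\ker\Re{T}$.

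The heart of the matter, and the step I expect to be the main obstacle, is to recover coercivity precisely on this residual finite-dimensional kernel, where $\abs{\Re{T}}$ offers no control and only the imaginary part is available. Here I would argue by contradiction and compactness: were~\eqref{co-T} to fail, there would exist $\bphi_n$ with $\norms{\bphi_n}_{H^{-1/2}(\Gamma)}=1$ and $\dualGA{\bphi_n}{T_\sharp\bphi_n}\to 0$. By construction the quadratic form of $T_\sharp$ is non-negative (cf.\ the positivity of $F_\sharp$ in Theorem~\ref{TR3}) and dominates both $\dualGA{\bphi_n}{\abs{\Re{T}}\bphi_n}$ and $\abs{\Im\dualGA{\bphi_n}{T\bphi_n}}$, so that each of these tends to zero and no cancellation masks the limit. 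Splitting $\bphi_n=\bphi_n'+\bphi_n''$ along $(\ker\Re{T})^\perp\oplus\ker\Re{T}$, the coercivity of $\abs{\Re{T}}$ forces $\bphi_n'\to\bzero$, while $\bphi_n''$ lives in a finite-dimensional space and hence, along a subsequence, converges strongly to some $\bphi''\in\ker\Re{T}$ with $\norms{\bphi''}_{H^{-1/2}(\Gamma)}=1$. Passing to the limit in the imaginary part yields $\Im\dualGA{\bphi''}{T\bphi''}=0$ with $\bphi''\neq\bzero$, contradicting Lemma~\ref{I{T}>0}.

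This contradiction establishes~\eqref{co-T}, and combined with the reductions of the first paragraph it completes the proof. I would finally note that all spectral statements above are to be read with respect to the $H^{-1/2}(\Gamma)^3$ inner product after the Riesz identification, so that $T_\sharp$ is viewed as an operator on the single space $H^{-1/2}(\Gamma)^3$ through the embedding $\tilde{H}^{1/2}(\Gamma)^3\subset H^{-1/2}(\Gamma)^3$ of~\eqref{embb} — precisely the Gelfand-triple setting in which the abstract result of~\cite{Kirsch2008,Bouk2013} is formulated. Checking that the dual-pairing sign-definiteness of Lemma~\ref{I{T}>0} transfers faithfully to that inner-product picture is the one routine bookkeeping point on which the compactness argument quietly relies.
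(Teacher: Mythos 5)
Your proof is correct, but note that the paper itself offers no argument for this lemma: its ``proof'' is a bare citation of Appendix~A in~\cite{Bouk2013} and part~E of the proof of Theorem~2.15 in~\cite{Kirsch2008}. What you have written is, in substance, exactly the argument contained in those references, specialized to the present setting: the same inputs (the splitting $T=\Tcal_{\!c}+\Tcal\zsub$ of Lemma~\ref{Cp-Cr} and the strict sign of $\Im\dualGA{\bphi}{T\bphi}$ from Lemma~\ref{I{T}>0}), the same spectral construction of $Q^{\pm}$ with $Q^{-}$ of finite rank (essential spectrum of $\Re T$ pushed into $[c_2,\infty)$ by Weyl's theorem), and the same compactness-and-contradiction step in which the loss of control by $\abs{\Re T}$ on its finite-dimensional kernel is compensated by the strict sign-definiteness of the imaginary part. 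So you have not taken a genuinely different route; you have unpacked the citation, which makes your write-up self-contained where the paper's is not, and your reduction of the first bullet (bounded invertibility) to the coercivity~\eqref{co-T} via self-adjointness is the standard and correct way to finish.

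One point does need repair, though it is an inconsistency you inherited from the paper rather than created. Your key inequality -- that $\dualGA{\bphi}{T_\sharp\bphi}$ is non-negative and dominates both $\dualGA{\bphi}{\abs{\Re T}\bphi}$ and $\abs{\Im\dualGA{\bphi}{T\bphi}}$ -- fails under the literal definitions~\eqref{Fs} and~\eqref{Tsdef}: since Lemma~\ref{I{T}>0} gives $\Im\dualGA{\bphi}{T\bphi}<0$, the operator $\Im T$ is negative semi-definite, and $T_\sharp=\abs{\Re T}+\Im T$ as literally written need not be positive at all. The convention actually used in~\cite{Bouk2013} is $F_\sharp=\abs{\Re F}+\abs{\Im F}$, i.e.~here $T_\sharp=\abs{\Re T}-\Im T$; with that reading your non-negativity and both domination claims are immediate, since $T_\sharp$ becomes a sum of two non-negative self-adjoint operators and $\dualGA{\bphi}{\abs{A}\bphi}\geqslant\abs{\dualGA{\bphi}{A\bphi}}$ for self-adjoint $A$ (all read, as you note, after the Riesz identification). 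Relatedly, do not lean on Theorem~\ref{TR3} for the positivity: that theorem only yields non-negativity of the $T_\sharp$-form on $\overline{Range(\mathcal{H})}$ rather than on all of $H^{-1/2}(\Gamma)^3$, and it is itself proved in the paper by citing the very abstract result you are reconstructing, so the reference is circular; the direct argument just indicated is the one to use.
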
 
\begin{proof}
See Appendix~A in~\cite{Bouk2013} and the proof of Theorem 2.15, part E in~\cite{Kirsch2008}. 
\end{proof}}

\textcolor{black}{On the basis of Theorem~\ref{TR2}, one sees that $F_\sharp^{1/2}$ can be used to characterize $\Gamma$ from the far-field measurements. In what follows, it is in particular shown that the GLSM cost functionals based on $F_\sharp$ (i)~are convex, (ii)~have closed-form minimizers, and (iii)~enable fast and robust reconstruction of $\Gamma$ -- especially when the data (and thus the far-field operator) are contaminated by noise.}
\subsection{Generalized Linear Sampling Method (GLSM)} \label{GLSMM}


\noindent Theorem~\ref{TR2} of the linear sampling method poses two fundamental challenges in that:~i) the featured anomaly indicator $\norms{\!\mathcal{H}\bg_\epsilon^L\!}_{H^{-1/2}(\Gamma)}$ inherently depends on the unknown fracture support $\Gamma$, and ii) construction of the Herglotz density vector $\bg_\epsilon^L \in L^2(\OOd)^3$ is implicit in the theorem~\cite{Audibert2014}. Conventionally, these issues are addressed by replacing $\norms{\!\mathcal{H}\bg_\epsilon^L\!}_{H^{-1/2}(\Gamma)}\!$ with $\norms{\nxs\bg_\epsilon^L\!}_{L^2(\OOd)}$ which is, in turn, computed by way of Tikhonov regularization~\cite{Kress1999}. Such treatment, however, has proven to be particularly sensitive to perturbations in the data due to e.g. measurement errors.    

To help meet the challenge, the GLSM takes advantage of the second factorization (\ref{fact}) of the far-field operator and the mathematical properties of its components to properly construct a \emph{stable} approximate solution to the far-field equation (\ref{FF}). This is accomplished through a \emph{sequence} of \textcolor{black}{penalized least-squares problems} where the principal ingredient of the penalty term is $\norms{\!\mathcal{H}\bg_\epsilon^L\!}_{H^{-1/2}(\Gamma)}$, reformulated in a computable way in terms of the far-field operator~$F$. More specifically, by invoking factorizations~(\ref{fact}) and~(\ref{facts}), one may observe that 
\[
\begin{aligned}
&(\bg_\epsilon^L, \exs F\bg_\epsilon^L)_{L^2(\Omega)} \,\exs ~=~ \big \langle \mathcal{H}\bg_\epsilon^L, \, T\mathcal{H}\bg_\epsilon^L \big \rangle_{\Gamma}, \\*[1 mm]
& (\bg_\epsilon^L, \exs F_\sharp \exs \bg_\epsilon^L)_{L^2(\Omega)}  ~=~ \big \langle \mathcal{H}\bg_\epsilon^L, \, T_\sharp \exs \mathcal{H}\bg_\epsilon^L \big \rangle_{\Gamma}, \qquad \forall \bg_\epsilon^L \in L^2(\OOd)^3
\end{aligned}
\]
where~$(\cdot,\cdot)_{L^2(\Omega)}:=(\cdot,\cdot)_{L^2(\Omega)^3}$ denotes the usual~$L^2$ inner product on~$\Omega$. \textcolor{black}{Then, thanks to the coercivity of the middle operator $T$ (see Lemma~\ref{T-coerc0})}, quantity $|(\bg_\epsilon^L,F\bg_\epsilon^L)_{L^2(\Omega)}|$ -- which is computable without prior knowledge of~$\Gamma$ -- may be safely substituted for $\norms{\!\mathcal{H}\bg_\epsilon^L\!}^2_{H^{-1/2}(\Gamma)}$ \textcolor{black}{in constructing a penalty term for the GLSM cost functional. Similarly, the positive coercivity $T_\sharp$ (See Lemma~\ref{T-invs}) and factorization~\eqref{facts} of $F_\sharp$ demonstrate that $|(\bg_\epsilon^L, \, F_\sharp \exs \bg_\epsilon^L)_{L^2(\OOd)}|  =  \,\, \norms{F_\sharp^{1/2}\bg_\epsilon^L}^2$ may serve as a replacement for $\norms{\!\mathcal{H}\bg_\epsilon^L\!}^2_{H^{-1/2}(\Gamma)}$}, giving birth to a \emph{convex} GLSM cost functional whose minimizer can be computed without iterations. This shines light on the GLSM approach to elastodynamic reconstruction of heterogeneous fractures, whose specificities are presented next.

\paragraph*{GLSM cost functional.}  \label{cost}
\begin{itemize}
\item~\emph{Unperturbed (noise-free) operators.}~Let $\alpha\!>\!0$ be a regularization parameter, and consider the far-field pattern~$\bPhi_L^\infty\!\in L^2(\OOd)^3$ as in Definition~\ref{phi-infinity}. Then the GLSM cost functional is defined by a sequence of penalized least-squares misfit functionals $J_\alpha(\bPhi_L^\infty;\, \cdot)\colon \, L^2(\OOd)^3 \rightarrow \mathbb{R}$, namely 
\beq \lb{J-alph}  
J_\alpha(\bPhi_L^\infty; \, \bg) ~ \colon \!\!\! =~ \! \norms{\nxs F\bg\,-\,\bPhi_L^\infty \nxs}^2 \,+\,\,\exs \alpha \nxs \norms{F_\sharp^{\frac{1}{2}} \bg}^2, \qquad \bg \in L^2(\OOd)^3, 
\eeq
whose minimizers~$\bg^L_\alpha \in L^2(\OOd)^3$ can be computed \emph{non-iteratively} by solving 
\beq \lb{min-J}
F^*(F\bg^L_\alpha \,-\, \bPhi_L^\infty)  ~+~  \alpha \exs (F_\sharp^{\frac{1}{2}} )^* F_\sharp^{\frac{1}{2}} \, \bg^L_\alpha~=~ \bzero.
\eeq  
For completeness, a more general form $\mathcal{J}_\alpha(\bPhi_L^\infty; \, \cdot)\colon \, L^2(\OOd)^3 \rightarrow \mathbb{R}$ of the GLSM cost functional, namely 
\beq \lb{fJ-alph}  
\mathcal{J}_\alpha(\bPhi_L^\infty;\, \bg) ~ \colon \!\!\! =~ \!  \norms{\nxs F\bg\,-\,\bPhi_L^\infty \nxs}^2 \,+\,\,\exs \alpha \exs | ( \bg, \exs  F \bg ) | , \qquad \bg \in L^2(\OOd)^3,
\eeq 
is also considered. \textcolor{black}{Note that~\eqref{fJ-alph} does not demand $F_\sharp$ to be applicable (see Theorem~\ref{TR3}), and thus may cater for a wider class of contact laws, $\mathcal{L}\dbv$, over the fracture surface in~(\ref{GE})}.

\begin{rem}\lb{App-sol}
In general, $\mathcal{J}_\alpha(\bPhi_L^\infty;\,\bg)$ does not have a minimizer; however, one may define 
\[
j_\alpha(\bPhi_L^\infty) ~ \colon \!\!\! = \inf\limits_{\bg \in L^2(\OOd)^3} \! \mathcal{J}_\alpha(\bPhi_L^\infty; \, \bg). 
\]
Thanks to the range denseness of $F$ (see Lemma~\ref{FF_op}), one has that $j_\alpha \rightarrow 0$ as $\alpha \rightarrow 0$. Accordingly, an optimized nearby solution can be constructed by following the algorithm described in~\cite{Audibert2014}.
\end{rem}
\item~\emph{Perturbed operators.}~When the measurements are contaminated with noise (e.g.~sensing errors, fluctuations in the medium properties), one has to deal with noisy operators $F^\delta\!$ and $F_\sharp^\delta$ satisfying  
\beq\lb{Ns-op}
\norms{\nxs F^\delta - F \nxs} \,\,\, \leqslant \,\, \delta , \qquad \norms{\nxs F^\delta_\sharp - F_\sharp \nxs} \,\,\, \leqslant \,\, \delta ,
\eeq
where $\delta\!>\!0$ is a measure of perturbation in data -- independent of $F$ and $F_\sharp$. Assuming that $F^\delta\!$ and $F_\sharp^\delta$ are compact, a regularized version $J_\alpha^\delta(\bPhi_L^\infty;\,\cdot)\colon \, L^2(\OOd)^3 \rightarrow \mathbb{R}$ of the GLSM cost functional is defined in spirit of the Tikhonov regularization method as 
\beq  \lb{RJ-alph}
J_\alpha^\delta(\bPhi_L^\infty;\, \bg) ~ \colon \!\!\! =~ \!   \norms{\nxs F^\delta\bg\,-\,\bPhi_L^\infty \nxs}^2 + \,\, \alpha \exs \big(\!\norms{\nxs(F^\delta_\sharp)^{\frac{1}{2}} \exs \bg \nxs}^2 +\,\,\exs \delta  \!  \norms{\nxs \bg \nxs}^2 \! \big), \qquad \bg\in L^2(\OOd)^3.
\eeq
Note that $J_\alpha^\delta$ is again convex and that its minimizer $\bg^L_{\alpha,\delta} \in L^2(\OOd)^3$ solves the linear system 
\beq \lb{min-RJ} 
F^{\delta *}(F^\delta \bg^L_{\alpha,\delta} \,-\, \bPhi_L^\infty) ~+~  \alpha \exs \big( \exs (F_\sharp^\delta)^{\nxs\frac{1}{2}*} (F_\sharp^\delta)^{\nxs\frac{1}{2}} \, \bg^L_{\alpha,\delta} \,+\, \delta \, \bg^L_{\alpha,\delta} \exs \big) ~=~ \bzero.
\eeq
In this vein, the (regularized) cost functional affiliated with the general form~(\ref{fJ-alph}) may be recast as
\beq\lb{RfJ-alph}
\mathcal{J}_\alpha^\delta(\bPhi_L^\infty;\,\bg) ~ \colon \!\!\! =~ \! \norms{\nxs F^\delta\bg\,-\,\bPhi_L^\infty \nxs}^2 \,+~ \alpha \exs \big( \exs | (\bg, \, F^\delta \bg ) | \,+\, \delta \! \norms{\nxs \bg \nxs}^2 \! \big), \qquad \bg\in L^2(\OOd)^3.
\eeq
\begin{rem}
\textcolor{black}{In (\ref{RJ-alph}) and (\ref{RfJ-alph}), $\delta$ signifies both a measure of perturbation in $F$ and  a regularization parameter that, along with $\alpha$, is designed to create a robust fracture indicator functional via a sequence of the GLSM minimizers (see the proof of Theorem~\ref{GLSM2}).}   
\end{rem}
\end{itemize}

With the above definitions in place, the main GLSM theorems are based on the following lemmas.

\begin{lemma}\lb{comp_G}
Operator $\mathcal{G}=\mathcal{H}^* T \colon H^{-1/2}(\Gamma)^3 \rightarrow  L^2(\OOd)^3$ is compact over $H^{-1/2}(\Gamma)^3$. 
\end{lemma}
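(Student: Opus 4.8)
The plan is to exploit the factorization $\mathcal{G} = \mathcal{H}^* T$ that was already derived in the excerpt (just above the statement of Lemma~\ref{H*p}), combined with the mapping properties of the two factors established earlier. First I would recall that Lemma~\ref{I{T}>0} asserts the boundedness of the middle operator $T \colon H^{-1/2}(\Gamma)^3 \rightarrow \tilde{H}^{1/2}(\Gamma)^3$, while Lemma~\ref{H*p} shows that $\mathcal{H}^* \colon \tilde{H}^{1/2}(\Gamma)^3 \rightarrow L^2(\OOd)^3$ is compact. The essential observation is that the range space of $T$ coincides with the domain space of $\mathcal{H}^*$, namely $\tilde{H}^{1/2}(\Gamma)^3$, so that the composition $\mathcal{G} = \mathcal{H}^* T$ is well-defined as a map from $H^{-1/2}(\Gamma)^3$ into $L^2(\OOd)^3$.

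The conclusion then follows from the standard functional-analytic fact that the composition of a bounded operator with a compact operator is compact. Concretely, given a bounded sequence $(\bphi_n) \subset H^{-1/2}(\Gamma)^3$, the continuity of $T$ guarantees that $(T\bphi_n)$ is bounded in $\tilde{H}^{1/2}(\Gamma)^3$; the compactness of $\mathcal{H}^*$ then extracts a subsequence along which $(\mathcal{H}^* T \bphi_n)$ converges in $L^2(\OOd)^3$. Hence $\mathcal{G}$ sends bounded sets to relatively compact sets, which is precisely the compactness of $\mathcal{G}$.

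I do not expect any genuine obstacle here: the entire content of the statement is already packaged into Lemmas~\ref{I{T}>0} and~\ref{H*p}, and the only point requiring verification is that the intermediate space $\tilde{H}^{1/2}(\Gamma)^3$ is common to both factors, which holds by the very construction of the factorization. The proof is therefore essentially immediate once these two earlier results are invoked.
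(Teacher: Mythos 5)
Your proof is correct and follows exactly the same route as the paper: the paper's own argument simply invokes Lemma~\ref{H*p} (compactness of $\mathcal{H}^*$) and Lemma~\ref{I{T}>0} (boundedness of $T$) and concludes via the composition property. Your spelled-out sequence argument is merely a more detailed rendering of the same one-line proof.
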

\begin{proof}
The claim follows immediately from Lemmas~\ref{H*p} and~\ref{I{T}>0} establishing, respectively, the compactness of $\mathcal{H}^*$ and the boundedness of $T$.
 \end{proof}
 \begin{lemma}\lb{FF_op}
The far-field operator $F \colon   L^2(\OOd)^3  \rightarrow  L^2(\OOd)^3$ is injective, compact and, under the assumptions of Lemma~\ref{Dense-H}, has a dense range.  
\end{lemma}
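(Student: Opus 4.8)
The plan is to read off all three properties directly from the second factorization $F = \mathcal{H}^* T \mathcal{H}$ in~\eqref{fact}, combining it with the mapping properties of the individual factors already established in Lemmas~\ref{H*p}, \ref{I{T}>0}, \ref{comp_G} and~\ref{T-invs0}. Compactness is the routine part: since $\mathcal{H}^*$ is compact (Lemma~\ref{H*p}) while $T$ is bounded (Lemma~\ref{I{T}>0}) and $\mathcal{H}$ is bounded, the composition $F = \mathcal{H}^* T \mathcal{H}$ is compact; equivalently, one invokes the compactness of $\mathcal{G} = \mathcal{H}^* T$ from Lemma~\ref{comp_G} and writes $F = \mathcal{G}\mathcal{H}$ as a compact operator composed with a bounded one.

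For injectivity, I would assume $F\bg = \bzero$ and test against $\bg$. The factorization yields the quadratic identity $(\bg, F\bg)_{L^2(\Omega)} = \dualGA{\mathcal{H}\bg}{T\mathcal{H}\bg}$ recorded in Section~\ref{GLSMM}, so the left side vanishing forces $\dualGA{\mathcal{H}\bg}{T\mathcal{H}\bg}=0$. Taking imaginary parts and invoking the strict sign condition $\Im\dualGA{\bphi}{T\bphi} < 0$ for $\bphi \neq \bzero$ from Lemma~\ref{I{T}>0} then forces $\mathcal{H}\bg = \bzero$. The injectivity of $\mathcal{H}$ — which is precisely Assumption~\ref{Inject-H}, guaranteed under the hypotheses of Lemma~\ref{Dense-H} — finally gives $\bg = \bzero$. (This argument in fact shows $\ker F = \ker\mathcal{H}$, so injectivity of $F$ is equivalent to Assumption~\ref{Inject-H}.)

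For the dense-range claim, the cleanest route is to propagate density through the factorization. Since $\mathcal{H}^*$ is injective (Lemma~\ref{H*p}), its adjoint $\mathcal{H}$ has dense range, so $\mathcal{H}(L^2(\Omega)^3)$ is dense in $H^{-1/2}(\Gamma)^3$. Because $T$ possesses a bounded inverse (Lemma~\ref{T-invs0}), it is an isomorphism and hence maps this dense set onto a dense subset of $\tilde{H}^{1/2}(\Gamma)^3$. Applying the bounded operator $\mathcal{H}^*$ and using its continuity, one obtains $\overline{\mathrm{Range}(F)} = \overline{\mathcal{H}^*\big(T\mathcal{H}(L^2(\Omega)^3)\big)} = \overline{\mathrm{Range}(\mathcal{H}^*)}$, which equals $L^2(\Omega)^3$ exactly because the density of $\mathrm{Range}(\mathcal{H}^*)$ is the content of Assumption~\ref{Inject-H} (valid under Lemma~\ref{Dense-H}). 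An equivalent argument shows $F^*$ injective by writing $F^* = \mathcal{H}^* T^* \mathcal{H}$ and using $\Im\dualGA{\bphi}{T^*\bphi} = -\Im\dualGA{\bphi}{T\bphi} > 0$, then repeating the imaginary-part argument of the injectivity step.

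I expect the only genuinely nontrivial point to be this dense-range claim: it is the sole property that truly requires Assumption~\ref{Inject-H}, and the care lies in verifying that density is preserved both under the isomorphism $T$ and under the continuous map $\mathcal{H}^*$, and in recognizing that $\overline{\mathrm{Range}(\mathcal{H}^*)} = L^2(\Omega)^3$ is exactly that assumption. Compactness and injectivity are, by contrast, essentially bookkeeping built on top of the previously established lemmas.
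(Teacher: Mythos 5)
Your proposal is correct, and on two of the three claims it takes a genuinely different route from the paper. Compactness is handled identically (compact $\mathcal{H}^*$ composed with bounded $T$ and $\mathcal{H}$). For injectivity, however, the paper argues through the injectivity of the individual factors: $F\bg=\bzero$ plus injectivity of $\mathcal{H}^*$ (Lemma~\ref{H*p}) and of $T$ (Lemma~\ref{T-invs0}) forces $\mathcal{H}\bg=\bzero$, and then the hypotheses of Lemma~\ref{Dense-H} give $\bg=\bzero$; you instead test $F\bg$ against $\bg$ and use the strict sign condition $\Im\dualGA{\bphi}{T\bphi}<0$ of Lemma~\ref{I{T}>0} to kill $\mathcal{H}\bg$ -- an energy-type argument that, as you note, exhibits $\ker F=\ker\mathcal{H}$ and makes explicit that injectivity of $F$ is itself contingent on Assumption~\ref{Inject-H} (a dependence the lemma's wording somewhat obscures but the paper's own proof also uses). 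For the dense range, the paper proves injectivity of $F^*$ via the reciprocity identity of Lemma~\ref{recip}, which yields the explicit relation $F^*(\ba)(\bd)=\overline{F}(\tilde\ba)(-\bd)$ with $\tilde\ba(\hat\bxi)=\overline{\ba}(-\hat\bxi)$, and then invokes the already-established injectivity of $F$; you instead propagate density through the factorization -- $\mathcal{H}$ has dense range because $\mathcal{H}^*$ is injective, $T$ is an isomorphism by Lemma~\ref{T-invs0}, and $\overline{\mathrm{Range}(\mathcal{H}^*)}=L^2(\OOd)^3$ is precisely Assumption~\ref{Inject-H} -- so that $\overline{\mathrm{Range}(F)}=\overline{\mathrm{Range}(\mathcal{H}^*)}$. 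Your route is self-contained within the factorization framework and avoids the reciprocity machinery of Appendix~C entirely, at the price of needing the full invertibility of $T$; the paper's route needs only injectivity of $F$ but buys the independently useful closed-form expression for $F^*$, which is the classical far-field-operator argument. One caution: your side remark that $F^*=\mathcal{H}^*T^*\mathcal{H}$ with $\Im\dualGA{\bphi}{T^*\bphi}>0$ glosses over the conjugation conventions of the sesquilinear duality pairings (what exactly $T^*$ and $(\mathcal{H}^*)^*$ denote here); since that alternative is not needed for your main density-propagation argument, it is a loose end rather than a gap.
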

\begin{proof}
\emph{Injectivity.}~Let $F(\bg) = \bzero$. \textcolor{black} {Then, recalling the factorization $F = \mathcal{H}^* T \mathcal{H}$ and the injectivity of $\mathcal{H}^*$ and~$T$  (due respectively to Lemma~\ref{Dense-H} and Lemma~\ref{T-invs0}), one finds that $\mathcal{H}(\bg):= \bn\cdot\bC\colon\!\nabla\pff_\bg=\bzero$ on~$\Gamma$. Under the assumptions of Lemma~\ref{Dense-H}, this requires that $\pff_\bg=\bzero$ in~$\mathbb{R}^3$, i.e. that~$\bg=\bzero$ which establishes the first claim.} 

\emph{Compactness.}~ The compactness of~$F$ follows immediately from the compactness of  $\mathcal{H}^*$ -- and thus that of $\mathcal{H}$ (Lemma~\ref{H*p}), and the boundedness of~$T$ (Lemma~\ref{I{T}>0}). 

\emph{Range densenes.}~\textcolor{black}{This claim is conveniently verified  by establishing the injectivity of~$F^*$. To this end,  recall~(\ref{ffo2}) and consider the $L^2$-inner product
\beq\lb{Dp-F*}
\big(F(\bg),\ba\big)_{L^2(\Omega)} ~=~ \int_{\Omega} \bar\ba(\hat\bxi) \cdot \bv_{\bg_\Omega}^\infty(\hat\bxi) \, \text{d}S_{\hat\bxi} ~=~ \int_{\OOd} \bg(\bd) \cdot \overline{\int_{\Omega} \bW^{\infty*}(\bd,\hat\bxi)\cdot \ba(\hat\bxi) \, \text{d}S_{\hat\bxi}} \,\, \text{d}S_{\bd},
\eeq
where~$\ba\in L^2(\Omega)^3$. Thanks to the reciprocity identity~(\ref{W-recip}), inner product~\eqref{Dp-F*} exposes the adjoint far-field operator as 
\beq\lb{F*}
 F^*(\ba)(\bd) ~=~ \int_{\Omega} \bW^{\infty*}(\bd,\hat\bxi)\cdot \ba(\hat\bxi) \, \text{d}S_{\hat\bxi} ~=~
\overline{\int_{\Omega} \bW^{\infty}(\hat\bxi,-\bd)\cdot \overline\ba(-\hat\bxi) \, \text{d}S_{\hat\bxi}} ~=~ \overline{F}(\tilde{\ba})(-\bd), \quad~ \bd\in\Omega,
\eeq
where $\tilde\ba(\hat\bxi)\!:=\!\overline{\ba}(-\hat\bxi)$ on~$\Omega$. Owing to the injectivity of $F$, one finds from~\eqref{F*} that setting $F^*(\ba)\!=\!\bzero$ necessitates~$\tilde\ba=\bzero$ and thus $\ba=\bzero$.}
\end{proof}

We are now in position to establish the main result of the GLSM approach, given by Theorem~\ref{GLSM1} and Theorem~\ref{GLSM2}, catering for the elastodynamic reconstruction of heterogeneous fractures.

\begin{theorem} \lb{GLSM1}
\textcolor{black}{Consider the GLSM cost functional $\mathfrak{J}_\alpha$ unifying~\eqref{J-alph} and~\eqref{fJ-alph} with unperturbed operators $F^\delta$ and~$F_\sharp^\delta$, namely} 
\beq\lb{GCf}
\mathfrak{J}_\alpha(\bPhi_L^\infty;\,\bg) ~:=~ \norms{\nxs F\bg\,-\,\bPhi_L^\infty \nxs}_{L^2(\Omega)}^2 \,+~ \alpha\,|(\bg, B\bg)|, \qquad \bg\in L^2(\OOd)^3,
\eeq
\textcolor{black}{where $\alpha>0$ and~$B$, denoting either~$F$ or~$F_\sharp$, admits the factorization}  
\beq\lb{Bdf}
B ~=~ \mathcal{H^*} \exs \mathfrak{T} \exs \mathcal{H}, \qquad \mathfrak{T} ~=~ T,\,T_\sharp.
\eeq
Since  $\mathfrak{J}_\alpha\geqslant 0$, define the infimum 
\[
\mathfrak{j}_\alpha(\bPhi_L^\infty) ~\colon\!\!=~\! \inf\limits_{\bg \in L^2(\OOd)^3} \! \mathfrak{J}_\alpha(\bPhi_L^\infty;\bg), 
\]
and let $\bg_\alpha^L \in L^2(\OOd)^3$ denote a nearby solution such that
\[
\mathfrak{J}_{\alpha}(\bPhi_L^\infty;\bg_\alpha^L) \,\,\leqslant \,\,  \mathfrak{j}_\alpha(\bPhi_L^\infty) + \mu \alpha, 
\]
$\mu>0$ being a constant independent of $\alpha$. Then,
\[
 \bPhi_L^\infty \in Range(\mathcal{H}^*) ~\iff~ \Big\{\limsup\limits_{\alpha \rightarrow 0} |( \bg_\alpha^L,  B \bg_\alpha^L ) | < \infty ~\iff~ \liminf\limits_{\alpha \rightarrow 0} |( \bg_\alpha^L,  B \bg_\alpha^L ) | < \infty\Big\}.
 \]
\end{theorem}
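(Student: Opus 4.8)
The plan is to establish the stated three-way equivalence through a cycle of implications: (i)~$\bPhi_L^\infty\in Range(\mathcal{H}^*)\Rightarrow \limsup_{\alpha\to0}|(\bg_\alpha^L,B\bg_\alpha^L)|<\infty$; (ii)~the trivial step $\limsup<\infty\Rightarrow\liminf<\infty$ (since $\liminf\le\limsup$ always); and (iii)~$\liminf_{\alpha\to0}|(\bg_\alpha^L,B\bg_\alpha^L)|<\infty\Rightarrow\bPhi_L^\infty\in Range(\mathcal{H}^*)$. Two ingredients underpin both nontrivial directions. First, the factorization~\eqref{Bdf} together with the coercivity of the middle operator (Lemma~\ref{T-coerc0} for $\mathfrak{T}=T$, Lemma~\ref{T-invs} for $\mathfrak{T}=T_\sharp$) yields the lower bound $|(\bg,B\bg)|=|\langle\mathcal{H}\bg,\mathfrak{T}\mathcal{H}\bg\rangle_\Gamma|\ge c\,\|\mathcal{H}\bg\|_{H^{-1/2}(\Gamma)}^2$. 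Second, the range denseness of $F$ (Lemma~\ref{FF_op}) guarantees $\mathfrak{j}_\alpha(\bPhi_L^\infty)\to0$ as $\alpha\to0$: for any $\eps>0$ one picks $\bg$ with $\|F\bg-\bPhi_L^\infty\|\le\eps$, whence $\mathfrak{j}_\alpha\le\eps^2+\alpha|(\bg,B\bg)|$, and lets $\alpha\to0$ then $\eps\to0$. Consequently $\mathfrak{J}_\alpha(\bPhi_L^\infty;\bg_\alpha^L)\le\mathfrak{j}_\alpha+\mu\alpha\to0$, so $F\bg_\alpha^L\to\bPhi_L^\infty$ strongly along \emph{every} sequence $\alpha\to0$.

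For direction (i), I would write $\bPhi_L^\infty=\mathcal{H}^*\ba$ with $\ba\in\tilde{H}^{1/2}(\Gamma)^3$ and set $\psi:=\mathfrak{T}^{-1}\ba\in H^{-1/2}(\Gamma)^3$, using the invertibility of $\mathfrak{T}$ (Lemma~\ref{T-invs0}, resp.\ Lemma~\ref{T-invs}). Since $\mathcal{H}$ has dense range in $H^{-1/2}(\Gamma)^3$ — equivalently $\mathcal{H}^*$ is injective (Lemma~\ref{H*p}) — one may choose, for each $\alpha$, a density $\bg$ with $\|\mathcal{H}\bg-\psi\|_{H^{-1/2}(\Gamma)}\le\sqrt{\alpha}$. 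Then $\|F\bg-\bPhi_L^\infty\|=\|\mathcal{H}^*\mathfrak{T}(\mathcal{H}\bg-\psi)\|\le C\sqrt{\alpha}$, while $|(\bg,B\bg)|=|\langle\mathcal{H}\bg,\mathfrak{T}\mathcal{H}\bg\rangle_\Gamma|\le\|\mathfrak{T}\|(\|\psi\|+\sqrt\alpha)^2$ stays bounded, so $\mathfrak{J}_\alpha(\bPhi_L^\infty;\bg)\le C'\alpha$ for small $\alpha$. Hence $\mathfrak{j}_\alpha\le C'\alpha$ and $\mathfrak{J}_\alpha(\bPhi_L^\infty;\bg_\alpha^L)\le(C'+\mu)\alpha$; reading off the penalty term gives $|(\bg_\alpha^L,B\bg_\alpha^L)|\le C'+\mu$ uniformly, i.e.\ $\limsup_{\alpha\to0}|(\bg_\alpha^L,B\bg_\alpha^L)|<\infty$.

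For direction (iii), suppose $\liminf_{\alpha\to0}|(\bg_\alpha^L,B\bg_\alpha^L)|<\infty$ and extract $\alpha_n\to0$ along which this quantity is bounded. The coercivity bound then makes $\{\mathcal{H}\bg_{\alpha_n}^L\}$ bounded in the Hilbert space $H^{-1/2}(\Gamma)^3$, so a subsequence (not relabeled) converges weakly, $\mathcal{H}\bg_{\alpha_n}^L\rightharpoonup\phi$. Because $\mathfrak{T}$ is bounded (Lemma~\ref{I{T}>0}, resp.\ Lemma~\ref{T-invs}) it is weakly continuous, giving $\mathfrak{T}\mathcal{H}\bg_{\alpha_n}^L\rightharpoonup\mathfrak{T}\phi$, and because $\mathcal{H}^*$ is compact (Lemma~\ref{H*p}) it maps this weakly convergent sequence to a strongly convergent one: $F\bg_{\alpha_n}^L=\mathcal{H}^*\mathfrak{T}\mathcal{H}\bg_{\alpha_n}^L\to\mathcal{H}^*(\mathfrak{T}\phi)$. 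Comparing with the already established strong limit $F\bg_{\alpha_n}^L\to\bPhi_L^\infty$ forces $\bPhi_L^\infty=\mathcal{H}^*(\mathfrak{T}\phi)\in Range(\mathcal{H}^*)$. Together with the trivial step (ii), the cycle closes and all three conditions are equivalent.

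The main obstacle is the passage to the limit in direction (iii): boundedness of the penalty only controls $\mathcal{H}\bg_{\alpha_n}^L$ weakly, and one cannot expect $\bg_{\alpha_n}^L$ itself (or even $\mathcal{H}\bg_{\alpha_n}^L$) to converge strongly. The argument hinges on splitting $F=\mathcal{H}^*\mathfrak{T}\mathcal{H}$ so that the compact factor $\mathcal{H}^*$ absorbs the lack of strong convergence — upgrading weak convergence to strong convergence of the image — while simultaneously relying on $\mathfrak{j}_\alpha\to0$ (hence $F\bg_{\alpha_n}^L\to\bPhi_L^\infty$) to identify the limit. Particular care must be taken that this strong convergence of $F\bg_\alpha^L$ holds \emph{independently} of whether $\bPhi_L^\infty\in Range(\mathcal{H}^*)$, which is precisely why the denseness of $Range(F)$ (Lemma~\ref{FF_op}) is invoked rather than the range characterization itself.
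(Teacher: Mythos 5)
Your overall strategy is the same as the paper's (Appendix~E): the same three implications, the same use of $\mathfrak{j}_\alpha(\bPhi_L^\infty)\to 0$ via range denseness of $F$ (Lemma~\ref{FF_op}), and the same coercivity--plus--compactness argument for the converse direction. For the case $B=F$ (i.e.\ $\mathfrak{T}=T$) your proof is correct. There is, however, a genuine error in the case $B=F_\sharp$ --- precisely the case the paper exploits for its non-iterative indicator: you repeatedly use the identity $F=\mathcal{H}^*\mathfrak{T}\mathcal{H}$ with $\mathfrak{T}=T_\sharp$, which is false. The factorizations are $F=\mathcal{H}^*T\mathcal{H}$ (see~\eqref{fact}) and $F_\sharp=\mathcal{H}^*T_\sharp\mathcal{H}$ (see~\eqref{facts}), and the data-misfit term in~\eqref{GCf} always involves $F$, never $B$. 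Concretely, in your direction (i) you set $\psi:=\mathfrak{T}^{-1}\ba$ and claim $\norms{F\bg-\bPhi_L^\infty}=\norms{\mathcal{H}^*\mathfrak{T}(\mathcal{H}\bg-\psi)}\leqslant C\sqrt{\alpha}$. With $\mathfrak{T}=T_\sharp$ this fails: choosing $\mathcal{H}\bg\approx T_\sharp^{-1}\ba$ makes $F_\sharp\exs\bg\approx\bPhi_L^\infty$, but gives $F\bg\approx\mathcal{H}^*T\exs T_\sharp^{-1}\ba$, which differs from $\bPhi_L^\infty=\mathcal{H}^*\ba$ in general (by injectivity of $\mathcal{H}^*$, equality would force $T_\sharp^{-1}\ba=T^{-1}\ba$). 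So with your choice of $\psi$ the misfit term is not small, and the key bound $\mathfrak{j}_\alpha\leqslant C'\alpha$ is not established for $B=F_\sharp$.

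The repair is exactly what the paper does: decouple the two roles played by the middle operators. In direction (i), always take $\psi=T^{-1}\ba$ (invertibility of $T$, Lemma~\ref{T-invs0}) and pick $\bg$ with $\norms{\mathcal{H}\bg-\psi}$ small; the misfit is then controlled through $F=\mathcal{H}^*T\mathcal{H}$, while the penalty is controlled by mere \emph{boundedness} of $\mathfrak{T}$, namely $|(\bg,B\bg)|=|\langle\mathcal{H}\bg,\mathfrak{T}\mathcal{H}\bg\rangle_\Gamma|\leqslant\norms{\mathfrak{T}}\norms{\mathcal{H}\bg}^2$, which stays bounded without any need for $B\bg$ to approach $\bPhi_L^\infty$. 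Symmetrically, in your direction (iii) the coercivity of $\mathfrak{T}$ (whichever one) is used only to bound $\norms{\mathcal{H}\bg_{\alpha_n}^L}$; the strong limit must then be computed through $F=\mathcal{H}^*T\mathcal{H}$, so that boundedness of $T$ plus compactness of $\mathcal{H}^*$ (Lemma~\ref{H*p}) yield $F\bg_{\alpha_n}^L\to\mathcal{H}^*(T\phi)$, hence $\bPhi_L^\infty=\mathcal{H}^*(T\phi)\in Range(\mathcal{H}^*)$. Your direction (iii) survives verbatim once $\mathfrak{T}$ is replaced by $T$ in that last step; your direction (i) needs the corrected choice of $\psi$. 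With these substitutions your argument coincides with the paper's proof.
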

\begin{proof} See the proof of Theorem~3 in~\cite{Audibert2014}, synthesized in~\ref{GLSM*pruf} using present notation. 
\end{proof}

\begin{lemma}\lb{min-Jad}
Consider the regularized GLSM cost functional $\mathfrak{J}_\alpha^\delta$ unifying~\eqref{RJ-alph} and~\eqref{RfJ-alph} with perturbed operators $F^\delta$ and~$F_\sharp^\delta$, namely  
\beq\lb{GCfn}
\mathfrak{J}_\alpha^\delta(\bPhi_L^\infty;\bg) ~:=~  \norms{\nxs F^\delta\bg\,-\,\bPhi_L^\infty \nxs}^2 \,+~ \alpha \exs \big( \exs | ( \bg, B^\delta \bg ) | \,+\, \delta \! \norms{\nxs \bg \nxs}^2 \! \big), \qquad \bg \in L^2(\OOd)^3 
\eeq
where $\alpha,\beta\!>\!0$ and $B^\delta$ denotes either~$F^\delta$ or~$F^\delta_\sharp$.  Assuming that~$B^\delta$ is compact, $\mathfrak{J}_\alpha^\delta$ has a minimizer $\bg_{\alpha,\delta}^L \in L^2(\OOd)^3$ satisfying
\beq\lb{limlim}
\lim\limits_{\alpha \rightarrow 0} \limsup\limits_{\delta \rightarrow 0} \exs \mathfrak{J}_\alpha^\delta(\bPhi_L^\infty;\,\bg_{\alpha,\delta}^L) ~=~ 0.
\eeq
\end{lemma}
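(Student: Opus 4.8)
The plan is to establish the two assertions separately: first the existence of a minimizer $\bg_{\alpha,\delta}^L$ for fixed $\alpha,\delta>0$, and then the iterated-limit identity~\eqref{limlim}. For existence I would invoke the direct method of the calculus of variations. The key observation is that the Tikhonov-type term $\alpha\delta\norms{\bg}^2$ renders $\mathfrak{J}_\alpha^\delta$ coercive on $L^2(\OOd)^3$, since the remaining two contributions are non-negative; hence any minimizing sequence $\{\bg_n\}$ is bounded and admits a weakly convergent subsequence $\bg_n\rightharpoonup\bg^\star$. To pass to the limit I would use the assumed compactness of $F^\delta$ and $B^\delta$: compact operators carry weakly convergent sequences to strongly convergent ones, so $F^\delta\bg_n\to F^\delta\bg^\star$ and $B^\delta\bg_n\to B^\delta\bg^\star$ strongly. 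This makes the misfit term $\norms{F^\delta\bg-\bPhi_L^\infty}^2$ and the pairing $|(\bg,B^\delta\bg)|$ \emph{weakly continuous} along the sequence (the latter because the inner product of a weakly convergent sequence against a strongly convergent one converges), while $\norms{\bg}^2$ is weakly lower semicontinuous. The functional $\mathfrak{J}_\alpha^\delta$ is therefore weakly sequentially lower semicontinuous and coercive, so a minimizer $\bg_{\alpha,\delta}^L$ exists. Convexity is not needed here, which is essential since the general form carrying $|(\bg,F^\delta\bg)|$ need not be convex.

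For~\eqref{limlim} I would exploit minimality together with the operator-perturbation bounds~\eqref{Ns-op}. Fix any test density $\bg\in L^2(\OOd)^3$ independent of $\delta$. Minimality gives $\mathfrak{J}_\alpha^\delta(\bPhi_L^\infty;\bg_{\alpha,\delta}^L)\leqslant\mathfrak{J}_\alpha^\delta(\bPhi_L^\infty;\bg)$. As $\delta\to0$, the bounds $\norms{F^\delta-F}\leqslant\delta$ and $\norms{B^\delta-B}\leqslant\delta$ yield $\norms{F^\delta\bg-\bPhi_L^\infty}^2\to\norms{F\bg-\bPhi_L^\infty}^2$ and $|(\bg,B^\delta\bg)|\to|(\bg,B\bg)|$, while $\delta\norms{\bg}^2\to0$; hence $\lim_{\delta\to0}\mathfrak{J}_\alpha^\delta(\bPhi_L^\infty;\bg)=\mathfrak{J}_\alpha(\bPhi_L^\infty;\bg)$, the unperturbed functional~\eqref{GCf}. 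Since the resulting inequality $\limsup_{\delta\to0}\mathfrak{J}_\alpha^\delta(\bPhi_L^\infty;\bg_{\alpha,\delta}^L)\leqslant\mathfrak{J}_\alpha(\bPhi_L^\infty;\bg)$ holds for \emph{every} fixed $\bg$, I would take the infimum over $\bg$ on the right to obtain $\limsup_{\delta\to0}\mathfrak{J}_\alpha^\delta(\bPhi_L^\infty;\bg_{\alpha,\delta}^L)\leqslant\mathfrak{j}_\alpha(\bPhi_L^\infty)$.

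It then remains to show $\mathfrak{j}_\alpha(\bPhi_L^\infty)\to0$ as $\alpha\to0$, which is the content of Remark~\ref{App-sol}. Here I would use the range denseness of $F$ established in Lemma~\ref{FF_op}: given $\eps>0$, pick $\bg_\eps$ with $\norms{F\bg_\eps-\bPhi_L^\infty}\leqslant\eps$, so that $\mathfrak{j}_\alpha\leqslant\eps^2+\alpha\,|(\bg_\eps,B\bg_\eps)|$ and therefore $\limsup_{\alpha\to0}\mathfrak{j}_\alpha\leqslant\eps^2$; letting $\eps\to0$ and using $\mathfrak{j}_\alpha\geqslant0$ gives $\lim_{\alpha\to0}\mathfrak{j}_\alpha=0$. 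Combining the two estimates yields $0\leqslant\lim_{\alpha\to0}\limsup_{\delta\to0}\mathfrak{J}_\alpha^\delta(\bPhi_L^\infty;\bg_{\alpha,\delta}^L)\leqslant\lim_{\alpha\to0}\mathfrak{j}_\alpha=0$, which is~\eqref{limlim}. I expect the main obstacle to be the weak-continuity step in the existence argument — specifically, verifying that the indefinite quadratic pairing $|(\bg,B^\delta\bg)|$, which is neither a norm nor convex in the general case $B^\delta=F^\delta$, is stable under weak convergence; this is precisely where the compactness hypothesis on $B^\delta$ is indispensable and is the only place one cannot fall back on convexity.
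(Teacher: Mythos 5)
Your proof is correct and follows essentially the same route as the paper's: the direct method for existence (coercivity supplied by the $\alpha\delta\norms{\bg}^2$ term, a weakly convergent minimizing subsequence, compactness of $B^\delta$ to pass to the limit in the quadratic pairing, weak lower semicontinuity of the norm), and, for \eqref{limlim}, comparison of $\mathfrak{J}_\alpha^\delta(\bPhi_L^\infty;\bg_{\alpha,\delta}^L)$ with the unperturbed functional via \eqref{Ns-op} combined with $\mathfrak{j}_\alpha\rightarrow 0$ from the range denseness of $F$ (Lemma~\ref{FF_op}). If anything, your ordering of the comparison step --- fixing the test density $\bg$ independently of $\delta$, taking $\limsup_{\delta\rightarrow 0}$, and only then infimizing over $\bg$ --- is slightly tidier than the paper's use of a $\delta$-dependent near-minimizer $\bg_{\alpha,\delta}$ of $\mathfrak{J}_\alpha$, whose bracketed term the paper asserts to be bounded in $\delta$ without justifying that $\norms{\bg_{\alpha,\delta}}$ stays bounded as $\delta\rightarrow 0$.
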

\begin{proof}
\emph{Existence of a minimizer.}~For any $\alpha, \delta > 0$ and $\bPhi_L^\infty\!\in L^2(\OOd)^3$ given by~\eqref{Phi-inf}, any sequence $(\bg^n)$ constructed to minimize $\mathfrak{J}_\alpha^\delta$ is bounded in $L^2(\OOd)^3$, and thus weakly convergent to some $\bg_{\alpha,\delta}^L \in L^2(\OOd)^3$. Thanks to the lower semi-continuity of a norm with respect to the weak convergence and the postulated compactness of $B^\delta$, one has
\beq\lb{ming}
\mathfrak{J}_\alpha^\delta(\bPhi_L^\infty;\bg_{\alpha,\delta}^L) \,\, \leqslant \,\, \liminf\limits_{n \rightarrow \infty} \exs \mathfrak{J}_\alpha^\delta(\bPhi_L^\infty;\bg^n) \,\, \leqslant \,\, \!\! \inf\limits_{\bg \in L^2(\OOd)^3} \! \mathfrak{J}_\alpha^\delta(\bPhi_L^\infty;\bg),
\eeq
which proves that $\bg_{\alpha,\delta}^L$ is a minimizer of $\mathfrak{J}_\alpha^\delta(\bPhi_L^\infty;\bg)$ in~$L^2(\OOd)^3$. 

\emph{Limiting behavior.} \textcolor{black}{ Let us first observe from~\eqref{Ns-op}, \eqref{GCf} and~\eqref{GCfn} that
\beq\lb{JadJa}
\mathfrak{J}_\alpha^\delta(\bPhi_L^\infty;\bg) \,\,\leqslant \,\, \mathfrak{J}_\alpha(\bPhi_L^\infty;\bg) ~+~ 
\delta \big\{2 \alpha\|\bg\|^2 \,+\, \delta\|\bg\|^2  \,+\, 2\|F\bg-\bPhi_L^\infty\|\exs \|\bg\|\big\}, \qquad \forall \bg \in L^2(\OOd)^3.
\eeq
For any $\delta\!>\!0$ ($\alpha$ fixed), on can chose  $\bg_{\alpha,\delta}$ such that $|\mathfrak{J}_\alpha(\bPhi_L^\infty;\bg_{\alpha,\delta})- \mathfrak{j}_\alpha(\bPhi_L^\infty)| \, \leqslant \, \delta$. Then by the definition of~$\bg_{\alpha,\delta}^L$ one finds via triangle inequality that
\[
\mathfrak{J}_\alpha^\delta(\bPhi_L^\infty;\bg_{\alpha,\delta}^L) \, \leqslant \
\mathfrak{J}_\alpha^\delta(\bPhi_L^\infty;\bg_{\alpha,\delta}) \, \leqslant \, \mathfrak{j}_\alpha(\bPhi_L^\infty) + 
\delta \big\{1+ 2 \alpha\|\bg_{\alpha,\delta}\|^2 \,+\, \delta\|\bg_{\alpha,\delta}\|^2  \,+\, 2\|F\bg_{\alpha,\delta}-\bPhi_L^\infty\|\exs \|\bg_{\alpha,\delta}\|\big\}. 
\]
The proof of~\eqref{limlim} is now completed by noting that (i) given~$\alpha$, the term inside the brackets is bounded for any~$\delta$, and (ii) $\lim\limits_{\alpha\to 0} \mathfrak{j}_\alpha=0$. }
\end{proof}

\begin{theorem}\lb{GLSM2}
Under the assumptions of Theorem~\ref{GLSM1} and an additional hypothesis that $B^\delta$ (denoting either~$F^\delta$ or~$F^\delta_\sharp$) is compact, one has 
\begin{multline}\notag
 \bPhi_L^\infty \,\in\, Range(\mathcal{H}^*)  ~~\iff~~ 
\Big\{\limsup\limits_{\alpha \rightarrow 0}\limsup\limits_{\delta \rightarrow 0}\big(\exs |(\bg_{\alpha,\delta}^L, B^\delta \bg_{\alpha,\delta}^L)| \,+\, \delta \! \norms{\! \bg_{\alpha,\delta}^L \!}^2 \nxs \big) \,<\, \infty  \\  
\iff~ \liminf\limits_{\alpha \rightarrow 0}\liminf\limits_{\delta \rightarrow 0}\big(\exs |( \bg_{\alpha,\delta}^L, B^\delta \bg_{\alpha,\delta}^L)| \,+\, \delta \! \norms{\! \bg_{\alpha,\delta}^L \!}^2 \nxs \big) \,<\, \infty\Big\}, 
\end{multline}
where~$\bg_{\alpha,\delta}^L$ is a minimizer of the perturbed GLSM cost functional~\eqref{GCfn} in the sense of~\eqref{limlim}. 
\end{theorem}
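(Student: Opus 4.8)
The plan is to establish the equivalence of the three conditions by proving the cycle $\bPhi_L^\infty\!\in Range(\mathcal{H}^*)\Rightarrow(\mathrm{B})\Rightarrow(\mathrm{C})\Rightarrow\bPhi_L^\infty\!\in Range(\mathcal{H}^*)$, where $(\mathrm{B})$ and $(\mathrm{C})$ abbreviate the $\limsup$ and $\liminf$ conditions; the implication $(\mathrm{B})\Rightarrow(\mathrm{C})$ is immediate since $\liminf\leqslant\limsup$. The two workhorses throughout are the factorization $B=\mathcal{H}^*\mathfrak{T}\mathcal{H}$ of~\eqref{Bdf} together with the coercivity of the middle operator $\mathfrak{T}$ (Lemma~\ref{T-coerc0} when $\mathfrak{T}=T$, and Lemma~\ref{T-invs} when $\mathfrak{T}=T_\sharp$), which I would use in the form $|(\bg,B\bg)|=|\langle\mathcal{H}\bg,\mathfrak{T}\mathcal{H}\bg\rangle_{\Gamma}|\geqslant c\,\|\mathcal{H}\bg\|_{H^{-1/2}(\Gamma)}^2$, and the perturbation bounds~\eqref{Ns-op}, which yield $|(\bg,(B^\delta\!-B)\bg)|\leqslant\delta\|\bg\|^2$ and $\|(F^\delta\!-F)\bg\|\leqslant\delta\|\bg\|$.

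For the forward implication I would, for each fixed $\alpha$, build a trial density. Writing $\bPhi_L^\infty=\mathcal{H}^*\ba$ with $\ba\in\tilde{H}^{1/2}(\Gamma)^3$ and using the bounded invertibility of $\mathfrak{T}$ (Lemmas~\ref{T-invs0} and~\ref{T-invs}) to set $\ba=\mathfrak{T}(\mathfrak{T}^{-1}\ba)$, the dense range of $\mathcal{H}$ (Assumption~\ref{Inject-H}) provides $\bg_\eta$ with $\|\mathcal{H}\bg_\eta-\mathfrak{T}^{-1}\ba\|_{H^{-1/2}(\Gamma)}\leqslant\eta$; this gives $\|F\bg_\eta-\bPhi_L^\infty\|\leqslant C\eta$ while $|(\bg_\eta,B\bg_\eta)|\leqslant C(\|\mathfrak{T}^{-1}\ba\|+\eta)^2$ stays bounded. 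Minimality of $\bg_{\alpha,\delta}^L$ for~\eqref{GCfn}, after discarding the nonnegative misfit term, then yields $|(\bg_{\alpha,\delta}^L,B^\delta\bg_{\alpha,\delta}^L)|+\delta\|\bg_{\alpha,\delta}^L\|^2\leqslant\alpha^{-1}\mathfrak{J}_\alpha^\delta(\bPhi_L^\infty;\bg_\eta)$. Sending $\delta\to0$ kills the $\delta$-terms in $\mathfrak{J}_\alpha^\delta(\bPhi_L^\infty;\bg_\eta)$ by~\eqref{Ns-op}, so the right-hand side tends to $\alpha^{-1}\|F\bg_\eta-\bPhi_L^\infty\|^2+|(\bg_\eta,B\bg_\eta)|$; choosing $\eta=\eta(\alpha)$ with $\|F\bg_{\eta(\alpha)}-\bPhi_L^\infty\|^2=o(\alpha)$ (e.g.\ $\eta(\alpha)\sim\alpha$) makes the first term vanish as $\alpha\to0$ while the second stays bounded, which establishes $(\mathrm{B})$.

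For the reverse implication $(\mathrm{C})\Rightarrow\bPhi_L^\infty\!\in Range(\mathcal{H}^*)$ --- the crux of the theorem --- finiteness of the iterated $\liminf$ supplies, via a diagonal extraction, sequences $\alpha_n\to0$ and $\delta_n\to0$ along which $|(\bg_{\alpha_n,\delta_n}^L,B^{\delta_n}\bg_{\alpha_n,\delta_n}^L)|+\delta_n\|\bg_{\alpha_n,\delta_n}^L\|^2\leqslant C$, and along which, thanks to~\eqref{limlim} of Lemma~\ref{min-Jad}, the misfit $\|F^{\delta_n}\bg_{\alpha_n,\delta_n}^L-\bPhi_L^\infty\|\to0$. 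Combining the indicator bound with the perturbation estimate and the coercivity of $\mathfrak{T}$ gives $c\,\|\mathcal{H}\bg_{\alpha_n,\delta_n}^L\|_{H^{-1/2}(\Gamma)}^2\leqslant|(\bg_{\alpha_n,\delta_n}^L,B\bg_{\alpha_n,\delta_n}^L)|\leqslant C+\delta_n\|\bg_{\alpha_n,\delta_n}^L\|^2\leqslant2C$, so the traces $\psi_n:=\mathcal{H}\bg_{\alpha_n,\delta_n}^L$ are bounded in $H^{-1/2}(\Gamma)^3$ and, along a further subsequence, converge weakly to some $\psi$. Since $\delta_n\|\bg_{\alpha_n,\delta_n}^L\|^2\leqslant C$ forces $\delta_n\|\bg_{\alpha_n,\delta_n}^L\|\leqslant\sqrt{C\delta_n}\to0$, the bound $\|(F^{\delta_n}\!-F)\bg\|\leqslant\delta_n\|\bg\|$ lets me replace $F^{\delta_n}$ by the exact operator, so $F\bg_{\alpha_n,\delta_n}^L=\mathcal{H}^*T\psi_n\to\bPhi_L^\infty$ strongly.

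Finally I would pass to the limit in $\mathcal{H}^*T\psi_n$: since $T$ is bounded (Lemma~\ref{I{T}>0}) and $\mathcal{H}^*$ is compact (Lemma~\ref{H*p}), the weak convergence $\psi_n\rightharpoonup\psi$ upgrades to strong convergence $\mathcal{H}^*T\psi_n\to\mathcal{H}^*T\psi$, whence $\bPhi_L^\infty=\mathcal{H}^*(T\psi)$ with $T\psi\in\tilde{H}^{1/2}(\Gamma)^3$, i.e.\ $\bPhi_L^\infty\in Range(\mathcal{H}^*)$. I expect the main obstacle to be precisely this reverse direction: one must (i) convert the bound on the computable indicator $|(\bg,B^\delta\bg)|+\delta\|\bg\|^2$ into a bound on $\|\mathcal{H}\bg\|_{H^{-1/2}(\Gamma)}$ despite the noise corrupting the factorization of $B^\delta$, and (ii) arrange the noise scaling $\delta_n\|\bg\|\to0$ so that the limiting identity is driven by the exact far-field operator $F$ rather than $F^{\delta_n}$, enabling the compactness of $\mathcal{H}^*$ to place $\bPhi_L^\infty$ in the range.
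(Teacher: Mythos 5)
Your overall architecture --- the cycle $\bPhi_L^\infty\!\in Range(\mathcal{H}^*)\Rightarrow$ limsup-finite $\Rightarrow$ liminf-finite $\Rightarrow\bPhi_L^\infty\!\in Range(\mathcal{H}^*)$, with the forward step built on an explicit trial density plus minimality of $\bg^L_{\alpha,\delta}$, and the reverse step built on coercivity of $\mathfrak{T}$, weak compactness of the bounded traces $\mathcal{H}\bg^L_{\alpha_n,\delta_n}$ in $H^{-1/2}(\Gamma)^3$, compactness of $\mathcal{H}^*$, and uniqueness of limits --- is exactly that of the paper's proof in~\ref{GLSM*pruf}. Your reverse direction is in fact slightly more careful than the printed one: you make explicit both the replacement of $F^{\delta_n}$ by $F$ via $\delta_n\norms{\bg_n}\leqslant\sqrt{C\delta_n}\to 0$, and the fact that the limiting object must be written with the middle operator $T$ (so that it is $F\bg_n$, not $B\bg_n$, that converges to $\bPhi_L^\infty$); the paper leaves both points implicit.

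There is, however, a genuine error in your forward direction, and it occurs precisely in the case the theorem is designed for, $B=F_\sharp$. You tune the trial density to the middle operator of $B$: you choose $\bg_\eta$ with $\norms{\mathcal{H}\bg_\eta-\mathfrak{T}^{-1}\ba}_{H^{-1/2}(\Gamma)}\leqslant\eta$ and then claim $\norms{F\bg_\eta-\bPhi_L^\infty}\leqslant C\eta$. But the data-misfit term of~\eqref{GCfn} is always $\norms{F^\delta\bg-\bPhi_L^\infty}^2$ with $F=\mathcal{H}^*T\mathcal{H}$, no matter which operator $B^\delta$ enters the penalty. With your choice and $\mathfrak{T}=T_\sharp$, continuity of $\mathcal{H}^*T$ gives $F\bg_\eta\to\mathcal{H}^*T\,T_\sharp^{-1}\ba$ as $\eta\to0$, and since $T\,T_\sharp^{-1}$ is not the identity and $\mathcal{H}^*$ is injective (Lemma~\ref{H*p}), there is no reason for this limit to equal $\mathcal{H}^*\ba=\bPhi_L^\infty$. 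The misfit therefore does not vanish, and your bound $\alpha^{-1}\mathfrak{J}_\alpha^\delta(\bPhi_L^\infty;\bg_\eta)$ blows up as $\alpha\to0$, so the limsup-finiteness is not established for $B=F_\sharp$ (for $B=F$, i.e. $\mathfrak{T}=T$, your construction is correct and coincides with the paper's). The repair --- which is what the paper does --- is to decouple the two roles of the middle operators: always build the trial density from $T$, i.e. pick $\bg_\eta$ with $\norms{\mathcal{H}\bg_\eta-T^{-1}\ba}\leqslant\eta$ (using Lemma~\ref{T-invs0} and the dense range of $\mathcal{H}$), so that the misfit obeys $\norms{F\bg_\eta-\bPhi_L^\infty}\leqslant\norms{\mathcal{G}}\eta$ by continuity of $\mathcal{G}=\mathcal{H}^*T$ (Lemma~\ref{comp_G}), and then control the penalty using only the \emph{boundedness} of $\mathfrak{T}$ (its invertibility is never needed here): $|(\bg_\eta,B\bg_\eta)|\leqslant\norms{\mathfrak{T}}\norms{\mathcal{H}\bg_\eta}^2\leqslant\norms{\mathfrak{T}}\,(\norms{T^{-1}\ba}+\eta)^2$. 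With this substitution the remainder of your argument (minimality, $\delta\to0$ at fixed $\alpha$, then $\eta(\alpha)\sim\alpha$) goes through verbatim.
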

\begin{proof} See the proof of Theorem~5 in~\cite{Audibert2014}, also summarized in~\ref{GLSM*pruf}. 
\end{proof}
\vspace{-7 mm}
\subsubsection{The GLSM criteria for imaging heterogeneous fractures} \label{GLSM_C}

On the basis of Theorem~\ref{GLSM2}, a robust GLSM-based criterion for the elastic-wave reconstruction of heterogeneous fractures can be designed as
\beq\lb{GLSMg}
I^{\mathcal{G}}(L) \,\, \colon \!\!\! = \,\, \dfrac{1}{\sqrt{|( \bg_{\alpha,\delta}^L, B^\delta \bg_{\alpha,\delta}^L)| \exs+\exs \delta \! \norms{ \bg_{\alpha,\delta}^L \!}^2}}, \qquad  B^\delta ~=~ F^\delta, F^\delta_\sharp,
\eeq
where $\bg_{\alpha,\delta}^L$ is a minimizer of~\eqref{GCfn} in the sense of~\eqref{limlim}. In this setting, it is particularly instructive to focus on the case where $B^\delta = F^\delta_\sharp$, since $\bg_{\alpha,\delta}^L$ in this case can be obtained \emph{non-iteratively} by explicitly solving~(\ref{min-RJ}). Accordingly, the GLSM indicator functional used is the sequel is taken as 
\beq\lb{GLSMgs}
I^{{\mathcal{G}}_\sharp}(L) \,\, = \,\, \dfrac{1}{\sqrt{\norms{\!(F^\delta_\sharp)^{\frac{1}{2}} \exs \bg_{\alpha,\delta}^L \nxs}^2 \exs+\,\, \delta \! \norms{ \bg_{\alpha,\delta}^L \!}^2}}. 
\eeq      
For future reference, let us also recall the classical LSM/FM solution $\bg_\epsilon^L \in L^2(\OOd)^3$ (see Theorem~\ref{TR2}) obtained by way of Tikhonov regularization~\cite{Kress1999}, namely 
\beq\lb{LSMg}
\textcolor{black}{
\bg_\epsilon^L   \,\, \colon \!\!\! = \,\, \min_{\bg \in L^2(\OOd)^3}  \big\lbrace \! \norms{F^\delta \bg \,-\, \bPhi_L^\infty}^2 + \,\, \beta \! \norms{\bg}^2 \! \big\rbrace,} 
\eeq
where~$\beta$ is a regularization parameter computable by the Morozov discrepancy principle.

\begin{rem}
It is worth noting that the GLSM characterization of $\Gamma$ from the far-field data (via the range of $F$) is deeply rooted in geometrical considerations, so that the fracture indicator functionals~(\ref{GLSMg}) and~(\ref{GLSMgs}) may exhibit only a minor dependence on its heterogeneous contact condition -- given by the distribution of~$\bK$ on~$\Gamma$. This behavior can be traced back to Remark~\ref{LSMrem}, where the opening displacement profile $\ba \in \tilde{H}^{1/2}(L)$ -- intimately related to the interface law -- is deemed arbitrary (within the constraints of admissibility). This quality makes the GLSM imaging paradigm particularly attractive in situations where the fracture's contact law is unknown beforehand, which opens up possibilities  for the sequential geometrical reconstruction and interfacial characterization of partially-closed fractures.   
\end{rem}


\section{Computational treatment and results} \label{numerics}

To illustrate the theoretical developments, this section examines the performance of~\eqref{GLSMgs} through a set of numerical experiments and compares the results of the GLSM reconstruction to those obtained by two alternative approaches, namely the linear sampling method (LSM)~\cite{Fiora2003} and the method of topological sensitivity (TS)~\cite{Fatemeh2015}. In what follows the synthetic sensory data, namely the far-field patterns~\eqref{vinf2} over the unit sphere, are generated by way of an elastodynamic boundary integral method~\cite{Fatemeh2015}. 
\begin{figure}[tp]
\center\includegraphics[width=0.88\linewidth]{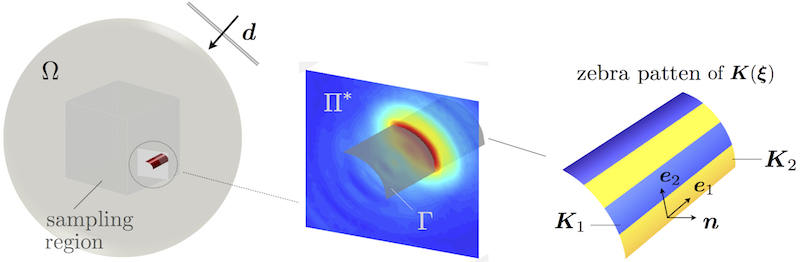} \vspace*{0mm} 
\caption{Elastic-wave sensing setup (left), position of the cutting plane (middle), and ``zebra'' pattern of the fracture's heterogeneous contact condition~(right).} \lb{SetNum}
\end{figure} 

\emph{Testing configuration.}~The sensing setup, shown in Fig.~\ref{SetNum}, features a ``true'' cylindrical fracture $\Gamma$ of length $L = 0.7$ and radius $R = 0.35$. The fracture is endowed with a piecewise-constant (``zebra'') distribution of interfacial stiffness~$\bK(\bxi)$ on~$\Gamma$, alternating between $\bK_1$ and~$\bK_2$, where 
\[
\begin{aligned}
\bK_1 \:=\: (1-0.25\textrm{i}) \, \bn \otimes \bn \,\,+\, (4 - 2\textrm{i}) \, \be_1 \nxs \otimes \be_1 \,+\, (4 - 2\textrm{i}) \, \be_2 \nxs \otimes \be_2, \qquad \bK_2 \:=\: \bzero 
\end{aligned}
\]
in terms of the orthonormal basis~$(\be_1, \be_2,\bn)$ shown in the figure. The shear modulus, mass density, and Poisson's ratio of the background solid are taken as $\mu = 1$, $\rho = 1$ and $\nu = 0.35$, whereby the shear and compressional wave speeds read $c_s = 1$ and $c_p = 2.08$, respectively. The interaction of $\Gamma$ with incident (P- and S-) plane waves, propagating in direction $\bd$, gives rise to the scattered wavefield $\bv$ solving~\eqref{GE} -- whose far-field pattern $\bv^\infty$ is then computed on the basis of~(\ref{vinf2}). 

\emph{Far-field operator.}~For both illumination and sensing purposes, the unit sphere $\Omega$ is sampled by a uniform grid of $N_\theta \!\times\! N_\phi$ observation directions, specified by the polar ($\theta_j,\, j\!=\!1,\ldots N_\theta$) and azimuthal~($\phi_k, \,k\!=\!1,\ldots N_\phi$) angle values. With reference to~\eqref{mat1}, note that both the polarization vector $\bq\!=\!\bq_p\!\oplus\bq_s$ of an incident plane wave and the far-field pattern $\bv^\infty_\bq=\bv^\infty_{\bq_p}\!\oplus \bv^\infty_{\bq_s}$ of the scattered wave each have \emph{only three} nontrivial components. In this setting, the discretized far-field operator $\textrm{\bf{F}}$ is represented as a $3N\!\times 3N$ matrix ($N\!=\!N_\theta N_\phi$) with components   
\beq\lb{DF}
\textrm{\bf{F}}(3k\nxs+\nxs1\!:\!3k\nxs+\nxs3, \,3j\nxs+\nxs1\!:\!3j\nxs+\nxs3) ~=~ \textrm{\bf{W}}^\infty (\bd_j,\hat\bxi_k), \qquad j,k = 0,\ldots N-1,
\eeq   
where
\beq\lb{mat2}
\textrm{\bf{W}}^\infty (\bd_j,\hat\bxi_k) ~=~ 
\left[\begin{array}{cccccc}
W^{\infty}_{11} & W^{\infty}_{12} & W^{\infty}_{13} \\
W^{\infty}_{21} & W^{\infty}_{22} & W^{\infty}_{23} \\
W^{\infty}_{31} & W^{\infty}_{32} & W^{\infty}_{33} 
\end{array}\right] (\bd_j,\hat\bxi_k), 
\eeq
and~$W^{\infty}_{kj}$ $(j,k\!=\!1,2,3)$ are specified in~\eqref{mat1}. Unless stated otherwise, we assume $N_\theta=50$ and $N_\phi=25$. 

\emph{Noisy data.}~To account for the presence of noise in measurements, we consider the perturbed far-field operator    
\beq\lb{DFN}
\textrm{\bf{F}}^\delta \,\, \colon \!\!\!= \, (\boldsymbol{I} + \boldsymbol{N}_{\!\epsilon} ) \exs \textrm{\bf{F}},
\eeq
where $\boldsymbol{I}$ is the $3N \times 3N$ identity matrix, and $\boldsymbol{N}_{\!\epsilon}$ is the noise matrix of commensurate dimension whose components are uniformly-distributed (complex) random variables in $[-\epsilon, \, \epsilon]^2$. On the basis of definition~(\ref{Ns-op}), one has $\delta = \norms{\!\boldsymbol{N}_{\!\epsilon} \exs \textrm{\bf{F}}\!}$ which in the sequel takes values of up to~$20\%$. With reference to Remark~\ref{LSMrem}, the region of interest 

\emph{Trial far-field pattern.}  With reference to Remark~\ref{LSMrem}, the GLSM indicator map~\eqref{GLSMgs} is constructed by solving~\eqref{min-RJ} for the minimizer of~(\ref{RJ-alph}) over a grid of trial infinitesimal fractures $L=\bz\!+\bR{\sf L}$, where~$\bz$ denotes the sampling point and $\bR$ is a unitary rotation matrix. In what follows, this is accomplished by taking~{\sf L} to be a vanishing penny-shaped fracture with unit normal~$\bn_\circ$, i.e. by setting the FOD in~(\ref{Phi-inf}) as $\ba(\by) = \delta (\by-\bz) \bR\bn_\circ$. Writing for brevity $\textrm{\bf{n}}=\bR\bn_\circ$, one in particular finds that    
\beq\lb{Phi-inf-num}
\bPhi_L^\infty(\hat\bxi) ~=~ -  \Big(\text{i} k_p \,  \hat\bxi \exs  \big[ \exs  \lambda+2\mu \exs (\textrm{\bf{n}} \cdot \hat\bxi)^2  \exs \big] \exs e^{-\text{i}k_p \hat\bxi \cdot \bz} \;\oplus\;
 2 \text{i} \mu \exs k_s \,\hat\bxi \times \nxs (\textrm{\bf{n}} \times\hat\bxi)\exs   (\textrm{\bf{n}}\cdot\hat\bxi) \, e^{-\textrm{\emph{i}} k_s \hat\bxi \cdot \bz}  \Big).
\eeq   
Recalling~\eqref{mat1}, one may note that for each observation direction~$\hat\bxi_k$, \eqref{Phi-inf-num} has only three non-trivial components in the reference $(\hat\bxi_k,\boldsymbol{\theta}_k,\boldsymbol{\phi}_k)$ orthonormal basis, which are for consistency with~\eqref{mat2} arranged as a $3N\!\times\!1$ vector 
\beq\lb{Phi-inf-Dnum}
\bPhi_{\bz,\textrm{\bf{n}}}^\infty(3k+1\!:\!3k+3) ~=\, 
\left[\begin{array}{c} \text{i} k_p  \big[ \exs  \lambda + 2\mu \exs (\textrm{\bf{n}} \sip \hat\bxi_k)^2  \exs \big] e^{-\text{i}k_p \hat\bxi_k \sip \bz} \\
2 \text{i} \mu \exs k_s (\textrm{\bf{n}}\sip\boldsymbol{\theta}_k) (\textrm{\bf{n}}\sip\hat\bxi_k) \, e^{-\textrm{\emph{i}} k_s \hat\bxi_k \sip \bz} \\ 
2 \text{i} \mu \exs k_s  (\textrm{\bf{n}}\sip\boldsymbol{\phi}_k) (\textrm{\bf{n}}\sip\hat\bxi_k) \, e^{-\textrm{\emph{i}} k_s \hat\bxi_k \sip \bz}
\end{array} \right], \qquad k=0,\ldots N-1.
\eeq
Accordingly, the far-field equation~\eqref{FF} takes the discretized form 
\beq\lb{Dff}
\textrm{\bf{F}}^\delta \bg_{\bz,\textrm{\bf{n}}} ~=~ \bPhi_{\bz,\textrm{\bf{n}}}^\infty, 
\eeq
thus forming the basis for computing GLSM and LSM indicator functionals. 

\subsection{Fracture indicators} \label{RRM}
As shown in Fig.~\ref{SetNum}, the search area i.e.~the sampling region is a \emph{cube of side 2} where the featured (GLSM and LSM) indicator functionals are evaluated. The resulting distributions are plotted either in three dimensions, or in the mid-section of the ``true'' cylindrical fracture (see Fig.~\ref{SetNum}). 

\emph{Sampling.}~In what follows, the search cube $[-1,1]^3\subset\mathbb{R}^3$ is probed by a uniform $40 \!\times\! 40 \!\times\! 40$ grid of sampling points~$\bz$, while the unit sphere -- spanning possible fracture orientations -- is sampled by a $24 \!\times\! 6$ grid of trial normal directions $\textrm{\bf{n}}=\bR\bn_\circ$. Accordingly, the fracture indicator map is constructed by solving~(\ref{Dff}) for a total of $M = 64000 \!\times\! 144$ trial pairs $(\bz,\textrm{\bf{n}})$.     

\emph{GLSM indicator.}~With reference to~\eqref{min-RJ} and~\eqref{DF}-\eqref{Dff}, a discretized version of the GLSM solution vector, $\bg^{\mbox{\tiny{GLSM}}}_{\bz,\textrm{\bf{n}}}$, is computed by solving the linear system 
\beq \lb{min-DRJ} 
\Big( \textrm{\bf{F}}^{\delta *}\textrm{\bf{F}}^\delta  + \alpha_{\bz,\textrm{\bf{n}}} \exs  (\textrm{\bf{F}}_\sharp^\delta)^{\nxs\frac{1}{2}*} (\textrm{\bf{F}}_\sharp^\delta)^{\nxs\frac{1}{2}} + \alpha_{\bz,\textrm{\bf{n}}} \delta \exs \boldsymbol{I} \Big) \exs \bg^{\mbox{\tiny{GLSM}}}_{\bz,\textrm{\bf{n}}}  ~=~  \textrm{\bf{F}}^{\delta *} \bPhi^\infty_{\bz,\textrm{\bf{n}}},
\eeq
where $(\cdot)^*$ is the Hermitian operator; $\textrm{\bf{F}}_\sharp^\delta$ is evaluated on the basis of definitions~(\ref{Fs}) and~(\ref{ReIm}); and, following~\cite{Audibert2014},     
\beq\lb{Alph}
\alpha_{\bz,\textrm{\bf{n}}} \,\, \colon \!\!\! = \,\, \frac{\eta_{\bz,\textrm{\bf{n}}}}{\norms{\textrm{\bf{F}}^\delta\!} + \,\, \delta}.
\eeq
Here $\eta_{\bz,\textrm{\bf{n}}}$ is a regularization parameter of the classical LSM solution~\eqref{lssm1}, computed via the Morozov discrepancy principle~\cite{Kress1999}. With reference to~(\ref{GLSMgs}), the GLSM indicator function is then obtained as 
\beq\lb{GLSM-Dgs}
I^{\mathcal{G}_\sharp}(\bz) \,\, = \,\, \dfrac{1}{\sqrt{\norms{\!(\textrm{\bf{F}}^\delta_\sharp)^{\frac{1}{2}} \exs \bg^{\mbox{\tiny{GLSM}}}_{\bz} \nxs}^2 \exs+\,\, \delta \! \norms{\bg^{\mbox{\tiny{GLSM}}}_{\bz} \!}^2}}, \qquad 
\textcolor{black}{
\bg^{\mbox{\tiny{GLSM}}}_{\bz} \,\,\colon \!\!= \,\, \text{argmin}_{\bg^{\mbox{\tiny{GLSM}}}_{\bz, \textrm{\bf{n}}}} \norms{\bg^{\mbox{\tiny{GLSM}}}_{\bz,\textrm{\bf{n}}}}^2_{L^2(\Omega)}, ~ \textrm{\bf{n}}\in\Omega.} 
\eeq  

\emph{LSM indicator.}~To gain better insight into the effectiveness of the proposed approach, the GLSM reconstruction is compared to a corresponding~LSM map. The latter is computed on the basis of a Tikhonov-regularized solution $\bg^{\mbox{\tiny{LSM}}}_{\bz,\textrm{\bf{n}}}$ to~(\ref{Dff}), namely
\beq\label{lssm1}
\bg^{\mbox{\tiny{LSM}}}_{\bz,\textrm{\bf{n}}} \,\,\colon \!\!= \,\, \text{argmin}_{\bg_{\bz, \textrm{\bf{n}}}} \Big\{  \norms{\textrm{\bf{F}}^\delta \bg_{\bz,\textrm{\bf{n}}} \,-\, \bPhi_{\bz,\textrm{\bf{n}}}^\infty}^2_{L^2(\Omega)} \,+\,\,\,  \eta_{\bz,\textrm{\bf{n}}} \norms{\bg_{\bz,\textrm{\bf{n}}}}^2_{L^2(\Omega)}\Big\},
\eeq
where the regularization parameter $\eta_{\bz,\textrm{\bf{n}}}$ is obtained by way of Morozov discrepancy principle~\cite{Kress1999}. On the basis of~\eqref{lssm1}, the LSM indicator functional is constructed following~\cite{Fiora2003} as 
\beq\lb{LSM}
I^{\mathcal{L}}(\bz) \,\, := \,\, \frac{1}{\norms{\bg^{\mbox{\tiny{LSM}}}_{\bz}}^2}, \qquad
\textcolor{black}{
\bg^{\mbox{\tiny{LSM}}}_{\bz} \,\,\colon \!\!= \,\, \text{argmin}_{\bg^{\mbox{\tiny{LSM}}}_{\bz, \textrm{\bf{n}}}} \norms{\bg^{\mbox{\tiny{LSM}}}_{\bz,\textrm{\bf{n}}}}^2_{L^2(\Omega)}, ~ \textrm{\bf{n}}\in\Omega.}
\eeq


\subsection{Results} \label{Comp}

In the sequel, the arclength ($\ell \!=\! 0.55$) of a ``true'' cylindrical fracture in its mid-plane, see Fig.~\ref{SetNum}, is used as a reference length to gauge the illuminating shear wavelength $\lambda_s = 2\pi/k_s$. 

\emph{Density of the sensing grid}. Taking $\lambda_s/\ell = 0.7$, Fig.~\ref{NN} illustrates the sensitivity of the GLSM indicator~\eqref{GLSM-Dgs} to the spatial density of sensory data, given by $N_\theta \!\times\! N_\phi$ incident/observation directions over the unit sphere. This is done by gradual downsampling of the default $50 \!\times\! 25$ sensing grid. From the panels, it is apparent that for satisfactory geometric reconstruction, the sensing grid should carry at least 100 test directions over $\Omega$. In what follows, the (full-aperture) reconstructions are implemented using a $50 \!\times\! 25$ grid.
 
\emph{Sensitivity to measurement noise}. Assuming full-aperture illumination and sensing, the GLSM and LSM indicators are next compared in terms of their robustness against noise in the far-field data. With reference to~\eqref{DFN}, the levels of ``white'' noise used to contaminate the boundary integral simulations of the forward scattering problem are taken $\delta = \norms{\!\boldsymbol{N}_{\!\epsilon} \exs \textrm{\bf{F}}\!} \in\{0, 0.1, 0.2\} \| \textrm{\bf{F}}\|$. On focusing the comparison on the mid-section~$\Pi$ of a ``true'' fracture, the results are shown in Figs.~\ref{F1}, \ref{F2}, and \ref{F4} assuming the illuminating wavelengths of~$\lambda_s/\ell = 1.3$, $0.7$, and $0.3$, respectively. Note that $\delta\%:=\delta/\|\textrm{\bf{F}}\|$. As can be seen from the display, the GLSM indicator~\eqref{GLSM-Dgs} inherits the superior localization ability of its LSM predecessor~\eqref{LSM}, while carrying far greater robustness to noise in the sensory data. 
\begin{figure}[h]
\center\includegraphics[width=0.64\linewidth]{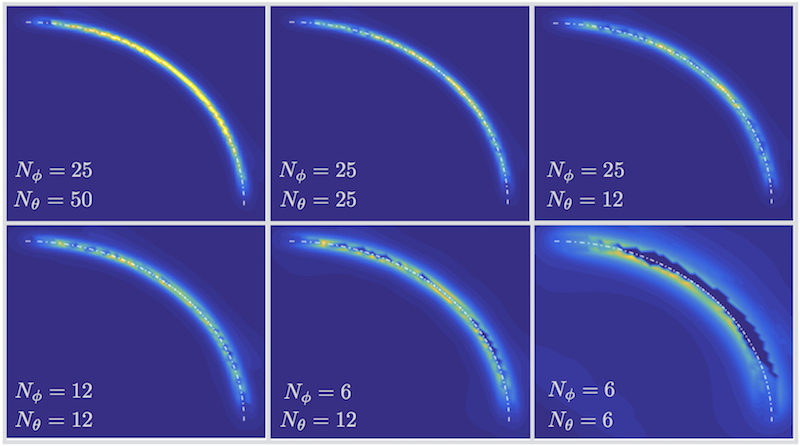} \vspace*{0mm} 
\caption{Full-aperture GLSM reconstruction of a cylindrical fracture in its mid-section, $\Pi$, for $\lambda_s/\ell = 0.7$: effect of density of the $N_\theta \!\times\! N_\phi$ sensing grid of illumination/observation directions spanning the unit sphere.} \lb{NN}
\end{figure}

\begin{figure}[h!]
\center\includegraphics[width=0.66\linewidth]{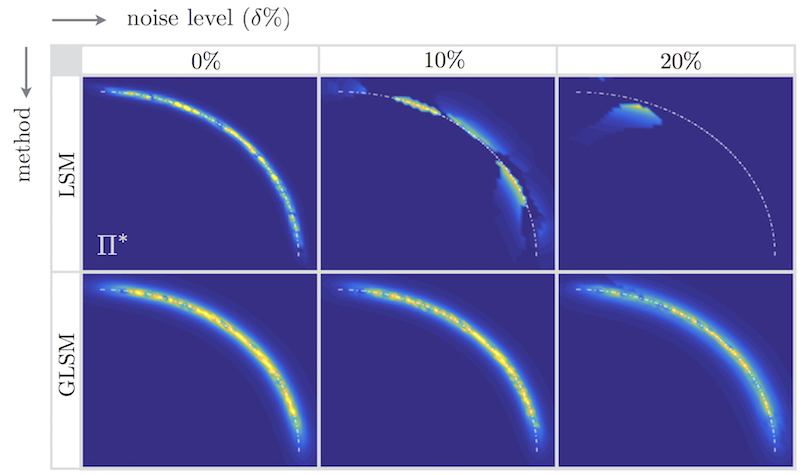} \vspace*{-2mm} 
\caption{Sensitivity to measurement noise for $\lambda_s/\ell = 1.3$: Full-aperture reconstruction of a cylindrical fracture, mid-section~$\Pi$, by the LSM indicator (top panels) and its GLSM counterpart (bottom panels).} \lb{F1}  \vspace*{2mm}
\center\includegraphics[width=0.66\linewidth]{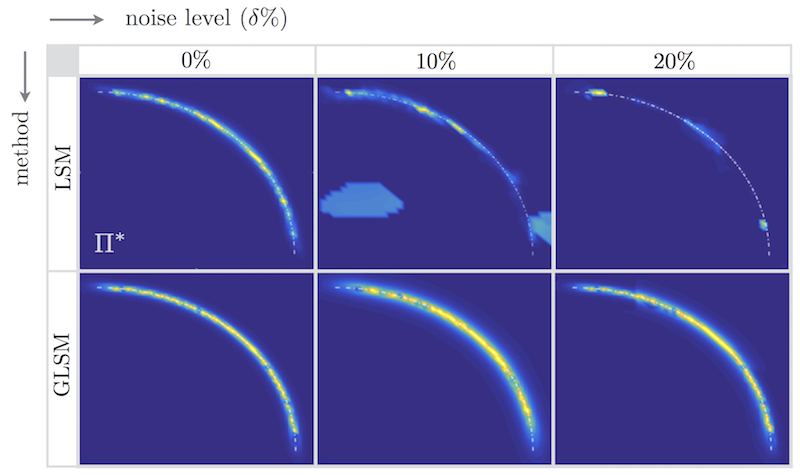} \vspace*{-2mm} 
\caption{Sensitivity to measurement noise for $\lambda_s/\ell = 0.7$: Full-aperture reconstruction of a cylindrical fracture, mid-section~$\Pi$, by the LSM indicator (top panels) and its GLSM counterpart (bottom panels).} \lb{F2} \vspace*{2mm}
\center\includegraphics[width=0.66\linewidth]{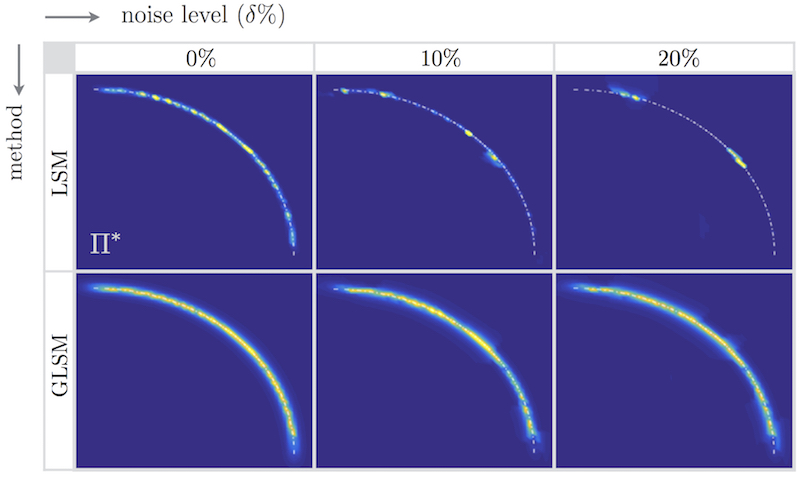} \vspace*{-2mm} 
\caption{Sensitivity to measurement noise for $\lambda_s/\ell = 0.3$: Full-aperture reconstruction of a cylindrical fracture, mid-section~$\Pi$, by the LSM indicator (top panels) and its GLSM counterpart (bottom panels).} \lb{F4}
\end{figure} 

\emph{Effect of the sensing aperture}. The ramifications of an incomplete aperture on the quality of fracture reconstruction are illustrated in Figs.~\ref{H1} and~\ref{H4}, where only \textcolor{black}{\emph{the ``upper'' half} of $\Omega$ in Fig.~\ref{SetNum}} is available for the purposes of illumination and observation.  More specifically, Figs.~\ref{H1} and~\ref{H4} depict the GLSM and LSM fields in the mid-section of $\Gamma$ at ``long'' ($\lambda_s/\ell = 1.3$) and ``short'' ($\lambda_s/\ell = 0.3$) excitation wavelengths, respectively, constructed from the half-aperture sensory data. While the loss of resolution in both GLSM and LSM maps is clear relative to Figs.~\ref{F1} and~\ref{F4}, it is noted that (for the problem under consideration) the GLSM indicator offers far better robustness to noise, providing acceptable reconstruction of~$\Gamma$ for~$\delta$ as high as~$0.1\|\textrm{\bf{F}}\|$.      
\begin{figure}[h!]
\center\includegraphics[width=0.66\linewidth]{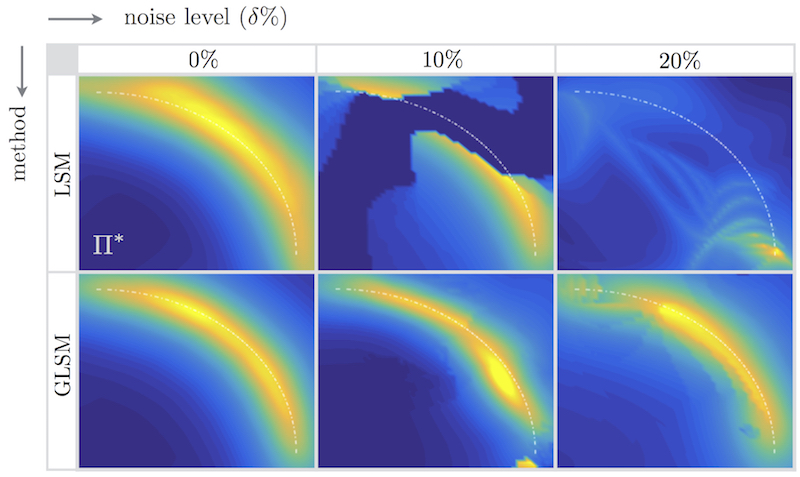} \vspace*{-2mm} 
\caption{Half-aperture reconstruction of a cylindrical fracture, mid-section $\Pi$, for $\lambda_s/\ell = 1.3$:~sensitivity of the LSM indicator (top panels) and its GLSM counterpart (bottom panels) to noise in the measurements.} \lb{H1} \vspace*{0mm} 
\center\includegraphics[width=0.66\linewidth]{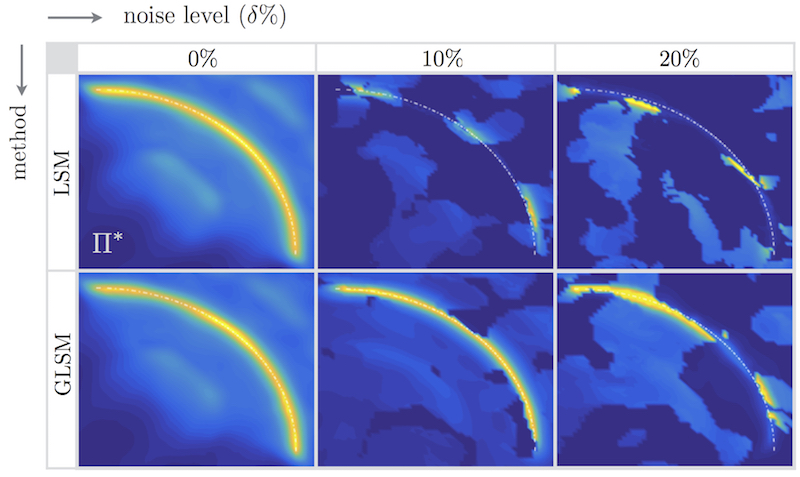} \vspace*{-2mm} 
\caption{Half-aperture reconstruction of a cylindrical fracture, mid-section $\Pi$, for $\lambda_s/\ell = 0.3$:~sensitivity of the LSM indicator (top panels) and its GLSM counterpart (bottom panels) to noise in the measurements.} \lb{H4} \vspace*{3mm} 
\center\includegraphics[width=0.7\linewidth]{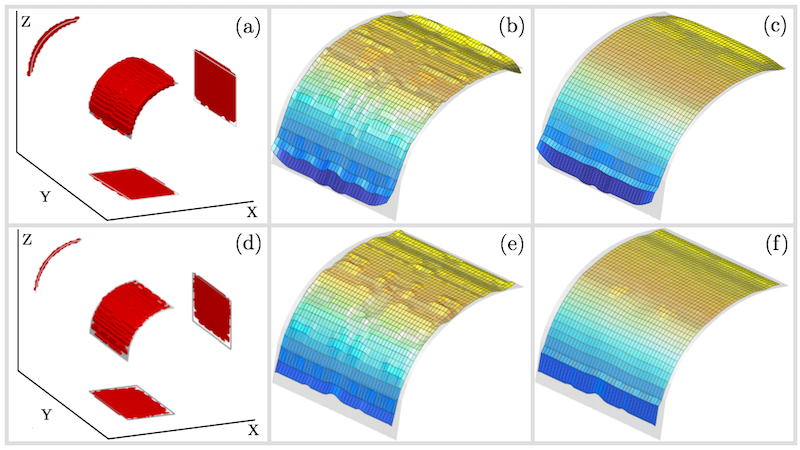} \vspace*{-2mm} 
\caption{Full-aperture 3D GLSM reconstruction for $\{\lambda_s/\ell = 1.3,\delta\% = 0.1\}$ (top) and $\{\lambda_s/\ell = 0.7,\delta\% = 0.05\}$ (bottom):~ GLSM indicator~\eqref{GLSM-Dgs} thresholded at 10\% (left), fracture surface as reconstructed from the 3D cloud of points (middle), and fracture reconstruction after the application of a mean filter (right).} \lb{3D}  \vspace*{-4mm}
\end{figure} 

\emph{3D reconstruction}. For completeness, Fig.~\ref{3D} illustrates the full-aperture GLSM reconstruction of $\Gamma$ inside the sampling region~$[-1,1]^3$, assuming $\lambda_s/\ell = 1.3$ and $\delta\% = 10$ (top panels) and $\lambda_s/\ell = 0.7$ and $\delta\% = 5$ (bottom panels). For clarity, the indicator maps are thresholded by $10\%$, i.e.~only the sampling points whose~$I^{\mathcal{G}_\sharp}(\bz)$ values are higher than ten percent of the global maximum value are shown (left panels). Then, a scattered interpolant is constructed based on thus obtained 3D cloud of points, giving an optimal reconstruction of the fracture surface. The latter is generated by (i) projecting the thresholded GLSM map onto a reference plane (the $X-Y$ plane in this example), and (ii) defining a suitable grid of points covering the projected area. This forms the sought-for input for the scattered interpolant providing a 3D reconstruction of the fracture interface, as shown in the middle panels of Fig.~\ref{3D}. Due in part to a scattered nature of the interpolant, thus obtained fracture surface will suffer from some artificial roughness -- that depends for example on the density of sampling points and an ad-hoc thresholding parameter. This issue may be mitigated by implementing a suitable spatial (e.g. moving average) filter, as shown in the right panels of Fig.~\ref{3D}.

\section{Conclusions} \label{Conc}

The Generalized Linear Sampling Method (GLSM) combined with the $F_\sharp$-factorization technique form a fast, yet robust, platform for the geometric reconstruction of heterogeneous (and dissipative) discontinuity surfaces from scattered wavefield data. It is illustrated that the GLSM indicator possesses little sensitivity to (the reasonable levels of) measurement noise -- that is comparable to the robustness of TS, while inheriting the top-tier localization property of the classical LSM, which guarantees a high-quality geometric characterization of the fracture -- notwithstanding the frequency regime of excitation and the unknown (generally heterogeneous) interfacial stiffness $\bK$. Such attributes carries a remarkable potential for developing a GLSM-based hybrid approach for not only geometric reconstruction of hidden fractures, but also identification of their interfacial condition (e.g.~retrieval of $\bK$ in the present work) from scattered field data. Furthermore, this approach may be naturally and rigorously extended to other sensing configurations and to more sophisticated background-domain geometries. It should also be noted that the analysis in this study does not require the fracture surface to be \emph{connected}, so one should be able to use the GLSM for simultaneous imaging of multiple fractures in the medium.  

\appendix

\section{Proof of equation~(\ref{kup1})}  \label{lem1-p}

As shown in~\cite{Kuprad1979}, the irrotational ($\btu^{\mathrm{p}}$) and solenoidal ($\btu^{\mathrm{s}}$) parts of a radiating wavefield $\btu=\btu_{\bphi}$ solving~\eqref{TR} exhibit the following asymptotic behavior as $r:=|\bxi|\to\infty$:
\beq\lb{A1}
\begin{aligned}
&\frac{\partial \btu^{\mathrm{p}}}{\partial r} - \text{i} k_p \btu^{\mathrm{p}} = O(r^{-2}),  \qquad \btu^{\mathrm{p}} = O(r^{-1}),                                             \\*[0.5 mm]
&  \frac{\partial\btu^{\mathrm{s}}}{\partial r} - \text{i} k_s \btu^{\mathrm{s}} = O(r^{-2}), \qquad \btu^{\mathrm{s}} = O(r^{-1}),                \\*[0.5 mm]
&\hat\bxi \sip \bC \colon \! \nabla \btu^{\mathrm{p}} - \text{i} k_p (\lambda+2\mu) \btu^{\mathrm{p}} = O(r^{-2}), \qquad 
 \hat\bxi \sip \bC \colon \! \nabla \btu^{\mathrm{s}} - \text{i} k_s \exs \mu  \btu^{\mathrm{s}} = O(r^{-2}), \qquad 
 \btu^{\mathrm{p}} \cdot \overline{ \btu^{\mathrm{s}}} = O(r^{-3})
\end{aligned}
\eeq  
For brevity, an auxiliary decomposition~$\bt = \hat\bxi \cdot \bC \colon\!\! \nabla \btu = \bt^{\mathrm{p}}+\bt^{\mathrm{s}}$ is adopted in the sequel, where $\bt^{\mathrm{p}} = \hat\bxi\cdot\bC \colon\!\!\nabla \btu^{\mathrm{p}}$ and $\bt^{\mathrm{s}} = \hat\bxi\sip\bC\colon \! \nabla \btu^{\mathrm{s}}$. In this setting, one finds that  
\beq\lb{Im(I)}
\begin{aligned}
\Im \lim_{r\to\infty} \int_{\partial{B_{r}}}   \overline\btu \cdot \bt \,\,   \textrm{d}S_{\bxi} &=&  
&\frac{1}{2\textrm{i}} \lim_{r\to\infty} \int_{\partial{B_{r}}}   
\Big\{ (\overline{\btu^{\mathrm{p}}}+\overline{\btu^{\mathrm{s}}}) \cdot (\bt^{\mathrm{p}}+\bt^{\mathrm{s}}) ~-~ 
(\btu^{\mathrm{p}}+\btu^{\mathrm{s}}) \cdot (\overline{\bt^{\mathrm{p}}}+\overline{\bt^{\mathrm{s}}})  \Big\} \,  \textrm{d}S_{\bxi}& \\
&=& & \lim_{r\to\infty} \int_{\partial{B_{r}}}  \Big\{k_p(\lambda+2\mu)\big(|\btu^{\mathrm{p}}|^2+
\Re(\btu^{\mathrm{p}}\sip\overline{\btu^{\mathrm{s}}})\big) ~+~   
k_s \mu \big(|\btu^{\mathrm{s}}|^2+\Re(\btu^{\mathrm{p}}\sip\overline{\btu^{\mathrm{s}}})\big) \Big\} \,\,  \textrm{d}S_{\bxi},& \\
&=& & \lim_{r\to\infty} \int_{\partial{B_{r}}}  \Big\{k_p(\lambda+2\mu) |\btu^{\mathrm{p}}|^2 ~+~   
k_s \mu |\btu^{\mathrm{s}}|^2\Big\} \,\,  \textrm{d}S_{\bxi},&
\end{aligned}
\eeq
which completes the proof.

\section{Elastodynamic fundamental stress tensor}  \label{stress-fund}

\textcolor{black}{
In dyadic notation, the elastodynamic fundamental stress tensor~\cite[e.g.][]{Ach2003} can be written as 
\beq\label{sfund2}
\bSig(\bx,\by) ~=~ \Sigma_{ij}^\ell(\bx,\by)\, \be_i\otimes\be_j\otimes\be_\ell, \qquad \bx,\by\in\mathbb{R}^3, \quad \bx\ne\by
\eeq
signifying the Cauchy stress tensor at $\bx$ due to point force $\be_\ell$ acting at $\by$, where 
\beq\label{sfund2}
\Sigma_{ij}^\ell(\bx,\by) = \frac{\lambda}{\lambda\!+\!2\mu}[G(k_pr)]_{,\ell} \, \delta_{ij} \,-\, 
\frac{2}{k_s^2}[G(k_pr)-G(k_sr)]_{,ij\ell} \,+\, [G(k_sr)]_{,i} \, \delta_{j\ell} \,+\, [G(k_sr)]_{,j} \, \delta_{i\ell}
\eeq 
and 
\[
r = |\bx-\by|, \qquad G(kr) := \frac{e^{\text{i}kr}}{4\pi r}, \qquad [f]_{,i} := \frac{\partial f}{\partial x_i}. 
\]
As shown in~\cite{Ach2003}, the far-field approximation of~\eqref{sfund2} as $|\bx|\to\infty$ reads
\beq\label{sfund3}
\Sigma_{ij}^{\ell,\infty}(\bx,\by) ~=~ \Sigma_{ij}^{\ell,p}(\bx,\by) \:+\: \Sigma_{ij}^{\ell,s}(\bx,\by), 
\eeq
where
\beq\lb{sfund4}
\begin{aligned}
& \Sigma_{ij}^{\ell,p}(\bx,\by) ~=~ \textrm{i} k_p\exs A_{ij\ell}^p \frac{e^{\textrm{i}k_p|\bx|}}{4\pi|\bx|}\, e^{-\textrm{i}k_p\hat{\bx}\cdot\by}, 
\quad & A_{ij\ell}^p& \,=\, \Big[\frac{2\mu}{\lambda\!+\!2\mu} \hat{x}_i \hat{x}_j \,+\,  \frac{\lambda}{\lambda\!+\!2\mu} \delta_{ij}\Big]\, \hat{x}_\ell, & \\
& \Sigma_{ij}^{\ell,s}(\bx,\by) ~=~ \textrm{i} k_s\exs A_{ij\ell}^s \frac{e^{\textrm{i}k_s|\bx|}}{4\pi|\bx|}\, e^{-\textrm{i}k_s\hat{\bx}\cdot\by}, 
\quad & A_{ij\ell}^s& \,=\, \delta_{i\ell} \hat{x}_j + \delta_{j\ell} \hat{x}_i - 2 \hat{x}_i \hat{x}_j \hat{x}_\ell. \hfill&
\end{aligned}
\eeq}

\section{Proof of Lemma~\ref{recip}}\label{Recip}

\textcolor{black}{Consider the orthonormal bases $(\be_1\!:=\!\hat\bxi,\be_2,\be_3)$ and $(\bh_1\!:=\!\bd,\bh_2,\bh_3)$, where $\hat\bxi$ and~$\bd$ denote respectively the directions of observation and plane-wave incidence. On representing the far-field pattern $\bv^\infty_\bq=\bv^\infty_{\bq_p}\oplus \bv^\infty_{\bq_s}$ (resp.~the polarization vector~$\bq=\bq_p\oplus\bq_s$) in the $\be-$ (resp.~$\bh-$) basis, definition~\eqref{w-inf} of the far-field kernel~$\bW^\infty(\bd,\hat\bxi)$ can be written in matrix form as
\beq\lb{mat1}
\bv^\infty_\bq(\bd,\hat\bxi) =
\left[\begin{array}{c}
\bv^\infty_{\bq_p}\sip\be_1 \\
0 \\
0 \\
0 \\
\bv^\infty_{\bq_s}\sip\be_2 \\
\bv^\infty_{\bq_s}\sip\be_3 
\end{array}\right] \;=\; 
\left[\begin{array}{cccccc}
W^{\infty}_{11}(\bd,\hat\bxi) & 0 & 0 & 0 & W^{\infty}_{12}(\bd,\hat\bxi) & W^{\infty}_{13}(\bd,\hat\bxi) \\
0 & 0 & 0 & 0 & 0 & 0 \\
0 & 0 & 0 & 0 & 0 & 0 \\
0 & 0 & 0 & 0 & 0 & 0 \\
0 & 0 & 0 & 0 & 0 & 0 \\
W^{\infty}_{21}(\bd,\hat\bxi) & 0 & 0 & 0 & W^{\infty}_{22}(\bd,\hat\bxi) & W^{\infty}_{23}(\bd,\hat\bxi) \\
W^{\infty}_{31}(\bd,\hat\bxi) & 0 & 0 & 0 & W^{\infty}_{32}(\bd,\hat\bxi) & W^{\infty}_{33}(\bd,\hat\bxi) 
\end{array}\right]
\left[\begin{array}{c}
\bq_p\sip\bh_1 \\
0 \\
0 \\
0 \\
\bq_s\sip\bh_2 \\
\bq_s\sip\bh_3 
\end{array}\right]. 
\eeq
In this setting, the reciprocity statement~\eqref{W-recip} can be rewritten as 
\beq\lb{W-recip2}
W^{\infty}_{ij}(\bd,\hat\bxi) ~=~ W^{\infty}_{ji}(-\hat\bxi,-\bd), \qquad i,j=1,2,3.
\eeq
This section aims to extend Lemma~1 in~\cite{Dassios1987} to cater for the scattering problem~\eqref{GE}-\eqref{KS} with its particular boundary condition, \eqref{contact}, at the fracture interface~$\Gamma$. With such result in place, the distilled reciprocity claim~\eqref{W-recip2} follows immediately as a consequence of Theorem~1 and its corollaries in~\cite{Dassios1987}.} To this end, consider two distinct \emph{total fields} 
\[
\bfpsi_1 = \bu\ff_1+\bv_1, \qquad \bfpsi_2 = \bu\ff_2+\bv_2,
\]
where $\bv_j\in H^1_{\mathrm{loc}}(\R^3 \backslash \Gamma)^3$ satisfies \eqref{GE}-\eqref{KS} with $\bu\ff=\bu\ff_j$ $(j\!=\!1,2)$. On adopting Twersky's notation~\cite{Twersky1962}  
\[
\lbrace \bfpsi_1, \bfpsi_2 \rbrace_{S}  ~:=~ \int_{S} \big\lbrace \bfpsi_1 \cdot \exs \bt( \bfpsi_2) ~-~ \bfpsi_2 \cdot \exs \bt( \bfpsi_1) \big\rbrace  \,\, \text{d}S_{\bxi}, 
\]
Lemma~1 in~\cite{Dassios1987} states that $\lbrace \bfpsi_1, \bfpsi_2 \rbrace_{\partial B_R}=0$, where $B_R$ is a ball of radius~$R$ sufficiently large so that $\Gamma\!\subset\!B_R$. By substituting the Navier equation 
\[
\nabla(\bC\!:\!\nabla\bfpsi_j) \,+\, \rho\,\omega^2\bfpsi_j \,=\, \boldsymbol{0} \quad\text{in}\quad\mathbb{R}^3\backslash\Gamma, \qquad j=1,2
\]
into Betti's third formula~\cite{Kuprad1979} written for domain $B_R\backslash\Gamma$, one finds that  
\beq\lb{Twersky}
\lbrace \bfpsi_1, \bfpsi_2 \rbrace_{\partial\Bo \cup \exs \Gamma} ~=~ \int_{\Bo \backslash \Gamma} \Big\lbrace \bfpsi_1 \cdot \big[ \nabla \sip (\bC \colon \! \nabla  \bfpsi_2) \big] ~-~ \bfpsi_2 \cdot \big[ \nabla \sip (\bC \colon \! \nabla  \bfpsi_1) \big] \Big\rbrace  \,\, \text{d}S_{\bxi} ~=~ 0
\eeq
where, thanks to the jump condition on~$\Gamma$ in~\eqref{GE} and contact law~\eqref{contact}, one has 
\beq\lb{Gama}
\begin{aligned}
\lbrace \bfpsi_1, \bfpsi_2 \rbrace_{\Gamma} ~=\, & \int_{\Gamma} \big\lbrace \llbracket  \bfpsi_1 \rrbracket \cdot  \exs \bt( \bfpsi_2) - \llbracket  \bfpsi_2 \rrbracket \cdot \exs \bt( \bfpsi_1) \big\rbrace \, \text{d}S_{\bxi}  \\*[1 mm]
~=\, & \int_{\Gamma} \Big\lbrace \llbracket  \bfpsi_1 \rrbracket \sip \bK \sip  \llbracket  \bfpsi_2 \rrbracket - \llbracket  \bfpsi_2\rrbracket \sip \bK \sip  \llbracket  \bfpsi_1 \rrbracket \Big\rbrace \, \text{d}S_{\bxi}. 
\end{aligned}
\eeq
Due to symmetry of~$\bK$, one has~$\lbrace \bfpsi_1, \bfpsi_2 \rbrace_{\Gamma} = 0$ and consequently~$\lbrace \bfpsi_1, \bfpsi_2 \rbrace_{\partial B_R}=0$ thanks to~\eqref{Twersky}.  \hfill$\blacksquare$

\section{Proof of Lemma~\ref{H*}}  \label{H*pruf}

\textcolor{black}{With reference to the Herglotz operator $\mathcal{H} \colon L^2(\OOd)^3 \rightarrow H^{-1/2}(\Gamma)^3$ given by~\eqref{oH} and a fracture opening displacement (FOD) profile~$\ba\in\tilde{H}^{1/2}(\Gamma)$, consider the duality product 
\beq\lb{adj}
\big\langle  \mathcal{H}(\bg), \ba \big\rangle ~=~ 
\int_\Gamma \,\, \bar\ba \cdot \bt(\btu_{\bg}) \,\, \text{d}S_{\by}. 
\eeq
Thanks to~\eqref{HW} and the linearity of~$\bt$,} the right-hand side of~(\ref{adj}) can be recast as
\[
\begin{aligned}
\int_\Gamma \,\,  \bar\ba \cdot \bt(\btu_{\bg})  \,\, \text{d}S_{\by} ~=~ & \int_{\OOd} \int_\Gamma \,\,  \bar\ba(\by) \cdot \bt \exs \big( \exs  \bg(\bd)  \cdot (\bd  \otimes  \bd \exs)  \,\, e^{\text{i}k_p \bd \cdot \by} \exs \big) \,\, \text{d}S_{\by} \,\, \text{d}S_{\bd}\\*[2 mm]
& ~+~ \int_{\OOd} \int_\Gamma \,\,  \bar\ba(\by) \cdot \bt \exs \big( \exs  \bg(\bd) \cdot (\bI - \bd  \otimes  \bd \exs)  \,\, e^{\text{i}k_s \bd \cdot \by} \exs \big) \,\, \text{d}S_{\by} \,\, \text{d}S_{\bd}.
\end{aligned}
\]
On recalling that for arbitrary smooth surface~$S$
\[
\bt(\bu) \;=\; \bn\sip\bC\!:\!\nabla\bu \;=\; \lambda \, \bn\exs \nabla\nxs\sip \bu \,+\, 2\mu \, \bn \cdot\! \nabla \bu \,+\,\mu \, \bn\!\times\! \nabla \!\times\! \bu ~\quad \text{on}\quad S
\]
where~$\bC$ is given by~\eqref{bC} and $\bn$ is the unit normal on~$S$, one finds that
\[
\begin{aligned}
& \bar\ba \cdot \bt \exs \big( \exs  \bg(\bd)  \cdot (\bd  \otimes  \bd \exs)  \,\, e^{\text{i}k_p \bd \cdot \by} \exs \big) ~=~  \bg(\bd) \cdot \overline{ (-\text{i}k_p) \exs e^{-\text{i}k_p \bd \cdot \by} \,\, \Big\lbrace  2\mu \exs (\ba\sip\bd) (\bn\sip\bd)  \,+\, \lambda \exs (\ba \sip \bn) \Big\rbrace \, \bd},  \\*[2 mm]
& \bar\ba \cdot \bt \exs \big( \exs  \bg(\bd) \cdot (\bI - \bd  \otimes  \bd \exs)  \,\, e^{\text{i}k_s \bd \cdot \by} \exs \big) ~=~ \bg(\bd) \cdot \overline{(-\textrm{i}k_s) \exs e^{-\textrm{i} k_s \bd \cdot \by} \,\, \bd \times \nxs \Big\lbrace  \mu \exs (\ba\sip\bd) (\bn \!\times \! \bd) \,+\, \mu \exs (\bn\sip\bd)(\ba \!\times \!\bd)   \Big\rbrace}.
\end{aligned}
\]
As a result, 
\[
\begin{aligned}
\int_\Gamma \,\,  \bar\ba \cdot \bt(\btu_{\bg})  \,\, \text{d}S_{\by} ~=~  & \int_{\OOd} \bg(\bd) \cdot \overline{(-\text{i}k_p) \, \bd  \int_\Gamma \, \Big\lbrace \lambda \exs (\ba \sip \bn) \,+\, 2\mu \exs (\bn \sip \bd) ( \ba \sip \bd)  \Big\rbrace  \, e^{-\text{i}k_p \bd \cdot \by} \,\, \text{d}S_{\by} } \,\, \text{d}S_{\bd}~+~ \\*[1 mm]
& \int_{\OOd} \bg(\bd) \cdot \overline{(-\text{i}k_s) \, \bd \times \nxs  \int_\Gamma \, \Big\lbrace   \mu \exs(\ba \!\times\!\bd)(\bn\sip\bd) \,+\, \mu \exs  (\bn \!\times\! \bd) (\ba \sip \bd)  \Big\rbrace \, e^{-\textrm{i} k_s \bd \cdot \by} \,\, \text{d}S_{\by}} \,\, \text{d}S_{\bd}. 
\end{aligned}
\]
\textcolor{black}{By virtue of~\eqref{herden} and~\eqref{far-field} which verify $\langle\exs\bg,\bv^\infty\rangle = \langle\exs\bg_p,\bv_p^\infty\rangle + \langle\exs\bg_s,\bv_s^\infty\rangle$, one finds that 
\[
\big\langle\mathcal{H}(\bg), \,\ba\big\rangle ~=~\, 
\big\langle\bg, \,\mathcal{H}^*(\ba)\big\rangle
\]}
which establishes~(\ref{Hstar}). \hfill$\blacksquare$ 

\section{Proofs of Theorem~\ref{GLSM1} and Theorem~\ref{GLSM2}}  \label{GLSM*pruf}

\begin{proof57} Consider the following:
\begin{itemize}
\item~Let~$\bPhi_L^\infty \in Range(\mathcal{H}^*)$. By definition, $\exists \exs \bfpsi \in \overline{Range(T)}$ such that $\mathcal{H}^* \bfpsi = \bPhi_L^\infty$. Then, by recalling the continuity of $T^{-1}\nxs$ \textcolor{black}{(Lemma~\ref{T-invs0})} and the range denseness of $\mathcal{H}$ \textcolor{black}{ (Lemma~\ref{H*p})}, one may find $\bg\Zsub \in L^2(\OOd)^3$ for every $\alpha>0$ such that $\norms{\nxs\mathcal{H}\bg\Zsub - T^{-1} \bfpsi \nxs}^2  \, < \alpha$. Now, let us observe that  
\begin{description}
\item{i)}~by the continuity of $\mathcal{G}=\mathcal{H}^* T$ \textcolor{black}{(Lemma~\ref{comp_G})}, one has 
\[
\norms{ \nxs F \bg\Zsub - \bPhi_L^\infty \nxs}^2 \!~\leqslant~ \alpha \! \norms{\nxs\mathcal{G}\nxs}^2; 
\]
\item{ii)}~\textcolor{black}{the boundedness i.e. continuity of $\mathfrak{T}$ (see Lemma~\ref{I{T}>0} and~\eqref{Tsdef})} implies that
\[
| (  \bg\Zsub , B \bg\Zsub) |  ~\leqslant \! ~ \norms{\nxs \mathfrak{T} \nxs} \norms{\! \mathcal{H} \bg\Zsub \!}^2 ~\! < ~ 2 \nxs \norms{\nxs \mathfrak{T} \nxs} \nxs (\alpha \,\,+ \norms{\!T^{-1} \bfpsi\!}^2); 
\]
\item{iii)}~thanks to the definitions of~$\mathfrak{j}_\alpha(\bPhi_L^\infty)$ and $\bg_\alpha^L$, one has
\[
\mathfrak{J}_{\alpha}(\bPhi_L^\infty;\,\bg_\alpha^L) \,-\, \mu \alpha \,\,\,\leqslant \,\,\, \mathfrak{j}_\alpha(\bPhi_L^\infty) \,\,\leqslant ~ \! 
\norms{\nxs F\bg\Zsub-\bPhi_L^\infty \nxs}^2 \,+\,\,\exs \alpha \exs | ( \bg\Zsub, B \bg\Zsub )|.
\] 
\end{description} 
As a result, it immediately follows that 
\beq\lb{inq1}
\alpha \exs | (  \bg_\alpha^L, B \bg_\alpha^L ) | \,\,\,\leqslant \,\,\, \mathfrak{J}_{\alpha}(\bPhi_L^\infty;\bg_\alpha^L) \,\,\leqslant \,\, \mu \alpha  \,+\, \alpha \! \norms{\nxs\mathcal{G}\nxs}^2 \,+\,\,\exs 2 \alpha \! \norms{\nxs \mathfrak{T} \nxs} \! (\alpha \,\,+ \norms{\!T^{-1} \bfpsi\!}^2),
\eeq
whereby~$\limsup\limits_{\alpha \rightarrow 0} |( \bg_\alpha^L, B \bg_\alpha^L ) | < \infty$ which implies $\liminf\limits_{\alpha \rightarrow 0} |(  \bg_\alpha^L, B \bg_\alpha^L ) | < \infty$. 

\item~Next, let~$\bPhi_L^\infty\!\not\in\!Range(\mathcal{H}^*)$.~Let us by contradiction assume that $\liminf\limits_{\alpha \rightarrow 0} |(  \bg_\alpha^L, B \bg_\alpha^L ) | < \infty$; then, for some constant $c\!>\!0$ independent of $\alpha$, one has~$|( \bg_\alpha^L, B \bg_\alpha^L )|\!<\!c$ for an extracted subsequence of $\bg_\alpha^L$. The coercivity of $\mathfrak{T}$ then implies that $\mathcal{H}\bg_\alpha^L$ is also bounded. As $H^{-1/2}(\Gamma)^3$ is reflexive, one may suppose that up to an extracted subsequence, $\mathcal{H}\bg_\alpha^L$ weakly converges to some $T^{-1} \bfpsi \in H^{-1/2}(\Gamma)^3$. In fact, $T^{-1} \bfpsi  \in \overline{Range(\mathcal{H})}$ since the latter set is convex. Now, since $\mathcal{G}$ is compact, $\mathcal{G}\mathcal{H}\bg_\alpha^L$ strongly converges to $\mathcal{G}T^{-1} \bfpsi = \mathcal{H}^*\bfpsi $ as $\alpha\!\rightarrow\!0$. Recalling the definition of $\mathfrak{J}_{\alpha}(\bPhi_L^\infty;\bg_\alpha^L)$ and the fact that $\mathfrak{j}_\alpha(\bPhi_L^\infty) \rightarrow 0$ as $\alpha \rightarrow 0$ thanks to the range denseness of~$F$, one may observe that $\norms{\nxs F\bg_\alpha^L-\bPhi_L^\infty \nxs}^2 \,\leqslant \, \mathfrak{J}_{\alpha}(\bPhi_L^\infty;\bg_\alpha^L) \,\leqslant \, \mathfrak{j}_\alpha(\bPhi_L^\infty) + \mu \alpha \rightarrow 0$ as $\alpha \rightarrow 0$. Thus, $\mathcal{H}^*\bfpsi  = \bPhi_L^\infty$ which is a contradiction. Accordingly, $\bPhi_L^\infty \not\in Range(\mathcal{H}^*)$ necessitates $\,\liminf\limits_{\alpha \rightarrow 0} |(\bg_\alpha^L, B \bg_\alpha^L)| \!=\! \infty$ which in turn implies $\limsup\limits_{\alpha \rightarrow 0} |(\bg_\alpha^L, B \bg_\alpha^L)| \!=\! \infty$.     
\end{itemize}
\end{proof57} 

\begin{proof59}
The logic of this proof follows that of Theorem~\ref{GLSM1}, and entails the following steps. 
\begin{itemize}
\item~Let $\bPhi_L^\infty \in Range(\mathcal{H}^*)$ so that~$\mathcal{H}^* \bfpsi = \bPhi_L^\infty$ for some~$\bfpsi \in \overline{Range(T)}$. Define for every $\alpha\!>\!0$ independent of $\delta$, density~$\bg\Zsub \in L^2(\OOd)^3$ such that $\|\mathcal{H}\bg\Zsub - T^{-1} \bfpsi \nxs\|^2 < \alpha$, and set $\delta>0$ sufficiently small so that
\[
\textcolor{black}{\delta \big\{2 \alpha\|\bg\Zsub \|^2 \,+\, \delta\|\bg\Zsub \|^2  \,+\, 2 \|F\bg\Zsub -\bPhi_L^\infty\|\exs \|\bg\Zsub \|\big\} \!\!~\leqslant\, \alpha.} 
\]
With reference to~\eqref{JadJa}, one finds
\beq\lb{JJJa}
\mathfrak{J}_\alpha^\delta(\bPhi_L^\infty;\,\bg_{\alpha,\delta}^L) \,\,\,\leqslant\,\,\, \mathfrak{J}_\alpha^\delta(\bPhi_L^\infty;\,\bg\Zsub) \,\,\,\leqslant\,\,\, \mathfrak{J}_\alpha(\bPhi_L^\infty;\,\bg\Zsub) \,+\, \alpha.
\eeq
On recalling the bound in~\eqref{inq1} on~$ \mathfrak{J}_\alpha$, this yields 
\[
\alpha \exs \big(\exs|(\bg_{\alpha,\delta}^L, B^\delta \bg_{\alpha,\delta}^L)| \,+\, \delta \! \norms{\nxs \bg_{\alpha,\delta}^L \nxs}^2 \! \big) \,\,\,\leqslant\,\,\, \mathfrak{J}_\alpha^\delta(\bPhi_L^\infty;\bg_{\alpha,\delta}^L) \,\,\,\leqslant\,\,\, \textcolor{black}{(\mu\!+\!1)\alpha}  \,+\, \alpha \! \norms{\nxs\mathcal{G}\nxs}^2\!\! \,\,+\,\, 2 \alpha \! \norms{\nxs \mathfrak{T} \nxs} \! (\alpha \,\,+ \norms{\!T^{-1} \bfpsi\!}^2),
\]
which guarantees that~$\limsup\limits_{\alpha \rightarrow 0}\limsup\limits_{\delta \rightarrow 0}\big(\exs |( \bg_{\alpha,\delta}^L, B^\delta \bg_{\alpha,\delta}^L)| \exs+\exs \delta \! \norms{\! \bg_{\alpha,\delta}^L \!}^2 \nxs \big) < \infty$.

\item~Let $\bPhi_L^\infty\!\not\in Range(\mathcal{H}^*)$, and assume to the contrary that $\liminf\limits_{\alpha \rightarrow 0}\liminf\limits_{\delta \rightarrow 0}\big(|(\bg_{\alpha,\delta}^L, B^\delta \bg_{\alpha,\delta}^L)|+\delta\|\bg_{\alpha,\delta}^L\|^2\nxs \big)<\infty$. Using the coercivity of $\mathfrak{T}$ and triangle inequality, one finds     
\[
c \norms{\mathcal{H}\bg_{\alpha,\delta}^L}^2 \,\,\,\leqslant\,\, | (  \bg_{\alpha,\delta}^L, B \bg_{\alpha,\delta}^L ) | \,\,\,\leqslant\,\, | (  \bg_{\alpha,\delta}^L, B^\delta \bg_{\alpha,\delta}^L ) | \,+\, \delta \! \norms{\nxs \bg_{\alpha,\delta}^L \nxs}^2,
\]
whereby $\limsup\limits_{\alpha \rightarrow 0}\limsup\limits_{\delta \rightarrow 0} \norms{ \mathcal{H}\bg_{\alpha,\delta}^L}^2  < \infty$. Then, there exists a subsequence $(\alpha',\delta(\alpha')\!<\!\alpha')$ such that $\alpha' \rightarrow 0$ and $\norms{\!\mathcal{H}\bg_{{\alpha'\nxs},{\delta(\alpha')}}^L\!\!}^2$ is bounded independently from $\alpha'$. In light of Lemma~\ref{min-Jad}, one may design this subsequence such that $\mathfrak{J}_{\alpha'}^{\delta(\alpha')}(\bPhi_L^\infty;\bg_{{\alpha'\nxs},{\delta(\alpha')}}^L)\rightarrow 0$ as $\alpha' \rightarrow 0$, and thus~$\norms{F^\delta \bg_{{\alpha'\nxs},{\delta(\alpha')}}^L - \, \bPhi_L^\infty} \rightarrow 0$ as $\alpha' \rightarrow 0$. The compactness of $\mathcal{H}^*$ \textcolor{black}{and boundedness of~$\mathfrak{T}$} imply that a subsequence of $\mathcal{H}^* \mathfrak{T}  \mathcal{H}  \exs \bg_{{\alpha'\nxs},{\delta(\alpha')}}^L\!$ converges to some $\mathcal{H}^* \bfpsi$ in $ L^2(\OOd)^3$. The uniqueness of this limit implies that $\mathcal{H}^* \bfpsi = \bPhi_L^\infty$, which is a contradiction. 
\end{itemize} 
\end{proof59} 

\bibliography{Paper_GLSM}

\begin{thebibliography}{10}

\bibitem{Ach2003}
J.D. Achenbach.
\newblock {\em Reciprocity in Elastodynamics}.
\newblock North-Holland, Amsterdam, 1984.

\bibitem{Audibert2015}
L.~Audibert.
\newblock {\em Qualitative methods for heterogeneous media}.
\newblock PhD thesis, Ecole Doctorale Polytechnique, 2015.

\bibitem{Audibert2014}
L.~Audibert and H.~Haddar.
\newblock A generalized formulation of the linear sampling method with exact
  characterization of targets in terms of farfield measurements.
\newblock {\em Inverse Problems}, 30:035011, 2014.

\bibitem{Scaling2003}
G.~I. Barenblatt.
\newblock {\em Scaling (Cambridge texts in applied mathematics)}.
\newblock Cambridge University Press, Cambridge, UK, 2003.

\bibitem{Bellis2013}
C.~Bellis and M.~Bonnet.
\newblock Qualitative identification of cracks using 3d transient elastodynamic
  topological derivative: formulation and fe implementation.
\newblock {\em Comput. methods Appl. Mech. Engrg}, 253:89--105, 2013.

\bibitem{Bon1999}
M~Bonnet.
\newblock {\em Boundary integral equations methods for solids and fluids}.
\newblock Wiley, 1999.

\bibitem{Bonnet2011}
M.~Bonnet.
\newblock Fast identification of cracks using higher-order topological
  sensitivity for 2-d potential problems.
\newblock {\em Eng. Anal. Bound. Elem.}, 35:223--235, 2011.

\bibitem{Bouk2013}
Y.~Boukari and H.~Haddar.
\newblock The factorization method applied to cracks with impedance boundary
  conditions.
\newblock {\em Inverse Probl Imag}, 7:1123--1138, 2013.

\bibitem{Bour2013}
L~Bourgeois and E~LunÃville.
\newblock On the use of the linear sampling method to identify cracks in
  elastic waveguides.
\newblock {\em Inverse Problems}, 29(2):025017, 2013.

\bibitem{BramblePasciak}
James~H. Bramble and Joseph~E. Pasciak.
\newblock A note on the existence and uniqueness of solutions of frequency
  domain elastic wave problems: a priori estimates in {$\boldsymbol{H}^1$}.
\newblock {\em J. Math. Anal. Appl.}, 345(1):396--404, 2008.

\bibitem{Fiora2003}
F.~Cakoni and D.~Colton.
\newblock The linear sampling method for cracks.
\newblock {\em Inverse Problems.}, 19:279--295, 2003.

\bibitem{Fiora2008}
F.~Cakoni and D.~Colton.
\newblock {\em A qualitative approach to inverse scattering theory}.
\newblock Springer, Berlin, 2008.

\bibitem{Cham2014}
M.~Chamaillard, N.~Chaulet, and H.~Haddar.
\newblock Analysis of the factorization method for a general class of boundary
  conditions.
\newblock {\em J. Inverse Ill-posed Probl.}, 22:643 -- 670, 2014.

\bibitem{Col1992}
D.~Colton and R.~Kress.
\newblock {\em Inverse acoustic and electromagnetic scattering teory}.
\newblock Springer, Berlin, 1992.

\bibitem{Dassios1987}
G.~Dassios, K.~Kiriaki, and D.~Polyzos.
\newblock On the scattering amplitudes for elastic waves.
\newblock {\em Z. Angew. Math. Phys.}, 38:856--873, 1987.

\bibitem{Dassios1995}
G.~Dassios and Z.~Rigou.
\newblock Elastic herglotz functions.
\newblock {\em SIAM J Appl Math}, 55:1345--1361, 1995.

\bibitem{Fang2014}
X.~Fang, M.~C. Fehler, Z.~Zhu, Y.~Zheng, and D.~R. Burns.
\newblock Reservoir fracture characterization from seismic scattered waves.
\newblock {\em Geophys. J. Int.}, 196:481 -- 492, 2014.

\bibitem{Guz2004}
B.~B. Guzina and M.~Bonnet.
\newblock Topological derivative for the inverse scattering of elastic waves.
\newblock {\em Quart. J. Mech. Appl. Math.}, 57:161--179, 2004.

\bibitem{Has2013}
F.~B. Hassen, Y.~Boukari, and H.~Haddar.
\newblock Application of the linear sampling method to identify cracks with
  impedance boundary conditions.
\newblock {\em Inverse Probl. Sci. Eng.}, 21:210 -- 234, 2013.

\bibitem{Park2015(2)}
Park~W. K.
\newblock Music-type imaging of small perfectly conducting cracks with an
  unknown frequency.
\newblock {\em J Phys Conf Ser}, 633(1):012005, 2015.

\bibitem{Kirsch2008}
A.~Kirsch and N.~Grinberg.
\newblock {\em The factorization methods for inverse problems}.
\newblock Oxford University Press, Oxford, 2008.

\bibitem{Kress1995}
R.~Kress.
\newblock Inverse scattering from an open arc.
\newblock {\em Math. Methods Appl. Sci.}, 18:267--293, 1995.

\bibitem{Kress1999}
R.~Kress.
\newblock {\em Linear integral equation}.
\newblock Springer, Berlin, 1999.

\bibitem{Kuprad1979}
V.~D. Kupradze, T.~G. Gegelia, M.~O. Basheleishvili, and T.~V. Burchuladze.
\newblock {\em Three-dimensional problems of the mathematical theory of
  elasticity and thermoelasticity}.
\newblock North-Holland Publishing, Netherlands, 1979.

\bibitem{Martin1993}
P.~A. Martin and G.~Dassios.
\newblock KarpÕs theorem in elastodynamic inverse scattering.
\newblock {\em Inverse Problems}, 9:97--111, 1993.

\bibitem{McLean2000}
W.~McLean.
\newblock {\em Strongly Elliptic Systems and Boundary Integral Equations}.
\newblock Cambridge University Press, Cambridge, 2000.

\bibitem{Minato2013}
S.~Minato and R.~Ghose.
\newblock Inverse scattering solution for the spatially heterogeneous
  compliance of a single fracture.
\newblock {\em Geophys. J. Int.}, 195:1878--1891, 2013.

\bibitem{Minato2014}
S.~Minato and R.~Ghose.
\newblock Imaging and characterization of a subhorizontal non-welded interface
  from point source elastic scattering response.
\newblock {\em Geophys. J. Int.}, 197:1090--1095, 2014.

\bibitem{Monk2003}
P.~Monk.
\newblock {\em Finite Element Methods for Maxwell's Equations}.
\newblock Clarendon Press, Oxford, 2003.

\bibitem{Park2009}
W-K. Park.
\newblock {\em Inverse scattering from two dimensional thin inclusions and
  cracks}.
\newblock PhD thesis, ƒEcole Polytechnique, 2009.

\bibitem{Park2013}
W-K. Park.
\newblock Multi-frequency topological derivative for approximate shape
  acquisition of curve-like thin electromagnetic inhomogeneities.
\newblock {\em J. of Math. Analy. and Appl.}, 404:501--518, 2013.

\bibitem{Park2015}
W.~K. Park.
\newblock Multi-frequency subspace migration for imaging of perfectly
  conducting, arc-like cracks in full- and limited-view inverse scattering
  problems.
\newblock {\em J comp phys}, 283:52 -- 80, 2015.

\bibitem{Fatemeh2015}
F.~Pourahmadian and B.~B. Guzina.
\newblock On the elastic-wave imaging and characterization of fractures with
  specific stiffness.
\newblock {\em int. J Solids Struct.}, 71:126--140, 2015.

\bibitem{Twersky1962}
V.~Twersky.
\newblock Multiple scattering by arbitrary configuration in three dimensions,.
\newblock {\em J. Math. Phys.}, 3:83, 1962.

\bibitem{Ueda2006}
S.~Ueda, S.~Biwa, K.~Watanabe, R.~Heuer, and C.~Pecorari.
\newblock On the stiffness of spring model for closed crack.
\newblock {\em Int. J. Eng. Sci.}, 44:874--888, 2006.

\bibitem{Willis2006}
M.~Willis, D.~Burns, R.~Rao, B.~Minsley, M.~Toksoz, and L.~Vetri.
\newblock Spatial orientation and distribution of reservoir fractures from
  scattered seismic energy.
\newblock {\em Geophysics}, 71:O43 -- O51, 2006.

\bibitem{Zheng2013}
Y.~Zheng, X.~Fang, M.~C. Fehler, and D.~R. Burns.
\newblock Seismic characterization of fractured reservoirs by focusing gaussian
  beams.
\newblock {\em Geophysics}, 78:A23 -- A28, 2013.

\end{thebibliography}

\end{document}